\theoremstyle{definition}
\newtheorem{definition}{Definition}[section]
\theoremstyle{plain}
\newtheorem{Theorem}[definition]{Theorem}
\newtheorem{Proposition}[definition]{Proposition}
\newtheorem{Lemma}[definition]{Lemma}
\newtheorem{Corollary}[definition]{Corollary}
\theoremstyle{remark}
\newtheorem{remark}[definition]{Remark}
\newcommand{\R}{\mathbb R}  
\newcommand{\N}{\mathbb N}
\newcommand{\grad}{\mathrm{grad}}
\newcommand{\eps}{\varepsilon}
\newcommand{\Ric}{\mathrm{Ric}}
\newcommand{\gec}{{\check g_\eps}}
\newcommand{\geck}{{\check g_{\eps_j}}}
\newcommand{\gameck}{{\gamma_{\eps_j}}}
\newcommand{\comp}{\Subset}
\newcommand{\X}{\mathfrak{X}}
\newcommand{\sse}{\subseteq}
\newcommand{\conv}{\mathbf{k}}
\newcommand{\edge}{\mathrm{edge}}
\title{Hawking's singularity theorem for $C^{1,1}$-metrics}
\author{Michael Kunzinger\footnote{University of Vienna, Faculty of Mathematics, 
michael.kunzinger@univie.ac.at, roland.steinbauer@univie.ac.at, milena.stojkovic@live.com}, 
Roland Steinbauer\footnotemark[\value{footnote}], Milena Stojkovi\'c\footnotemark[\value{footnote}],
James A.\ Vickers\footnote{University of Southampton, School of Mathematics, J.A.Vickers@maths.soton.ac.uk}}
\begin{document}

\date{Jannaury 12, 2015}

%\date{Received: date /Accepted: date}

\maketitle

\begin{abstract}
We provide a detailed proof of Hawking's singularity theorem in the regularity 
class $C^{1,1}$, i.e., for spacetime metrics possessing locally Lipschitz
continuous first derivatives. The proof uses recent results in
$C^{1,1}$-causality theory and is based on regularisation techniques adapted
to the causal structure.

\vskip 1em

\noindent
\emph{Keywords:} Singularity theorems, low regularity, regularisation,
causality theory
\medskip

\noindent 
\emph{MSC2010:} 83C75, %Space-time singularities, cosmic censorship,
                53B30  %Lorentz metrics, indefinite metrics

\end{abstract}

\section{Introduction}

In the early years of General Relativity it was known that there
existed solutions of the Einstein field equations which had singular
behaviour of various kinds. However, the prevailing view was that these
singularities were the result of the high degree of symmetry or were
unphysical in some way.  This position changed considerably with the
work of Penrose who showed in his 1965 paper \cite{Pen} that deviations from
spherical symmetry could not prevent gravitational collapse. This
paper not only introduced the concept of closed trapped surface, but
used the notion of geodesic incompleteness to characterise a singular
spacetime. 
%Further\-more as described in a recent review article
%\cite{SenGar} this paper inspired many of the ideas of mathematical
%relativity such as extensions of spacetime and causality theory.

Shortly afterwards Hawking realised
that by considering a closed trapped surface to the past one could
show that an approximately homogeneous and isotropic cosmological
solution must have an initial singularity. There quickly followed a
series of papers by Hawking, Penrose, Ellis, Geroch and others which
led to the development of modern singularity theorems, one of the greatest
achievements within general relativity. (See the recent review paper
\cite{SenGar} for details.) The resulting theorems all had the same general
framework described by Senovilla in \cite{Seno1} as a ``pattern singularity
theorem''. 
\medskip

\vbox{
\noindent
{\bf Pattern Singularity Theorem.} 
\emph{If a spacetime with a $C^2$-metric satisfies
\begin{enumerate}%[(i),topsep=1pt,itemsep=-1pt]
\item[(i)] a condition on the curvature
\item[(ii)] a causality condition
\item[(iii)] an appropriate initial and/or boundary condition
\end{enumerate}
then it contains endless but incomplete causal geodesics.}
}
\medskip

%Note that despite the power of these theorems they simply show that
%the spacetime is causally geodesically incomplete and say little about
%the nature of the singularity. 
Despite their power and glory the singularity theorems have a weak
point, which is their conclusion. In fact, they simply show causal
geodesic incompleteness of the spacetime but say little about
the nature of the singularity. In particular, they 
do not say that the curvature blows up (see, however \cite{Cl82, Clarke} as well as \cite[Sec.\ 5.1.5]{SenGar}
and the references therein)
and it could be that the singularity is simply a result of the
differentiability dropping below $C^2$. 
In the case that the regularity of the metric simply dropped to $C^{1,1}$
(also denoted by $C^{2-}$, the first derivatives of the metric being locally
Lipschitz continuous) the theorems would predict the curvature to become
discontinuous rather than unbounded. Recall that indeed the connection of a
$C^{1,1}$-metric is locally Lipschitz and hence by Rademacher's
theorem differentiable almost everywhere with locally bounded curvature.
From the viewpoint of physics such a
situation would hardly be regarded as `singular' as it corresponds, via the
field equations, to a finite jump in the matter variables. There are many
physically realistic systems of that type, such as the Oppenheimer-Snyder model of
a collapsing star \cite{OppSny}, to give a classical example, and
general matched spacetimes, see e.g.\ \cite{L,MaSeno}.

Also from the point of view of the singularity theorems themselves  
the natural differentiability class is $C^{1,1}$. Indeed this is the
minimal condition which ensures existence and uniqueness of solutions of the geodesic
equation, which is essential to the statement of the theorems.
Moreover, as already pointed out in \cite[Sec.\ 8.4]{HE}, in the context of a
$C^{1,1}$-singularity theorem a further dropping of the regularity would 
result in spacetimes where generically the curvature diverges and in addition
there are problems with the uniqueness of causal geodesics and hence the worldlines of
physical observers. Such a situation could be interpreted as physically 
`singular' with much better reason than the corresponding $C^2$-situation
discussed above.

All this provides a strong motivation for trying to prove the
singularity theorems in the regularity class $C^{1,1}$. In \cite[Sec.\
8.4]{HE} Hawking and Ellis discuss the nature of the singularities
predicted by the singularity theorems and go on to outline a proof of
Hawking's singularity theorem based on an approximation of the
$C^{1,1}$-metric by a 1-parameter family of smooth metrics.  However
the $C^2$-differentiability assumption plays a key role in many places
in the singularity theorems and it is not obvious that these can all
be dealt with without having further information about the nature of
the approximation. Indeed much of standard causality theory assumes
that the metric is smooth or at least $C^2$, see e.g.\
\cite{HE,MS,Clarke,Seno1,Seno2,Chrusciel_causality,SW} for a review of
various approaches to causal structures and discussions of the
regularity assumptions. Senovilla in \cite[Sec.\ 6.1]{Seno1} lists
those places where the $C^2$-assumption explicitly enters the proofs
of the singularity theorems, indicating the number of technical
difficulties a proof in the $C^{1,1}$-case would have to overcome.
Indeed, to our knowledge, the only results that are available in
$C^{1,1}$-singularity theory are very limited \cite{Ca70,Cl82,Clarke}
or restricted to special situations \cite{MMS} and we think it is
fair to say (cf.\ \cite{Seno1}) that the issue of regularity in the
singularity theorems is often ignored despite its mathematical and
physical relevance.
\medskip

Motivated by the phyiscal arguments given above and recent advances in
the regularity required for the initial value problem (see e.g.\ \cite{KR05})
there has been an increased interest in causality theory of spacetimes
of low regularity. Chrusciel and Grant in \cite{CG} adopted a
regularisation approach which is adapted to the causal structure: a
given metric of low regularity is approximated by two nets of smooth
metrics ${\check g}_\epsilon$ and ${\hat g}_\epsilon$ whose light
cones sandwich those of $g$. They established some fundamental
elements of causality theory in low regularity such as the existence
of smooth time functions on domains of dependence even for continuous
metrics, see also \cite{FS11}. However, they also revealed a dramatic
failure of fundamental results of smooth causality if the regularity
was below $C^{1,1}$.  In particular, they demonstrated the existence
of `bubbling metrics' (of regularity $C^{0,\alpha}$, for any $\alpha\in (0,1)$), 
whose light-cones have nonempty interior,
thereby nicely complementing classical examples by Hartman and Wintner
\cite{Hartman,HW} which demonstrate the failure of convexity
properties in the Riemannian case.

One of the key technical tools
employed in causality theory is the exponential map and the existence
of totally normal neighbourhoods which allow one to relate the causal
structure of Minkowski space to that of the manifold in any given point
(see Theorem \ref{lcb}).
%ensure that locally the causal
%structure is that of Minkowski space. 
Classical results for $C^{1,1}$-metrics 
only show that the exponential map is a local homeomorphism \cite{Whitehead},
which
is insufficient to establish the required results. Recently, however, it
has been shown that $\exp$ is a bi-Lipschitz homeomorphism. Using a
careful analysis of the corresponding ODE problem based on
Picard-Lindel\"of approximations, as well as an inverse function
theorem for Lipschitz maps, Minguzzi \cite{M} established the fact that
$\exp$ is a bi-Lipschitz homeomorphism \cite[Th.\ 1.11]{M} and
used this to derive many standard results in causality theory. 
Also the present authors in \cite[Th.\ 2.1]{KSS} and \cite{KSSV}
established similar results by extending the refined regularisation
methods of \cite{CG} and combining them with methods from comparison
geometry \cite{CleF}.  \medskip

Given that finally the key elements of causality theory are
in place, now is the time to approach the singularity theorems
for $C^{1,1}$-metrics. Indeed, in this work we will show that the tools now
available allow one to prove singularity theorems with $C^{1,1}$-regularity and we
illustrate this by providing a rigorous proof of Hawking's theorem in the
$C^{1,1}$-regularity class. To be precise we establish the following result:

%%%%%%%%%%%%%%%%%%%%%%%%%%%%%%%%%%%%%%%%%%%%%%%%%%%%%%%%%%%%%%%%%%%%%%%%%%%%%%%%%%%%%%%%%%%%%%%%%%%%
\begin{Theorem}\label{hawking} Let $(M,g)$ be a $C^{1,1}$-spacetime. Assume
\begin{itemize}
\item[(i)] For any smooth timelike local vector field $X$,
$\Ric(X,X)\ge 0$.
\item[(ii)] There exists a compact spacelike hypersurface $S$ in $M$.
\item[(iii)] The future convergence $\conv$ of $S$ is everywhere strictly positive. 
\end{itemize}
Then $M$ is future timelike geodesically incomplete.
\end{Theorem}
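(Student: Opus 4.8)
The plan is to run the classical focusing proof of Hawking's theorem, organised by contradiction, replacing each step that uses $C^2$-regularity by its $\coo$-counterpart, the decisive replacement being the treatment of the Raychaudhuri equation through the causally adapted regularisation $\gec$. Suppose, for contradiction, that $(M,g)$ is future timelike geodesically complete. Since $g$ is $\coo$, the second fundamental form $\sff$ of $S$ is continuous, hence so is the future convergence $\conv$, and compactness of $S$ together with (iii) yields a constant $c>0$ with $\conv\ge c$ on $S$. Write $n=\dim M$ and $\beta:=(n-1)/c$. The aim is to prove $d_g(S,\cdot)\le\beta$ on $\mathrm{int}\,D^+(S)$ and then to contradict completeness.

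For the distance bound, suppose to the contrary that $d_g(S,q)>\beta$ for some $q\in\mathrm{int}\,D^+(S)$. The $\coo$-causality theory recalled above --- the bi-Lipschitz character of $\exp$, the limit curve arguments and maximal-geodesic existence of \cite{KSS,M}, and global hyperbolicity of $\mathrm{int}\,D^+(S)$ --- produces a future-directed unit-speed $g$-timelike geodesic $\gamma\colon[0,\ell]\to M$ from $S$ to $q$, meeting $S$ orthogonally, with $\ell=d_g(S,q)>\beta$ and $\gamma$ realising the $g$-distance to $S$. Now introduce a net of smooth metrics $\gec$ with light cones strictly narrower than those of $g$, satisfying $\gec\to g$ in $C^1_{\mathrm{loc}}$ with second derivatives bounded in $L^\infty_{\mathrm{loc}}$ and converging a.e., as in \cite{CG,KSSV}. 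On a fixed compact neighbourhood of $\gamma$ and for $\eps$ small: $\gamma$ is $\gec$-timelike, $S$ is $\gec$-spacelike with $\gec$-convergence exceeding $c-o(1)$, and $\beta<L_{\gec}(\gamma)=\ell-o(1)$, so $d_{\gec}(S,q)>\beta$; moreover $q$ still lies in $\mathrm{int}\,D^+_{\gec}(S)$ (the cones of $\gec$ being narrower), which is globally hyperbolic, so there is a smooth $\gec$-maximising unit-speed $\gec$-geodesic $\sigma_\eps$ from $S$ to $q$, meeting $S$ $\gec$-orthogonally, of $\gec$-length greater than $\beta$.

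The heart of the matter, and the place where genuinely new $\coo$-input is required, is the transfer of the energy condition to the approximations: for every compact $K$ and every $\delta>0$ there should be $\eps_0>0$ such that $\Ric_{\gec}(X,X)\ge-\delta$ for all $\eps<\eps_0$ and all $\gec$-timelike unit vectors $X$ based in $K$. This is not automatic, since $\Ric_{\gec}$ is not the mollification of $\Ric_g$; the argument is to write, in local coordinates, $\Ric_{\gec}$ as $\Ric_g$ convolved with the mollifier plus correction terms quadratic in first derivatives of the metric, the latter being $o(1)$ by the $C^1$-convergence, while the convolved term is made nonnegative by assumption (i) once one checks --- using that the cones of $\gec$ lie inside those of $g$, and inside those of a suitable wider approximation --- that a $\gec$-timelike coordinate vector at a point remains $g$-timelike at all nearby points sampled by the convolution. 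Granting this, pick $\delta$ small enough that the focal-time bound of the smooth Riccati comparison, for initial expansion at most $-(n-1)(c-o(1))$ and Ricci curvature at least $-\delta$, is still below $\beta+o(1)$, hence below $L_{\gec}(\sigma_\eps)$ for $\eps$ small. The smooth focusing theorem for $\gec$ then forces an $S$-focal point (for $\gec$) along $\sigma_\eps$ strictly before its endpoint, contradicting the classical fact that a $\gec$-maximising geodesic from $S$ has no interior $S$-focal point. This establishes $d_g(S,\cdot)\le\beta$ on $\mathrm{int}\,D^+(S)$.

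Finally, the distance bound is converted into incompleteness by a standard causal argument, now legitimate in $\coo$: if $H^+(S)=\emptyset$, then any future-directed $g$-normal geodesic $\gamma_p$ remains in $D^+(S)$ and, being complete by assumption, satisfies $d_g(S,\gamma_p(\tau))\ge\tau$ for all $\tau\ge0$, contradicting $d_g(S,\cdot)\le\beta$; if $H^+(S)\ne\emptyset$, then the distance bound, compactness of $S$, and global hyperbolicity of $\mathrm{int}\,D^+(S)$ force $\overline{D^+(S)}$ to be compact, so $H^+(S)$ would be a compact Cauchy horizon, which is excluded --- using (i), which by continuity is also a null energy condition, to focus its null generators. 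Either way we obtain a contradiction, so $M$ is future timelike geodesically incomplete. The principal obstacle is the energy-condition transfer of the previous paragraph: the remaining ingredients are either classical smooth geometry applied to $\gec$ or citations to the $\coo$-causality toolbox, whereas that step must reconcile the pointwise sign of $\Ric_g$ on \emph{timelike} vectors with a mollification that unavoidably samples nearby, possibly non-timelike, directions.
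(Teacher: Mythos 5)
Your core strategy --- regularisation adapted to the causal structure, transfer of the energy condition to the approximating metrics up to an arbitrarily small defect $-\delta$, and the Riccati/Raychaudhuri comparison for the smooth metrics to force the bound $d_g(S,\cdot)\le(n-1)/c$ on the Cauchy development --- is exactly the paper's (Lemma \ref{(4)}, Proposition \ref{prop2.2}, Step 1 of Section \ref{mainproof}), and your identification of the energy-condition transfer as the genuinely new $C^{1,1}$ input is on target. The genuine gap is in your endgame for the case $H^+(S)\neq\emptyset$. You assert that a compact Cauchy horizon ``is excluded --- using (i), which by continuity is also a null energy condition, to focus its null generators.'' This is not a valid step: the null energy condition alone does not exclude compact Cauchy horizons (flat examples such as Misner space have them), and no focusing contradiction is available along the past-inextendible null generators without null completeness and genericity hypotheses you do not have. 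The contradiction must instead come from the \emph{timelike} distance bound itself: every $q\in H^+(S)$ is joined to $S$ by a maximal normal timelike geodesic of length $d(S,q)\le b$ (this uses the completeness assumption once more, to extend the limiting normal geodesic to parameter $1$); $d(S,\cdot)$ is \emph{strictly decreasing} along the past-directed null generators of $H^+(S)$ (by the push-up/corner-rounding argument); and since $H^+(S)$ is compact and $d(S,\cdot)$ is lower semicontinuous, it attains a minimum on $H^+(S)$, which is contradicted by following a past generator from the minimising point. These are Steps 3--5 of the paper's proof and cannot be replaced by a null-focusing argument.

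Two smaller points. First, a compact spacelike hypersurface need not be achronal, so $D^+(S)$ need not behave as you use it; the paper first passes to a Lorentzian covering in which $S$ lifts to an achronal (hence acausal) hypersurface (Theorem \ref{covering}, Lemma \ref{Sacausal}). Second, to apply the energy-condition transfer to the tangents of the $\gec$-maximisers you need those tangents to be \emph{uniformly} $g$-timelike ($g(X,X)\le\kappa<0$ for a fixed $\kappa$) and uniformly $h$-bounded --- the set of $\gec$-unit timelike vectors over a compact set is not compact, so ``$\gec$-timelike unit vectors based in $K$'' is not an admissible quantification --- and this uniformity is obtained in the paper by showing that the $\gec$-maximisers converge in $C^1$ to the $g$-maximiser $\gamma$ (item (C) in Section \ref{mainproof}); your proposal constructs $\gamma$ and the $\sigma_\eps$ independently and never links them.
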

\begin{remark}\ 
\begin{itemize}
\item[(i)]
For the definition of a $C^{1,1}$-spacetime, see Section \ref{prelim}.
Since $g$ is $C^{1,1}$, its Ricci-tensor is of regularity $L^\infty$. In particular,
it is in general only defined almost everywhere. For this reason, we have cast
the curvature condition (i) in the above form. For any smooth vector field $X$ defined
on an open set $U\sse M$, $\Ric(X,X)\in L^\infty(U)$, so $\Ric(X,X)\ge 0$ means
that $\Ric_p(X(p),X(p))\ge 0$ for almost all $p\in U$. Since any timelike $X\in T_pM$
can be extended to a smooth timelike vector field in a neighborhood of $p$, (i) is equivalent
to the usual pointwise condition ($\Ric(X,X)\ge 0$ for any timelike $X\in TM$) if
the metric is $C^2$.

\item[(ii)] Concerning (iii) in the Theorem, our conventions (in accordance with \cite{ON83}) 
are that $\conv=\mathop{tr}S_U/(n-1)$ and
$S_U(V)= -\nabla_VU$ is the shape operator of $S$, where $U$ is the
future pointing unit normal, $\nabla$ denotes the connection
on $M$ and $V$ is any vector field on the embedding $S\hookrightarrow
M$.  

\item[(iii)] In the physics literature, the negative of what we call the future convergence is often
denoted as the expansion of $S$. 

\item[(iv)] Finally, we note that an analogous result for past timelike incompleteness holds
if the convergence in (iii) of the Theorem is supposed to be everywhere strictly negative.
\end{itemize}
\end{remark}
\medskip

In proving this theorem we will follow the basic strategy outlined in
\cite[Sec.\ 8.4]{HE}. However, in our proof we will make extensive use of the
recent results of $C^{1,1}$-causality theory. An important feature of
this paper is that we carefully collect all the results from $C^{1,1}$-causality
theory that are required for the proof of the above theorem
and show how they can be obtained from \cite{CG,M,KSSV}. 
 In addition, in section 4 we make crucial use of causal
regularisation techniques to show the existence of maximising curves. We
therefore need to establish the existence of an approximating family
of smooth metrics which satisfy (a weakened form of) the Ricci
convergence condition while at the same time controlling the light cones. 

The plan of the paper is as follows. In section 2 we fix the
definitions and notation we will use in the rest of the
paper. In section 3 we introduce the causal regularisation
techniques and establish the required estimates for the curvature. In
section 4 we make use of the causal regularisation together with some
key results from $C^{1,1}$-causality theory to establish the existence
of maximal curves. Finally in section 5 we prove the main result
following the basic layout of the proof of \cite[Th.\ 14.55B]{ON83}. In
the Appendix we collect together all the results from causality theory
that are required and show how they are proved in the $C^{1,1}$-case.

\section{Preliminaries}\label{prelim}

In this section we fix key notions to be used throughout this paper.
We assume all manifolds to be of class $C^\infty$ (as well as second
countable), and only lower the regularity of the metric. This is no
loss of generality since any $C^k$-manifold $M$ with $k\ge 1$
possesses a unique $C^\infty$-structure that is $C^k$-compatible with
the given $C^k$-structure on $M$ (see \cite[Th.\ 2.9]{Hirsch}). Most
of the time (and unless explicitly stated otherwise) we will deal with
a $C^{1,1}$-spacetime $(M,g)$, by which we mean a smooth manifold $M$
of dimension $n$ endowed with a time-oriented Lorentzian metric $g$ of
signature $(-+\dots+)$ possessing locally Lipschitz continuous first
derivatives and with the time orientation given by a continuous timelike
vector field. If $K$ is a compact set in $M$ we write $K\comp M$.
Following \cite{ON83}, we define the curvature tensor to be given by
$R(X,Y)Z=\nabla_{[X,Y]}Z - [\nabla_X,\nabla_Y]Z$. This convention
differs by a sign from that of \cite{HE}. We then define the Ricci
tensor by $R_{ab}=R^c{}_{abc}$ (which again differs by a sign from that in \cite{HE} where $R_{ab}=R^c{}_{acb}$, so overall the two definitions of Ricci curvature agree).

There are minor variations in the basic definitions used
in causality theory by various authors and this section serves to
specify the ones we will be using and relate them to those used
elsewhere.  Our notation for causal structures will basically follow
\cite{ON83} although following \cite{Chrusciel_causality,KSSV} we will
base all causality notions on locally Lipschitz curves. We note that
in most of the standard literature on causality theory, in particular
in \cite{HE,ON83}, the
%this definition differs from that in \cite{M}, where the 
corresponding curves are required to be (piecewise) $C^1$. However, as
is shown in \cite[Th.\ 1.27]{M}, \cite[Cor.\ 3.1]{KSSV}, this does not
affect the definition of (causal or chronological) pasts and futures.
Any locally Lipschitz curve $c$ is differentiable almost everywhere
(by Rademacher's theorem) and its derivative is locally bounded.
We call $c$ timelike, causal, spacelike
or null, if $c'(t)$ has the corresponding property almost everywhere.
Based on these notions we define the relative chronological future
$I^+(p,A)$ and causal future $J^+(p,A)$ of $p$ in $A\subseteq M$
literally as in the smooth case (see \cite[Def.\ 3.1]{KSSV}
\cite[Sec.\ 2.4]{Chrusciel_causality}). For $B\subseteq A$ we set $I^+(B,A) :=
\bigcup_{p\in B} I^+(p,A)$ and
analogously for $J^+(B,A)$. Moreover, we set $I^+(p):=I^+(p,M)$. The
same conventions apply to the respective past sets where the $+$ is
replaced by a $-$. For $p$, $q\in M$ we write $p < q$, respectively $p\ll q$, 
if there is a future directed causal,
respectively timelike, curve from $p$ to $q$. By $p\le q$ 
we mean $p = q$ or $p < q$. We denote the time separation (distance)
between two points $p,q\in M$ and between $A,B\subseteq M$ with
respect to some Lorentzian metric $g'$ by $d_{g'}(p,q)$ and
$d_{g'}(A,B)$, respectively (cf.\ \cite[Def.\ 14.15]{ON83}). 
We call a $C^{1,1}$-spacetime $(M,g)$ globally hyperbolic if it is strongly causal
and $J(p,q):=J^+(p)\cap J^-(q)$ is compact for all $p,\,q\in M$. Finally,
for an achronal set $S$, the future Cauchy development of $S$ is the
set $D^+(S)$ of all points $p\in M$ with the property that every past
inextendible causal curve through $p$ meets $S$. Then
$H^+(S):=\overline{D^+(S)}\setminus I^-(D^+(S))$ is its future Cauchy
horizon. Note that both, Cauchy development and Cauchy horizon, are
defined with locally Lipschitz causal curves (contrary to
\cite{HE,ON83}).  That this does not affect our considerations is
shown in Lemma \ref{lipdevelopment}.  A Cauchy hypersurface is a
subset $S$ of $M$ which every inextendible timelike curve intersects
exactly once, see \cite[Def.\ 14.28]{ON83}.  In the smooth case,
for spacelike hypersurfaces 
this definition of a Cauchy hypersurface is equivalent to the one in \cite{HE},
and this remains true in the $C^{1,1}$-case, cf.\ Proposition
\ref{cauchysurface}.

Now let $S$ be a spacelike hypersurface in $M$ with a Lorentzian
metric $g$.  By $N(S)$ we denote the set of vectors perpendicular to
$S$ with respect to the metric $g$ and by $(N(S), \pi)$ the normal
bundle of $S$ in $M$, where $\pi: N(S)\rightarrow S$ is the map
carrying each vector $v\in T_p(S)^\perp$ to $p\in S$. We will
distinguish normal bundles stemming from metrics $g_{\eps}$ by writing
$(N_{g_{\eps}}(S), \pi_{g_{\eps}})$ and for brevity we will drop this
subscript for the $C^{1,1}$-metric $g$ itself.  The exponential map
with respect to the metric $g$ generalises in the following way: the
normal exponential map
\begin{equation*}
\exp^\perp: N(S)\rightarrow M 
\end{equation*}
assigns to a vector $v\in N(S)$ the point $c_v(1)$ in $M$, where $c_v$ is the
geodesic with initial data $v$. Thus 
$\exp^\perp$ carries radial lines in $T_pS$ to geodesics of $M$ that are normal to $S$ at $p$.
Again, in order to distinguish the normal exponential maps w.r.t.\ metrics
$g_{\eps}$, we write $\exp_{g_{\eps}}^\perp$. As was shown in \cite[Th.\
1.39]{M}, $N(S)$ is a Lipschitz bundle and $\exp^\perp$ is a bi-Lipschitz
homeomorphism from a neighbourhood of the zero section in $N(S)$ onto a
neighbourhood of $S$ (cf.\ Th.\ \ref{subnormalnbhds} below).

\section{Regularisation techniques}

While the relevance of regularisation techniques to the problem at hand was
already clearly pointed out in \cite[Sec.\ 8.4]{HE} we 
shall see at several places below that a straightforward regularisation via
convolution in charts (as in \cite[Sec.\ 8.4]{HE}) is insufficient to actually
reach the desired conclusions. Rather, techniques adapted to the causal 
structure as introduced in \cite{CG} will be needed.
This remark, in particular, applies to
the results on the existence of maximising curves (Lemma \ref{lengthcomp} and
Proposition \ref{prop2.2}) below as well as to the proof of the main result in
Section \ref{mainproof}.

Recall from \cite[Sec.\ 3.8.2]{MS}, \cite[Sec.\ 1.2]{CG} that for two Lorentzian metrics $g$,
$h$, we say that $h$ has \emph{strictly wider light cones} than $g$, denoted by 
\begin{equation}
 g\prec h, \text{ if for any tangent vector } X\not=0,\ g(X,X)\le 0 \text{ implies that } h(X,X)<0.
\end{equation}
The key result now is \cite[Prop.\ 1.2]{CG}, which we give here in the slightly refined
version of  \cite[Prop.\ 2.5]{KSSV}: 
\begin{Proposition}\label{CGapprox} Let $(M,g)$ be a spacetime with a
continuous Lorentzian metric, and $h$ some smooth
background Riemannian metric on $M$. Then for any $\eps>0$, there exist smooth
Lorentzian metrics $\check g_\eps$ and $\hat g_\eps$ on $M$ such that $\check g_\eps
\prec g \prec \hat g_\eps$ and $d_h(\check g_\eps,g) + d_h(\hat g_\eps,g)<\eps$,
where  
\begin{equation}
d_h(g_1,g_2) := \sup_{p\in M,0\not=X,Y\in T_pM} \frac{|g_1(X,Y)-g_2(X,Y)|}{\|X\|_h
\|Y\|_h}.
\end{equation}
Moreover, $\hat g_\eps$ and $\check g_\eps$ depend smoothly on $\eps$, and if
$g\in C^{1,1}$ then letting $g_\eps$ be either $\check g_\eps$ or $\hat g_\eps$,
we additionally have %satisfy 
\begin{itemize}
 \item[(i)] $g_\eps$ converges to $g$ in the $C^1$-topology as $\eps\to 0$, and
 \item[(ii)] the second derivatives of $g_\eps$ are bounded, uniformly in $\eps$, on compact sets.
 \end{itemize}
%(i) and (ii) from Construction \ref{approxrem}.
\end{Proposition}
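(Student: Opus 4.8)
The plan is to treat the three assertions separately. The basic sandwiching approximation for a merely continuous $g$ is \cite[Prop.\ 1.2]{CG} in the refined form \cite[Prop.\ 2.5]{KSSV}; I sketch the construction, whose slogan is \emph{mollify, then re-open the light cones}, and then add the $C^{1,1}$-refinements. It is convenient to use the pointwise version $d_h(g_1,g_2)(p):=\sup_{0\ne X,Y\in T_pM}|g_1(X,Y)-g_2(X,Y)|/(\|X\|_h\|Y\|_h)$, so that $d_h(g_1,g_2)=\sup_{p\in M}d_h(g_1,g_2)(p)$ and $|g_1(X,X)-g_2(X,X)|\le d_h(g_1,g_2)(p)\,\|X\|_h^2$. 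Two stability facts about $\prec$ are needed. Since Lorentzian signature is an open condition on sections of $S^2T^*M$, there is a continuous $\gamma>0$ such that every symmetric $(0,2)$-tensor within $d_h$-distance $\gamma(p)$ of $g_p$ is Lorentzian. And $\prec$ is a quantitative open cone condition: if at every $p$ one has $g''(X,X)\le g(X,X)-\kappa(p)h(X,X)$ for all $X$ with $g(X,X)\le 0$, with $\kappa$ continuous positive, then $g\prec g''$, and symmetrically $g''\prec g$ if $g''(X,X)\ge g(X,X)+\kappa(p)h(X,X)$ on the $g$-causal cone. (In the compact case both follow from compactness of $\{X:\|X\|_h=1,\ g(X,X)\le 0\}$ and continuity; in general one localises.)

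For the construction, fix a locally finite atlas $(V_i,\psi_i)$ with a subordinate partition of unity $(\chi_i)$, $\operatorname{supp}\chi_i\comp V_i$, and a fixed smooth mollifier. Put $g_\mu:=\sum_i\chi_i\,(g\ast_i\rho_{\mu_i})$, the chartwise convolution with widths $\mu_i>0$; at each point this is a convex combination of symmetric tensors, each $d_h$-close to $g_p$ when $\mu$ is small, hence itself $d_h$-close to $g_p$ (and Lorentzian). Choose a smooth $\theta>0$ with $\theta<\tfrac14$ and $2\theta\le\gamma$, and, for $0<\eps\le1$, choose $\mu_i=\mu_i(\eps)>0$ — shrinking chart by chart, using that $g\ast_i\rho_\mu\to g$ uniformly on the compact set $\operatorname{supp}\chi_i$ as $\mu\to0$ and that $\theta$ is bounded below there — so small that $d_h(g_{\mu(\eps)},g)(p)<\tfrac12\eps\,\theta(p)$ for every $p\in M$; this can be arranged with $\mu_i$ depending smoothly on $\eps$, and then $(x,\eps)\mapsto g_{\mu(\eps)}(x)$ is smooth because $(x,\mu)\mapsto(g\ast_i\rho_\mu)(x)$ is smooth on $V_i\times(0,\infty)$ (each $\mu$-derivative falls on the mollifier). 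Now set
\[
\hat g_\eps:=g_{\mu(\eps)}-\eps\,\theta\,h,\qquad \check g_\eps:=g_{\mu(\eps)}+\eps\,\theta\,h .
\]
For $X\ne0$ with $g(X,X)\le0$, normalised so $\|X\|_h=1$, one gets $\hat g_\eps(X,X)\le g(X,X)+d_h(g_{\mu(\eps)},g)(p)-\eps\theta(p)<-\tfrac12\eps\theta(p)<0$, hence $g\prec\hat g_\eps$; and if $\check g_\eps(X,X)\le0$ then $g(X,X)\le g_{\mu(\eps)}(X,X)+d_h(g_{\mu(\eps)},g)(p)\le -\eps\theta(p)+d_h(g_{\mu(\eps)},g)(p)<-\tfrac12\eps\theta(p)<0$, hence $\check g_\eps\prec g$. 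By the triangle inequality each of $\hat g_\eps,\check g_\eps$ is a $d_h$-perturbation of $g$ of size $<\tfrac32\eps\theta\le\tfrac32\theta<\gamma$, so both are Lorentzian, and $d_h(\check g_\eps,g)+d_h(\hat g_\eps,g)<3\eps\theta<\eps$. Smoothness in $\eps$ is inherited from that of $\mu(\eps)$ and of $\eps\mapsto\eps\theta$; the case $\eps>1$ is trivial (use the $\eps=1$ metrics), and a smooth extension over $(1,\infty)$ is arranged in the usual way.

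For the $C^{1,1}$-refinements one uses that in charts mollification commutes with $\partial$, that $\partial g$ is continuous and $\partial^2 g\in L^\infty_{\mathrm{loc}}$. Expanding $\partial g_\eps$ and $\partial^2 g_\eps$ by Leibniz and using $\sum_i\partial\chi_i=\sum_i\partial^2\chi_i=0$ to rewrite the terms carrying derivatives of $\chi_i$ as $\sum_i(\text{derivative of }\chi_i)\big((g\ast_i\rho_{\mu_i})-g\big)$ (resp.\ with $\partial g$ in place of $g$), one obtains: $g_\eps\to g$ in $C^1$ on compacta as $\eps\to0$, since all error terms converge uniformly on compacta because $g\ast_i\rho_{\mu_i(\eps)}\to g$ and $(\partial g)\ast_i\rho_{\mu_i(\eps)}\to\partial g$ uniformly there; and $\partial^2 g_\eps$ is bounded on every compact set uniformly in $\eps$, the only genuinely second-order contribution being $\sum_i\chi_i\,(\partial^2 g)\ast_i\rho_{\mu_i}$, bounded by $\|\partial^2 g\|_{L^\infty}$ on a slightly larger compact, while the remaining terms involve only $\partial\chi_i,\partial^2\chi_i$ and the locally bounded families $g\ast_i\rho_{\mu_i}$, $(\partial g)\ast_i\rho_{\mu_i}$. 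The correction $\eps\theta h$ is smooth and contributes $O(\eps)$ in every $C^k$-norm on compacta, so it affects neither conclusion.

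The main obstacle is the coupling in the first step on a noncompact $M$: one must choose a single, position-dependent and $\eps$-smooth regularisation scale $\mu_i(\eps)$ that makes the smoothing error \emph{strictly} dominated by the arbitrarily small cone-widening $\eps\theta$ uniformly over all of $M$, keeps the patched convex combination $g_\mu$ of Lorentzian pieces Lorentzian, and, in the $C^{1,1}$ case, does not spoil the uniform second-derivative bounds; verifying that the strict inclusions $\prec$ and the signature survive the partition-of-unity patching is where the bookkeeping becomes delicate, and this is precisely what is carried out in detail in \cite{CG,KSSV}.
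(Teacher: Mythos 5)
Your construction is correct and is essentially the one behind the cited result \cite[Prop.\ 1.2]{CG}, \cite[Prop.\ 2.5]{KSSV}, which the paper invokes without reproving: chartwise mollification glued by a partition of unity, a position-dependent error bound $d_h(g_{\mu(\eps)},g)<\tfrac12\eps\theta$, the shifts $g_{\mu(\eps)}\mp\eps\theta h$ to open and narrow the cones, and a Friedrichs-type Leibniz expansion for the $C^1$-convergence and the uniform local bounds on second derivatives. The strict cone inclusions and the $d_h$-estimate are verified correctly, and the closing caveats concern only routine bookkeeping (smooth choice of $\mu_i(\eps)$ and the locally finite patching) that is carried out in the references.
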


One essential assumption in the singularity Theorem~\ref{hawking} is the curvature
condition (i) for the $C^{1,1}$-metric $g$. We now derive from it a (weaker)
curvature condition for any approximating sequence $\gec$ as in Proposition
\ref{CGapprox}, 
which is vital in our proof of the main theorem. This should be compared to 
condition (4) on p.\ 285 of \cite{HE}.
%%%%%%%%%%%%%%%%%%%%%%%%%%%%%%%%%%%%%%%%%%%%%%%%%%%%%%%%%%%%%%%%%%%%%%%%%%%%%%%%%%%%%%%%%%%%%%%%%%%%
\begin{Lemma}\label{(4)} Let $M$ be a smooth manifold with a $C^{1,1}$-Lorentzian metric $g$
and smooth background Riemannian metric $h$.
Let $K$ be a compact subset of $M$  
and suppose that $\Ric(X,X)\ge 0$ for every $g$-timelike smooth local vector field $X$.
Then 
%given any $C>0$, any $\delta>0$ and any $\kappa<0$ there exists some $\eps_0>0$ such that
%for any $X\in TM|_K$
\begin{equation}\label{suffest}
\begin{split}
&\forall C>0\ \forall \delta>0\ \forall \kappa<0\ \exists \eps_0>0\ \forall \eps<\eps_0\ \forall X\in TM|_K \\ 
&\text{ with }\   g(X,X)\le \kappa \ \text{ and } \|X\|_h \leq C \text{ we have } \ \Ric_\eps(X,X) > -\delta.
\end{split}
\end{equation}
Here $\Ric_\eps$ is the Ricci-tensor corresponding to a metric $\gec$ as in Proposition \ref{CGapprox}.
\end{Lemma}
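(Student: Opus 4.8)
The plan is to argue by contradiction, combined with a compactness argument in $TM$. Suppose \eqref{suffest} fails and fix $C>0$, $\delta>0$, $\kappa<0$ witnessing the failure: then there are $\eps_j\to 0$ and $X_j\in TM|_K$ with $g(X_j,X_j)\le\kappa$, $\|X_j\|_h\le C$, yet $\Ric_{\eps_j}(X_j,X_j)\le-\delta$. The set of $X\in TM$ with base point in $K$ and $\|X\|_h\le C$ is compact, so after passing to a subsequence $X_j\to X_\infty$ in $TM$ and the base points $p_j$ of $X_j$ satisfy $p_j\to p_\infty\in K$; by continuity of $g$, $g(X_\infty,X_\infty)\le\kappa<0$, so $X_\infty$ is a nonzero $g$-timelike vector.

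Next I would replace the varying $X_j$ by a single smooth $g$-timelike comparison field. Pick a relatively compact coordinate neighbourhood $U$ of $p_\infty$ and let $Y$ be the vector field on $U$ with constant coordinate components equal to those of $X_\infty$; shrinking $U$, $Y$ is $g$-timelike throughout $U$ (timelikeness being open), so by hypothesis (in the a.e.\ sense explained in the remark after Theorem \ref{hawking}) $\Ric(Y,Y)\ge0$ a.e.\ on $U$. By Proposition \ref{CGapprox}(i)--(ii), $g_{\eps_j}$, $\partial g_{\eps_j}$, $\partial^2 g_{\eps_j}$ and $g_{\eps_j}^{-1}$ are bounded on $\overline U$ uniformly in $j$ (for the inverse, because $g$ is nondegenerate and $g_{\eps_j}\to g$ uniformly), hence so is $\Ric_{\eps_j}$, being a universal polynomial expression in them; call a common bound $L$. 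Since $X_j\to X_\infty=Y(p_\infty)$ and $Y(p_j)\to Y(p_\infty)$, one has $\|X_j-Y(p_j)\|_h\to0$, and therefore $|\Ric_{\eps_j}(X_j,X_j)-\Ric_{\eps_j}(Y(p_j),Y(p_j))|\le 2LC'\|X_j-Y(p_j)\|_h\to0$ for a uniform bound $C'$ on the relevant $h$-norms. Thus $\Ric_{\eps_j}(Y,Y)(p_j)\le-\delta/2$ for all large $j$.

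The crux is to contradict this by showing $\Ric_{\eps_j}(Y,Y)(p_j)\ge-o(1)$. It does not suffice that $\Ric_{\eps_j}\rightharpoonup\Ric$ weak-$*$ in $L^\infty_{\mathrm{loc}}$, which follows from Proposition \ref{CGapprox}, because \eqref{suffest} is a genuinely pointwise statement in the base point; one must instead exploit the \emph{averaging} nature of the regularisation of \cite{CG,KSSV}. The plan is to extract from that construction that, on every $V\comp U$,
\begin{equation*}
\Ric_{\eps_j}(Y,Y)=\rho_{\eps_j}*[\Ric(Y,Y)]+r_j,
\end{equation*}
where $\rho_{\eps_j}*(\cdot)$ denotes the (chart-dependent, partition-of-unity--weighted) mollification with \emph{non-negative} kernel that enters the construction, and $\|r_j\|_{C^0(\overline V)}\to0$. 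Granting this, $\rho_{\eps_j}*[\Ric(Y,Y)]\ge0$ on $V$ for large $j$, since $\Ric(Y,Y)\ge0$ a.e.\ on $U$ and the kernel is non-negative; evaluating at $p_j\in V$ gives $\Ric_{\eps_j}(Y,Y)(p_j)\ge-\|r_j\|_{C^0(\overline V)}\to0$, the desired contradiction, which proves the Lemma.

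The estimate for $r_j$ is where the real work lies and is the main obstacle. Write the Ricci tensor of $g_{\eps_j}$ schematically as coefficients polynomial in $g_{\eps_j}^{-1}$ contracted with $\partial^2 g_{\eps_j}$, plus a term quadratic in $\partial g_{\eps_j}$ with coefficients polynomial in $g_{\eps_j}^{-1}$. By Proposition \ref{CGapprox}(i) the quadratic term converges uniformly on $\overline V$ to the corresponding term of $\Ric(Y,Y)$, which is continuous (as $g\in C^1$ with Lipschitz derivative), and convolving a continuous function changes it by $o(1)$ in $C^0$. In the leading term, the construction gives $\partial^2 g_{\eps_j}=\rho_{\eps_j}*\partial^2 g$ plus lower-order contributions — products of derivatives of the partition of unity with mollifications of $g$ and $\partial g$, together with the effect of the cone-narrowing/widening term — all of which converge uniformly on $\overline V$; and the coefficients converge uniformly to their $g$-counterparts, which are continuous because $g^{-1}$ is, so the coefficient may be pulled inside the convolution at the cost of a further $o(1)$ error, using uniform continuity of $g^{-1}$ and the uniform $L^\infty$-bound on $\partial^2 g$. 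Reassembling these estimates over the finitely many charts needed to cover $\overline V$, with the effective averaging kernel staying non-negative, yields the displayed decomposition. Beyond this bookkeeping — and verifying that the construction of \cite{CG,KSSV} does provide the stated uniform smallness of all lower-order contributions — there is no conceptual difficulty.
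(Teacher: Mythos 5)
Your proposal is correct and follows essentially the same route as the paper: reduction to the chart-wise convolution $g*\rho_\eps$, a Friedrichs-type commutator estimate showing $\Ric_{\eps}-\Ric*\rho_\eps\to 0$ locally uniformly (your ``pulling the coefficient $g^{-1}$ inside the convolution'' is exactly the paper's claim $(f\cdot a)*\rho_\eps-(f*\rho_\eps)(a*\rho_\eps)\to 0$ for $f$ continuous, $a\in L^\infty_{\mathrm{loc}}$), extension of the vector to a constant timelike field on a small ball, and positivity of the non-negative mollification of $\Ric(Y,Y)$. The only difference is cosmetic: you obtain uniformity over $K$ and over admissible $X$ by a contradiction/compactness argument, whereas the paper gets it directly from uniform continuity of $g$ on $K$ (yielding a radius $r$ independent of $p$) together with the bound $C^2\sup_{K}|R_{\eps jk}-R_{jk}*\rho_\eps|$.
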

%%%%%%%%%%%%%%%%%%%%%%%%%%%%%%%%%%%%%%%%%%%%%%%%%%%%%%%%%%%%%%%%%%%%%%%%%%%%%%%%%%%%%%%%%%%%%%%%%%%%
\begin{proof} Let us first briefly recall the notations from the proof of \cite[Prop.\ 2.5]{KSSV}:
Let $(U_i,\psi_i)$ ($i\in \N$) be a countable and locally finite collection of
relatively compact charts of $M$ and denote by $(\zeta_i)_i$ a
subordinate partition of unity with $\mathrm{supp}(\zeta_i)\Subset U_i$ (i.e., 
$\mathrm{supp}(\zeta_i)$ is a compact subset of $U_i$) for all $i$. Moreover, 
choose a family of cut-off functions $\chi_i\in\mathscr{D}(U_i)$ with
$\chi_i\equiv 1$ on a
neighbourhood of $\mathrm{supp}(\zeta_i)$. Finally, let $\rho\in
\mathscr{D}(\R^{n})$ be a non-negative test function with unit integral and define the
standard mollifier $\rho_{\eps}(x):=\eps^{-n}\rho\left (\frac{x}{\eps}\right)$
($\eps>0$). By $f_*$ (resp.\ $f^*$) we denote push-forward
(resp.\ pullback) under a smooth map $f$. 
%the following formula defines a family 
%$(g_\eps)_\eps$ of smooth sections of $T^0_2(M)$ 
It then follows from (2.2) in the proof of \cite[Prop.\ 2.5]{KSSV} that 
\begin{equation}
\gec - \sum\limits_i\chi_i\,\psi_i^*\Big(\big(\psi_{i\,*} (\zeta_i\,
g)\big)*\rho_{\eta(\lambda_i(\eps),i)}\Big) \to 0 \text{ in } C^2(M).
\end{equation}
Since $\eta(\lambda_i(\eps),i)\to 0$ as $\eps\to 0$ and %$\gec(X,X)\to g(X,X)$ uniformly on 
$\{X\in TM|_K\mid \|X\|_h \leq C\}$ is compact, %as $\eps\to 0$ 
we conclude that in order to establish the result it will suffice to assume that $M=\R^n$,
$\|\,.\,\|_h = \|\,.\,\|$ is the Euclidean norm, to replace $\gec$ by $g_\eps:=g*\rho_\eps$ 
(component-wise convolution), and prove \eqref{suffest} for $\Ric_\eps$ calculated from $g_\eps$.
%\begin{equation}\label{asdf}
%\begin{aligned}
%\forall C>0\ \forall \delta>0\ \exists \eps_0>0\ %\exists \delta_0>0\ 
%\forall \eps<\eps_0\ &\forall X\in TM|_K 
%\ \text{ with }\ g(X,X)\le \delta_0 \ \\ \text{ and } \|X\|_h \leq C:\ 
%& \Ric_\eps(X,X) > -\delta.
%\end{aligned}
%\end{equation}
%
%Next we observe that, by continuity, \eqref{asdf} follows once we are able to show that
%on any compact set $K$ and for any $C>0$ we have:
%\begin{equation}\label{lkjh}
%\begin{aligned}
%\forall \delta>0\ \exists \eps_0>0\  \forall \eps<\eps_0\  &\forall X\in TM|_K 
%\ \text{ with }\ g(X,X)\le 0 \ \text{ and } \|X\|_h \leq C:\\
% & \Ric_\eps(X,X) > -\delta.
%\end{aligned}
%\end{equation}

We first claim that 
\begin{equation}\label{www}
R_{\eps jk} - R_{jk}*\rho_\eps \to 0 \ \text{ uniformly on compact sets}.
\end{equation}

We have $R_{jk} = \partial_{x^i}\Gamma^i_{kj} - \partial_{x^k}\Gamma^i_{ij} + \Gamma^i_{ij}\Gamma^m_{kj} 
- \Gamma^i_{km}\Gamma^m_{ij}$. In this expression, all terms involving at most first derivatives of
$g$ are uniform limits of the corresponding terms in $R_{\eps jk}$, while the remaining terms are of the
form $g^{im}a_{ijkm}$, where $a_{ijkm}$ consists of second derivatives of $g$. These observations imply
that \eqref{www} will follow from the following mild variant of the Friedrichs lemma:

{\em Claim:} Let $f\in \mathcal{C}^0(\R^n)$, $a\in L^\infty_{\mathrm{loc}}(\R^n)$. Then $(f\cdot a)*\rho_\eps - (f*\rho_\eps)\cdot(a*\rho_\eps)
\to 0$ locally uniformly.

In fact,
\begin{equation}
\begin{aligned}
\Big((f\cdot a)*\rho_\eps - (f*\rho_\eps)\cdot(a*\rho_\eps)\Big)(x)
= \int \Big(f(y)-\big(f*\rho_\eps\big)(x)\Big)a(y)\rho_\eps(x-y)\,dy\\
= \int \Big(f(y)-f(x)\Big)a(y)\rho_\eps(x-y)\,dy + \int
\Big(f(x)-\big(f*\rho_\eps\big)(x)\Big)a(y)\rho_\eps(x-y)\,dy,
\end{aligned}
\end{equation}
so for any $L\comp \R^n$ we obtain
\begin{equation}
\begin{aligned}
\sup_{x\in L}|(f\cdot a)*\rho_\eps - (f*\rho_\eps)\cdot(a*\rho_\eps)|(x)
&\le (\max_{\substack{x\in L\\ |x-y|\le\eps}} |f(y)-f(x)|)
\cdot \sup_{d(y,L)\le \eps}|a(y)| \\
&+(\sup_{x\in L}|f(x)-f_\eps(x)|) \cdot \sup_{d(y,L)\le \eps}|a(y)| \to 0 
\end{aligned}
\end{equation}
as $\eps\to 0$, so \eqref{www} follows.

Since $g$ is uniformly continuous on $K$ there exists some $r>0$ such that for 
any $p,\, x\in K$ with $\|p-x\|<r$ and any $X\in \R^n$ with $\|X\|\le C$ we have $|g_p(X,X)-g_x(X,X)|< -\kappa$.
Now % fix some compact neighbourhood $L$ of $K$, 
let $p\in K$ and let $X\in \R^n$ be any vector such that 
$g_p(X,X)\le\kappa$ and $\|X\|\le C$. Then on the open ball $B_r(p)$ the constant vector
field $x\mapsto X$ (i.e., the map that assigns to each $x\in B_r(p)$ this same vector $X\in \R^n$), 
which we again denote by $X$, is $g$-timelike.

%Extend the function $x\mapsto R_{jk}(x)X^jX^k$ to all of $\R^n$ by setting it equal to $0$ outside of $U$.
Let 
\begin{equation}
\tilde R_{jk}(x) := \left\{
\begin{array}{rl}
	R_{jk}(x) & \text{ for } x\in B_r(p)\\
	0 & \text{ otherwise}
\end{array}\right.
\end{equation}

By our assumption and the fact that $\rho\ge 0$ we then have $(\tilde R_{jk}X^jX^k)*\rho_\eps\ge 0$ on $\R^n$.
Moreover, for $\eps<r$ it follows that $(R_{jk}*\rho_\eps)(p) = (\tilde R_{jk}*\rho_\eps)(p)$.
%Moreover, there exists some $\tilde \eps_0>0$ (depending only on $L$ \todo{check})
%such that for $\eps\le \tilde\eps_0$ and any $x\in U$ we have
Thus for such $\eps$ we have %for any $x\in U$ we obtain
\begin{equation}
\begin{aligned}
|R_{\eps jk}(p)X^jX^k - ((\tilde R_{jk}X^jX^k)*\rho_\eps)(p)| &= |(R_{\eps jk}(p) - (R_{jk}*\rho_\eps)(p))X^jX^k| \\
&\le C^2 \sup_{x\in K} |R_{\eps jk}(x) - R_{jk}*\rho_\eps(x)|.\end{aligned}
\end{equation}
Using \eqref{www} we conclude from this estimate that, given any $\delta>0$ we may choose $\eps_0$ 
such that for all $\eps<\eps_0$, all $p\in K$ and all vectors $X$ with $g_p(X,X)\le \kappa$ and $\|X\|\le C$ we have
$R_{\eps jk}(p)X^jX^k>-\delta$, which is \eqref{suffest}.
%It therefore suffices to show that $R_{\eps jk}(x)X^jX^k - (R_{jk}X^jX^k)*\rho_\eps(x)\to 0$
%uniformly in $x\in K$ and in $X$ as in \eqref{suffest}, which in turn is implied if we can show 
\end{proof}

\section{Existence of maximal curves}\label{sec:4}

The next key step in proving the main result is to secure the existence of geodesics maximising the distance
to a spacelike hypersurface. To prove this statement we will employ a net $\gec$ ($\eps>0$) of 
smooth Lorentzian metrics whose lightcones approximate those of $g$ from the inside as in Prop.\ \ref{CGapprox}.
We first need some auxiliary results.

\begin{Lemma}\label{lhbound} Let $(M,g)$ be a $C^{1,1}$-spacetime that is globally hyperbolic. Let $h$ be a 
Riemannian metric on $M$ and let $K\comp M$. Then there exists some $C>0$ such that the $h$-length
of any causal curve taking values in $K$ is bounded by $C$.
\end{Lemma}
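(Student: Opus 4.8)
The plan is to reduce the global assertion to finitely many local bounds, as in the classical argument (cf.\ \cite{ON83}), checking that every ingredient survives in the $C^{1,1}$, locally-Lipschitz-curve setting (a causal curve being, as usual, future- or past-directed). First I would cover the compact set $K$ by finitely many relatively compact open sets $V_1,\dots,V_N$, each a neighbourhood of a point $p_i\in K$ carried by a chart $U_i$ with coordinates $x^0,\dots,x^{n-1}$ such that: (a) writing $\eta:=-(dx^0)^2+\sum_{a\ge 1}(dx^a)^2$ for the Minkowski metric of these coordinates, one has $g\prec\eta$ on $V_i$; and (b) $V_i$ is causally convex, i.e.\ every locally Lipschitz causal curve meets $V_i$ in a connected set. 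Property (a) can be arranged because $g$ is continuous, $\prec$ is an open condition, and by a linear rescaling of the chart one may assume $g_{p_i}\prec\eta$; property (b) is exactly what strong causality -- part of the global-hyperbolicity hypothesis, and valid for $C^{1,1}$-metrics with locally Lipschitz causal curves (collected in the Appendix) -- provides, the causally convex neighbourhoods being available arbitrarily small, so that finitely many cover $K$.

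The core is the bound on a single $V_i$. Fix $i$ and let $\gamma$ be, say, a future-directed causal curve with $\gamma\subseteq V_i$. Then $\gamma$ is locally Lipschitz and $\gamma'$ is $g$-causal, future-directed and nonzero a.e., hence, by (a) and compatibility of time orientations, future-directed $\eta$-timelike a.e.; writing $\gamma^a:=x^a\circ\gamma$ this means $(\gamma^0)'>0$ and $\sum_{a=1}^{n-1}((\gamma^a)')^2<((\gamma^0)')^2$ a.e., so the Euclidean speed satisfies $|\gamma'|^2=\sum_{a=0}^{n-1}((\gamma^a)')^2<2\,((\gamma^0)')^2$ a.e. Since $\gamma^0$ is absolutely continuous, on $[a,b]$ we get
\begin{equation*}
\int|\gamma'|\,dt\;\le\;\sqrt2\int(\gamma^0)'\,dt\;=\;\sqrt2\,\bigl(\gamma^0(b)-\gamma^0(a)\bigr)\;\le\;\sqrt2\,\Bigl(\sup_{\overline{V_i}}x^0-\inf_{\overline{V_i}}x^0\Bigr)\;=:\;\ell_i\;<\;\infty .
\end{equation*}
On the compact set $\overline{V_i}$ the Riemannian metric $h$ is uniformly comparable with the Euclidean metric of the $x^a$, say $\|\,\cdot\,\|_h\le A_i\,|\,\cdot\,|$, so the $h$-length of $\gamma$ is at most $A_i\ell_i=:C_i$. (A past-directed $\gamma$ is handled by reversing the sign of $x^0$, with the same conclusion.)

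Finally I would patch the local bounds together using causal convexity. Given any causal curve $\gamma\colon[a,b]\to K$, each preimage $\gamma^{-1}(V_i)$ is open in $[a,b]$ and, by (b), connected, hence an interval $J_i$, and $\bigcup_{i=1}^{N}J_i=[a,b]$ since $K\subseteq\bigcup_i V_i$. Therefore
\begin{equation*}
L_h(\gamma)=\int_a^b\|\gamma'\|_h\,dt\;\le\;\sum_{i=1}^N\int_{J_i}\|\gamma'\|_h\,dt\;=\;\sum_{i=1}^N L_h(\gamma|_{J_i})\;\le\;\sum_{i=1}^N C_i ,
\end{equation*}
because each $\gamma|_{J_i}$ is a causal curve with values in $V_i$. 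Thus $C:=\sum_{i=1}^N C_i$, which depends only on $K$, $g$ and $h$, is the asserted bound.

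I do not expect any step to be genuinely hard; the one point requiring care is the interface with low-regularity causality theory, namely that strong causality in the form established here for $C^{1,1}$-metrics does supply causally convex neighbourhood bases for \emph{locally Lipschitz} causal curves, so that (b) and the connectedness of $\gamma^{-1}(V_i)$ apply to our curves, together with the elementary observation that monotonicity (and bounded variation) of the coordinate time along a merely Lipschitz causal curve follows from absolute continuity rather than from $C^1$-regularity. An alternative, marginally slicker route would be by contradiction: an unbounded sequence of causal curves in $K$, parametrised by $h$-arclength, would accumulate (by the limit curve theorem in low regularity) to a future-inextendible causal curve imprisoned in the compact set $K$, contradicting non-total imprisonment in a strongly causal spacetime; but this needs the $C^{1,1}$ limit curve theorem and the non-imprisonment lemma to be at hand, whereas the covering argument above uses only the hypotheses already stated.
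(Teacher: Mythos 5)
Your primary argument is correct, but it is not the route the paper takes; interestingly, the paper's proof is precisely your ``alternative, marginally slicker route'': it invokes non-total imprisonment (carried over from the proof of \cite[Lemma 14.13]{ON83}), assumes a sequence of causal curves in $K$ with $h$-lengths tending to infinity, parametrises by $h$-arclength, and applies the limit curve theorem (valid for continuous metrics as in \cite[Th.\ 1.6]{CG}) to produce an inextendible causal curve imprisoned in $K$, a contradiction. Your covering argument is more elementary and in a sense more robust: it uses only strong causality plus continuity of $g$ (so it would work verbatim for merely continuous strongly causal metrics) and gives an explicit constant $C=\sum C_i$, whereas the paper's argument leans on two further imported results (the low-regularity limit curve theorem and non-imprisonment) but is shorter once those are available. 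One small point of care in your version: strong causality in the standard formulation gives, for each $p$ and each neighbourhood $U_i$ of $p$, a smaller $V_i$ such that every causal segment with \emph{endpoints} in $V_i$ stays in $U_i$; this does not literally make $\gamma^{-1}(V_i)$ connected. The fix is cosmetic: take $J_i$ to be the closed interval spanned by $\gamma^{-1}(V_i)$, note $\gamma(J_i)\subseteq U_i$, and run your coordinate-time estimate on $\overline{U_i}$ (arranging $g\prec\eta$ there) rather than on $V_i$; the intervals $J_i$ still cover the domain and the summation goes through unchanged.
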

\begin{proof} 
It follows, e.g., from the proof of \cite[Lemma 14.13]{ON83} that $(M,g)$ is non-totally imprisoning,
i.e., there can be no inextendible causal curve that is entirely contained in $K$.
Now suppose that, contrary to the claim,  there 
exists a sequence $\sigma_k$ of causal curves valued in $K$ whose $h$-lengths tend 
to infinity. Parametrizing $\sigma_k$ by $h$-arclength we may assume that $\sigma_k: [0,a_k]\to K$,
where $a_k\to \infty$. Also, without loss of generality we may assume that $\sigma_k(0)$ converges
to some $q\in K$. 
Then by \cite[Th.\ 3.1(1)]{limit}\footnote{Note that the required result
remains
valid for $C^{1,1}$-metrics (in fact, even for continuous metrics): 
this follows exactly as in \cite[Th.\ 1.6]{CG}} 
one may extract a subsequence $\sigma_{k_j}$ that converges locally uniformly to an inextendible 
causal curve $\sigma$ in $K$,  
thereby obtaining a contradiction to non-total imprisonment.
\end{proof}

\begin{Lemma}\label{lengthcomp} 
Let $(M,g)$ be a globally hyperbolic $C^{1,1}$-spacetime and let
$g_\eps$ ($\eps>0$) be a net of smooth Lorentzian metrics %with $g_\eps\prec g$ for all $\eps$ 
such that $g_\eps$ converges locally uniformly to $g$ as $\eps\to 0$, and let $K\comp M$.
Then for each $\delta>0$ there exists some $\eps_0>0$
such that for each $\eps<\eps_0$ and each $g$-causal curve $\sigma$ taking values in $K$, the lengths
of $\sigma$ with respect to $g$ and $g_\eps$, respectively, satisfy:
\begin{equation}
L_{g}(\sigma) - \delta <  L_{g_\eps}(\sigma) < L_{g}(\sigma) + \delta.
\end{equation}
\end{Lemma}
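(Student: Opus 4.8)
The plan is to reduce the claim to a uniform comparison of the integrands $\sqrt{-g(\sigma',\sigma')}$ and $\sqrt{-g_\eps(\sigma',\sigma')}$ along causal curves, using Lemma \ref{lhbound} to control the $h$-length of such curves. First I would fix a Riemannian metric $h$ and, by Lemma \ref{lhbound}, obtain a constant $C>0$ bounding the $h$-length of every $g$-causal curve valued in $K$. Parametrising an arbitrary $g$-causal curve $\sigma:[0,b]\to K$ by $h$-arclength we may assume $b\le C$ and $\|\sigma'(t)\|_h=1$ for a.e.\ $t$. The lengths in question are then $L_g(\sigma)=\int_0^b\sqrt{-g(\sigma',\sigma')}\,dt$ and $L_{g_\eps}(\sigma)=\int_0^b\sqrt{-g_\eps(\sigma',\sigma')}\,dt$, where in the latter we should note that for $\eps$ small $\sigma$ need not be $g_\eps$-causal, so I would either work with the net $\gec$ (whose cones lie inside those of $g$, making $\sigma$ automatically $g_\eps$-spacelike or null in the wrong direction) — better: since the lemma only assumes local uniform convergence, I interpret $L_{g_\eps}(\sigma)$ via $\int\sqrt{\max(0,-g_\eps(\sigma',\sigma'))}\,dt$, or restrict attention to those $\eps$ for which the relevant quantities are defined; I would state this convention explicitly at the start.

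Next I would make the elementary pointwise estimate. On the unit $h$-sphere bundle over $K$, the function $X\mapsto g(X,X)$ is continuous on a compact set, hence bounded, say $|g(X,X)|\le A$ for $\|X\|_h=1$, $X$ based in $K$; by local uniform convergence the same bound (with $A+1$, say) holds for $g_\eps$ once $\eps<\eps_1$. For a $g$-causal $X$ we have $-g(X,X)\ge 0$, and
\begin{equation*}
\bigl|\sqrt{-g(X,X)}-\sqrt{-g_\eps(X,X)}\,\bigr|\le \sqrt{\bigl|g(X,X)-g_\eps(X,X)\bigr|}\le \sqrt{\sup_{K}d_h(g,g_\eps)}=:\omega(\eps),
\end{equation*}
using the inequality $|\sqrt{u}-\sqrt{v}|\le\sqrt{|u-v|}$ valid for $u,v\ge 0$, together with the definition of $d_h$ from Proposition \ref{CGapprox}. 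Local uniform convergence of $g_\eps$ to $g$ gives $\omega(\eps)\to 0$ as $\eps\to0$. (If one prefers not to invoke $d_h$, the same $\omega(\eps)$ arises directly from $\sup\{|g(X,X)-g_\eps(X,X)|:\|X\|_h=1,\ \pi(X)\in K\}$, which tends to $0$ by uniform convergence on the compact unit sphere bundle over $K$.)

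Then I would integrate: for any $g$-causal $\sigma$ valued in $K$, parametrised by $h$-arclength on $[0,b]$ with $b\le C$,
\begin{equation*}
\bigl|L_g(\sigma)-L_{g_\eps}(\sigma)\bigr|\le \int_0^b\bigl|\sqrt{-g(\sigma',\sigma')}-\sqrt{-g_\eps(\sigma',\sigma')}\,\bigr|\,dt\le b\,\omega(\eps)\le C\,\omega(\eps).
\end{equation*}
Given $\delta>0$, choosing $\eps_0\le\eps_1$ so small that $C\,\omega(\eps)<\delta$ for all $\eps<\eps_0$ yields $L_g(\sigma)-\delta<L_{g_\eps}(\sigma)<L_g(\sigma)+\delta$, uniformly over all such $\sigma$, which is the assertion. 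Since length is reparametrisation-invariant, the restriction to $h$-arclength parametrisation is harmless.

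The main obstacle is essentially bookkeeping rather than depth: one must make sure the constant $C$ from Lemma \ref{lhbound} is genuinely uniform over all $g$-causal curves in $K$ (which is exactly what that lemma gives), and one must handle the fact that a $g$-causal curve may fail to be $g_\eps$-causal, so that $-g_\eps(\sigma',\sigma')$ could be slightly negative — this is why the convention on $L_{g_\eps}$ for non-causal curves (or the restriction to the inside-approximating net $\gec$, or to the range of $\eps$ where the sign is correct) has to be fixed carefully at the outset. I expect no analytic difficulty beyond the elementary square-root inequality and the compactness of the unit $h$-sphere bundle over $K$.
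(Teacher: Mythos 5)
Your proposal is correct and follows essentially the same route as the paper: a uniform comparison of $\|X\|_g$ and $\|X\|_{g_\eps}$ on the compact unit $h$-sphere bundle over $K$, integrated along the curve and combined with the uniform $h$-length bound from Lemma \ref{lhbound}. The only cosmetic difference is that the paper phrases the pointwise estimate via homogeneity as $\|X\|_g-\eta\|X\|_h\le\|X\|_{g_\eps}\le\|X\|_g+\eta\|X\|_h$ and bounds $\eta L_h(\sigma)\le\eta C$, whereas you parametrise by $h$-arclength and use the square-root inequality; your explicit remark about $-g_\eps(\sigma',\sigma')$ possibly being negative is a detail the paper leaves implicit in its use of $\|\cdot\|_{g_\eps}$.
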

\begin{proof} %Let $h$ be some background Riemannian metric as in Proposition \ref{CGapprox}. 
Since $g_\eps \to g$ uniformly on $K$, given any $\eta>0$ there exists some $\eps_0>0$ such that
for all $\eps<\eps_0$ and all $X\in TM|_K$ with $\|X\|_h=1$ we have
\begin{equation}
\|X\|_g -\eta \le \|X\|_{g_\eps} \le \|X\|_g + \eta.
\end{equation}
Consequently, for any $X\in TM|_K$ we have
\begin{equation}
\|X\|_g -\eta\|X\|_h \le \|X\|_{g_\eps} \le \|X\|_g + \eta\|X\|_h.
\end{equation}
Now if $\sigma:[a,b]\to K$ is any $g$-causal curve it follows that, for $\eps<\eps_0$,
\begin{equation}
\begin{split}
L_{g}(\sigma) -\eta L_h(\sigma) &= \int_a^b \|\sigma'(t)\|_g\,dt -\eta \int_a^b \|\sigma'(t)\|_h\,dt \le  \int_a^b \|\sigma'(t)\|_{g_\eps}\,dt
= L_{g_\eps}(\sigma)\\ 
&\le L_{g}(\sigma) +\eta L_h(\sigma).
\end{split}
\end{equation}
Finally, by Lemma \ref{lhbound} there exists some $C>0$ such that $L_h(\sigma)\le C$ for any $\sigma$ as above.
Hence, picking $\eta<\delta/C$ establishes the claim.
\end{proof}

\begin{Proposition}\label{prop2.2} Let $(M,g)$ be a future timelike-geodesically
complete $C^{1,1}$-spacetime. Let $S$ be a compact spacelike acausal
hypersurface in $M$, and 
let $p\in D^+(S)\setminus S$. Then 
\begin{itemize}
\item[(i)] 
$d_{\gec}(S,p) \to d(S,p) \quad (\eps\to 0)$.
\item[(ii)] There exists
a timelike geodesic $\gamma$ perpendicular to $S$ from $S$ to $p$ with $L(\gamma) = d(S,p)$.
\end{itemize}
\end{Proposition}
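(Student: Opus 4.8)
The plan is to establish both statements together by exploiting the regularising metrics $\gec\prec g$, whose narrower cones force $\gec$ to inherit future timelike-geodesic completeness on the relevant compact region while their curvature satisfies the weakened Ricci condition of Lemma~\ref{(4)}. The key point is that for $\eps$ small, $p$ still lies in the $\gec$-domain of dependence $D^+_{\gec}(S)$ (since $\gec\prec g$ makes $\gec$-causal curves $g$-causal, so $\gec$-inextendible curves through $p$ meeting $S$ follow from the $g$-statement), and $S$ is still $\gec$-spacelike and $\gec$-acausal for $\eps$ small because acausality is an open condition under $C^0$-convergence of the metric together with compactness of $S$. For smooth globally hyperbolic-type data one has the classical existence of an $\gec$-maximising timelike geodesic $\gamma_\eps$ from $S$ to $p$, $\gec$-perpendicular to $S$, with $L_{\gec}(\gamma_\eps)=d_{\gec}(S,p)$; this is the smooth Hawking-type maximising-curve result (cf.\ \cite[Ch.\ 14]{ON83}).

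Next I would prove (i). The inequality $d_{\gec}(S,p)\le d(S,p)$ is immediate since every $\gec$-causal curve is $g$-causal (as $\gec\prec g$) and, by Lemma~\ref{lengthcomp}, its $\gec$-length is at most its $g$-length up to $\delta$; more carefully, $\gec\prec g$ already gives $\|X\|_{\gec}\le\|X\|_g$ for $g$-causal $X$ pointwise once $\gec\to g$ in $C^1$, yielding $L_{\gec}(\sigma)\le L_g(\sigma)$ directly, hence $\limsup_\eps d_{\gec}(S,p)\le d(S,p)$. For the reverse inequality, take a sequence of $g$-causal curves from $S$ to $p$ with $g$-lengths approaching $d(S,p)$; by global hyperbolicity (implied here — $p\in D^+(S)\setminus S$ with $S$ compact gives the relevant causal diamond compactness, cf.\ the appendix results) these lie in a fixed compact set $K$, and Lemma~\ref{lengthcomp} shows their $\gec$-lengths differ from their $g$-lengths by less than $\delta$ for $\eps$ small; but each such curve need not be $\gec$-causal, so one must instead run the argument on the maximisers: $L_{\gec}(\gamma_\eps)=d_{\gec}(S,p)$, the $\gamma_\eps$ are $g$-causal curves valued in a fixed compact set (by Lemma~\ref{lhbound} applied to the $\gec$, uniformly), hence $d(S,p)\ge L_g(\gamma_\eps)\ge L_{\gec}(\gamma_\eps)-\delta = d_{\gec}(S,p)-\delta$ gives the upper bound, while for the lower bound one compares $d_{\gec}(S,p)$ with the $\gec$-length of a fixed near-maximising $g$-curve after pushing it slightly inside the $\gec$-cones — which is possible precisely because $\gec\prec g$ is a strict inclusion, so a $g$-timelike curve is $\gec$-timelike for $\eps$ small. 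This yields $d_{\gec}(S,p)\to d(S,p)$.

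For (ii), I would extract a limit curve from the $\gamma_\eps$. The $\gamma_\eps$ are causal curves (for $g$, since $\gec\prec g$) valued in a fixed compact set $K\supseteq J(S,p)$-type region; parametrising by $h$-arclength and using the limit curve theorem (as invoked in Lemma~\ref{lhbound}, valid in $C^{1,1}$) one obtains a subsequence converging locally uniformly to a $g$-causal curve $\gamma$ from $S$ to $p$. By lower semicontinuity of $g$-length under such convergence together with $L_g(\gamma_\eps)\ge L_{\gec}(\gamma_\eps)=d_{\gec}(S,p)\to d(S,p)$, we get $L_g(\gamma)\ge d(S,p)$, hence $L_g(\gamma)=d(S,p)$ and $\gamma$ is a $g$-maximising curve from $S$ to $p$. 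A maximising causal curve in a $C^{1,1}$-spacetime is, by the results collected in the appendix (the $C^{1,1}$ analogues of the smooth statements, ultimately from \cite{M,KSSV}), a timelike geodesic meeting $S$ orthogonally — here one uses that $\exp^\perp$ is a bi-Lipschitz homeomorphism near the zero section of $N(S)$ (Theorem~\ref{subnormalnbhds}) to identify $\gamma$ with a radial normal geodesic and to exclude a null initial segment.

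The main obstacle I expect is the reverse inequality in (i) and, relatedly, the orthogonality/timelikeness of the limit curve $\gamma$: one must push a near-maximising $g$-causal curve strictly inside the $\gec$-cones without losing more than $\delta$ of its length (delicate near $S$ and near $p$ where the curve may be nearly null), and one must rule out that the limit $\gamma$ starts with a null segment along $S$ or fails to be perpendicular — this is where the full strength of $C^{1,1}$-causality theory (bi-Lipschitz $\exp^\perp$, the structure of maximisers, Gauss-lemma-type orthogonality) is essential and cannot be replaced by naive chart-wise convolution.
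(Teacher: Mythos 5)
Two points in your proposal need scrutiny. First, a side claim is wrong: $\gec\prec g$ (narrower cones) does \emph{not} give $\|X\|_{\gec}\le\|X\|_g$ pointwise for $g$-causal $X$ — on Minkowski space take $\gec=-(1+\eps)\,dt^2+(1+2\eps)\,dx^2$, which has strictly narrower cones yet $\|\partial_t\|_{\gec}>\|\partial_t\|_g$. Fortunately you don't need this: the two-sided $\pm\delta$ estimate of Lemma~\ref{lengthcomp}, which you mention first, is what the paper actually uses, and it suffices for $\limsup_\eps d_{\gec}(S,p)\le d(S,p)$.

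The genuine gap, which you correctly flag but do not close, is the lower bound $d_{\gec}(S,p)>d(S,p)-\delta$. A near-maximising $g$-causal curve $\alpha$ is merely Lipschitz and may be $g$-null (or arbitrarily close to null) along parts of its domain, so "pushing it slightly inside the $\gec$-cones" has no direct meaning and no length control. The paper resolves this by an explicit construction: since $d(S,p)>0$ forces $\alpha$ to be non-null somewhere, one covers $\alpha$ by totally normal neighbourhoods (Theorem~\ref{totally}) and, working backwards from $p$, replaces $\alpha$ segment by segment with radial timelike geodesics, using Proposition~\ref{push-up1} and Proposition~\ref{longest} (following \cite[Lemma~2.4.14]{Chrusciel_causality}). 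The output is a piecewise $C^2$, uniformly $g$-timelike broken geodesic $\sigma$ from $S$ to $p$ with $L_g(\sigma)\ge L_g(\alpha)$; such a $\sigma$ \emph{is} $\gec$-timelike for $\eps$ small, and $L_{\gec}(\sigma)\to L_g(\sigma)$ then gives the bound. This construction is the missing step in your argument.

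For (ii) you take a genuinely different route: limit-curve theorem for the $\gamma_\eps$, lower semicontinuity of length, then an appeal to "a maximiser is a timelike geodesic meeting $S$ orthogonally". That last statement is not among the appendix results and is not routine — it is essentially a $C^{1,1}$ Gauss lemma for $\exp^\perp$ together with a maximiser-regularity result. The paper deliberately circumvents it: it extracts a convergent subsequence of the \emph{initial data} $(q_{\eps_j},v_{\eps_j})\in N_{\geck}(S)$ of the smooth maximising geodesics, defines $\gamma$ as the $g$-geodesic with $\gamma(0)=q$, $\gamma'(0)=v=\lim v_{\eps_j}$, reads off $g$-orthogonality to $S$ directly from $g(v,w)=\lim\geck(v_{\eps_j},w_j)=0$, and uses continuous dependence of the geodesic ODE on the metric (plus future completeness, to ensure $\gamma$ is defined long enough) to get $\gamma_{\eps_j}\to\gamma$ in $C^1$. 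The length identity $L(\gamma)=d(S,p)$ then falls out of (i). Your route could in principle be made to work, but it shifts the burden onto an unproven structure theorem for maximisers in the $C^{1,1}$ class, whereas the paper's ODE argument avoids it entirely and is the cleaner choice.
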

%%%%%%%%%%%%%%%%%%%%%%%%%%%%%%%%%%%%%%%%%%%%%%%%%%%%%%%%%%%%%%%%%%%%%%%%%%%%%%%%%%%%%%%%%%%%%%%%%%%%

%
Here we have dropped the subscript from the time separation function $d_g(S,p)$ 
and the length $L_g(\gamma)$ of the $C^{1,1}$-metric $g$ to simplify notations.
Also we remark that the proof of (i) below neither uses geodesic completeness
of $M$ nor compactness of $S$ and hence the $\gec$-distance converges even on
general $M$ for any closed spacelike acausal hypersurface $S$.

\begin{proof}
(i) Since $p\not\in S$ we have $c:=d(S,p)>0$. Let $0<\delta<c$. Then there exists a
$g$-causal curve $\alpha:[0,b]\to M$ from $S$ to $p$ with
$L_g(\alpha)>d(S,p)-\delta$. In particular, $\alpha$ is not a null curve, hence
there exist $t_1<t_2$ such that $\alpha|_{[t_1,t_2]}$ is nowhere null. In what
follows we adapt the argument from \cite[Lemma 2.4.14]{Chrusciel_causality} to
the present situation. Without loss of generality we may assume that $t_2=b$.
By Theorem~\ref{totally} we may find $0=s_0<s_1<\dots < s_N=b$ and totally
normal neighbourhoods $U_i$ ($1\le i \le N$) such that
$\alpha([s_i,s_{i+1}])\subseteq U_i$ for $0\le i <N$. 
By Proposition \ref{push-up1} we obtain that $\alpha(s_{N-1})\ll \alpha(b)$,
hence by Proposition \ref{longest}, the radial geodesic $\sigma_N$ from
$\alpha(s_{N-1})$ to $p$ is longer than $\alpha|_{[s_{N-1},b]}$, and it is timelike.
Next, we connect $\alpha(s_{N-2})$ via a timelike radial geodesic
$\sigma_{N-1}$ to some point on $\sigma_N$ that lies in $U_{N-1}$. 
Concatenating $\sigma_{N-1}$ with $\sigma_N$ gives a timelike curve longer
than $\alpha|_{[s_{N-2},b]}$. Iterating this
procedure we finally arrive at a timelike piecewise geodesic $\sigma$ from
$\alpha(0)=\sigma(0)\in S$ to $p$ of 
length $L_g(\sigma) \ge L_g(\alpha) > d(S,p) -\delta$. 

Since $L_{\gec}(\sigma) \to L_g(\sigma)$, we conclude that $L_\gec(\sigma) > d(S,p) -\delta$ for $\eps$ sufficiently small.
Moreover, $\sigma$ is $g$-timelike and piecewise $C^2$, hence is $\gec$-timelike for small $\eps$. Therefore, 
$d_\gec(S,p)\ge L_\gec(\sigma) > d(S,p) -\delta$ for $\eps$ small.

Conversely, if $\sigma$ is any $\gec$-causal curve from $S$ to $p$ then $\sigma$ is also
$g$-causal, hence lies entirely in the set $K:=J^-(p)\cap J^+(S,D(S))$. Since $D(S)$
is globally hyperbolic by Theorem \ref{A19} and Proposition \ref{dopen}, $K$ is compact by
Corollary \ref{baer}. %Pick $\eta>0$ such that $\eta d(S,p)<\delta$. 
Then by Lemma \ref{lengthcomp} (applied to the globally hyperbolic spacetime $(D(S),g)$),
for $\eps$ sufficiently small we have
\begin{equation}
L_{\gec}(\sigma) < L_g(\sigma) + \delta \le  d(S,p)+\delta.
\end{equation}
Consequently, $d_{\gec}(S,p) \le d(S,p) + \delta$
for $\eps$ sufficiently small. Together with the above this shows (i).\medskip

%Hence by Lemma \ref{lengthcomp}, there exists a net $c_\eps$ with
%$c_\eps\to 0$ as $\eps \to 0$ such that for any such curve $\sigma$ we have
%$$
%L_{\gec}(\sigma) \le (1+c_\eps)L_g(c) \le (1+c_\eps) d(S,p).
%$$
%It follows that $d_{\gec}(S,p) \le (1+c_\eps) d(S,p)$, so $d_{\gec}(S,p) \le d(S,p) + \delta$
%for $\eps$ sufficiently small. Together with the above this shows (i).\medskip

(ii) Since $\gec$ has narrower lightcones
than $g$, for each $\eps$ the point $p$ lies in $D^+_{\gec}(S)\setminus S$. 
Also, we may assume $\eps$ to be so small that $S$ is $\gec$-spacelike as well as 
$\gec$-acausal.
Then by smooth causality theory (e.g., \cite[Th.\ 14.44]{ON83}) there exists a
$\gec$-geodesic $\gamma_\eps$ that is $\gec$-perpendicular to $S$
and satisfies $L_{\gec}(\gamma_\eps) = d_{\gec}(S,p)$. Let $h$ be some
background Riemannian metric on $M$ and let $\gamma_\eps(0)=:q_\eps\in S$,
$\gamma_\eps'(0)=:v_\eps$. Without loss of generality we may suppose
$\|v_\eps\|_h=1$. Since $\{v\in TM\mid \pi(v)\in S, \ \|v\|_h=1\}$ is compact,
there exists a sequence $\eps_j\searrow 0$ such that $q_{\eps_j}\to q\in S$ and
$v_{\eps_j}\to v\in T_qM$. Denote by $\gamma_v$ the $g$-geodesic with
$\gamma(0)=q$, $\gamma'(0)=v$. 
To see that $\gamma$ is $g$-orthogonal to $S$, let $w\in T_qS$ and pick any
sequence $w_j\in T_{q_{\eps_j}}S$
converging to $w$. Then $g(v,w)=\lim \geck(v_{\eps_j},w_j)=0$. Consequently, $\gamma$ is
$g$-timelike.

Since $g$ is timelike geodesically complete,
$\gamma_v$ is defined on all of $\R$, so 
by standard ODE-results (cf., e.g., \cite[Sec.\ 2]{KSS}) for any $a>0$ there 
exists some $j_0$ such that for all $j\ge j_0$ the curve $\gamma_{\eps_j}$
is defined on $[0,a]$ and $\gamma_{\eps_j}\to \gamma$ in $C^1([0,a])$ (in fact, it follows directly
from this and the geodesic equation that this convergence even holds in $C^2([0,a])$).

For each $j$, let $t_j>0$ be such that $\gameck(t_j) = p$. Then by (i) we obtain
\begin{equation}
d(S,p) = \lim d_\geck(S,p) = \lim \int_0^{t_j} \|\gamma_{\eps_j}'(t)\|_\geck\,dt = \lim t_j \|v_{\eps_j}\|_\geck = \|v\|_g\lim t_j,
\end{equation}
so $t_j \to \frac{d(S,p)}{\|v\|_g}=:a$. Finally, for $j$ sufficiently large, all $\gameck$ are defined on 
$[0,2a]$ and we have $p=\gameck(t_j)\to \gamma(a)$, so $p=\gamma(a)$, as well as
\begin{equation}
\begin{split}
d(S,p) = \lim \int_0^{t_j} \|\gamma_{\eps_j}'(t)\|_\geck\,dt = \int_0^{a} \|\gamma'(t)\|_g\,dt = L_0^a(\gamma).
\end{split}
\end{equation}
\end{proof}

\section{Proof of the main result}\label{mainproof}
To prove Theorem \ref{hawking}, we first note that without loss of generality we may assume $S$ to be connected. 
Moreover, by Theorem \ref{covering} we may also assume $S$ to be achronal, and thereby acausal by
Lemma \ref{Sacausal} (replacing, if necessary, $M$ by a suitable Lorentzian covering space $\tilde M$ and 
$S$ by its isometric image $\tilde S$ in $\tilde M$). 
Note that since the light cones of $\gec$ approximate those of $g$ from the 
inside it follows that for $\eps$ small
$S$ is a spacelike acausal hypersurface with respect to $\gec$ as well.

We prove the theorem by contradiction and assume that $(M,g)$ is future timelike
geodesically complete.  Hence we may apply Proposition \ref{prop2.2} to obtain 
(using the notation from the proof of that result) for any $p\in D^+(S)\setminus S$:
\begin{enumerate}
 \item[(A)] $\exists\ g$-geodesic $\gamma$  $\perp_g$ $S$ realising the
time separation to $p$, i.e., $L(\gamma)=d(S,p)$.
 \item[(B)] $\exists\ \gec$-geodesics $\gamma_\eps$ $\perp_\gec$ $S$
realising the time separation to $p$, i.e., $L_{\gec}(\gamma_\eps)=d_{\gec}(S,p)$.
 \item[(C)] $\exists\ \eps_j\searrow 0$ such that $\gamma_{\eps_j}\to\gamma$ in
$C^1([0,a])$ for all $a>0$ (in fact, even in $C^2([0,a])$).
\end{enumerate}
We proceed in several steps.
\medskip

\noindent
{\bf Step 1.}
 $D^+(S)$ is relatively compact.\medskip
 
The future convergence of $S$ is given by $\conv =1/(n-1)\mathrm{tr}S_U$, with $S_U(V)=-\nabla_VU$
and $U$ the future pointing $g$-unit normal on $S$. Analogously, for each $\eps_j$ as in (C) we obtain the
future convergence $\conv_j$ of $S$ with respect $\geck$, and we denote the future-pointing $\geck$-unit normal 
to $S$ and the corresponding shape operator by $U_j$ and $S_{U_j}$, respectively. 
By Proposition \ref{CGapprox} (i), $\conv_j \to \conv$ uniformly on $S$.
Let $m:=\min_S \mathrm{tr}S_U =(n-1)\min_S \conv$, and $m_j:=
\min_S \mathrm{tr}S_{U_j} = (n-1)\min_S \conv_j$. By assumption, $m>0$, and by the above we obtain $m_j\to m$
as $j\to \infty$.

Let 
 \begin{equation}\label{6}
  b:=\frac{n-1}{m} 
 \end{equation}
and assume that there exists some $p\in D^+(S)\setminus S$ with $d(S,p)>b$. We will show that this
leads to a contradiction.

Since each $\gamma_{\eps_j}$ as in (C) is maximising until $p=\gamma_{\eps_j}(t_j)$, it
contains no $\geck$-focal point to $S$ before $t_j$. Setting $\tilde t_j:=(1-\frac{1}{j})t_j$
it follows that $\exp_\gec^\perp$ is non-singular on $[0,\tilde t_j]\gamma_{\eps_j}'(0) = [0,\tilde t_j]v_{\eps_j}$.
As this set is compact there
exist open neighbourhoods $W_j$ of $[0,\tilde t_j]v_{\eps_j}$ in the normal 
bundle $N_{\geck}(S)$ and $V_j$ of $\gamma_{\eps_j}([0,\tilde t_j])$ in $M$
such that $\exp_\geck^\perp:W_j\to V_j$ is a diffeomorphism. Due to $D_\geck(S)$ being open, 
we may also assume that $V_j\subseteq D_\geck(S)$. 

On $V_j$ we introduce the Lorentzian distance function
$r_j:=d_{\geck}(S,.)$ and set $X_j:=-\grad(r_j)$. 
Denote by $\tilde\gamma_j$ the re-parametrisation of $\gamma_{\eps_j}$
by $\geck$-arclength:
\begin{equation}
\tilde \gamma_j: [0,\tilde t_j \|v_{\eps_j}\|_{\geck}] \to M \quad           
\tilde \gamma_j(t) := \gamma_{\eps_j}(t/\|v_{\eps_j}\|_{\geck}).
\end{equation}
Then 
since $\tilde\gamma_j$ is maximising from $S$ to $p$ in $D^+_\geck(S)$, hence in particular in 
$V_j\cap J^+_\geck(S)$, it follows that $X_j(\tilde\gamma_j(t))=
{\tilde\gamma}'_j(t)$ for all $t\in [0, \tilde t_j \|v_{\eps_j}\|_{\geck}]$.
Next we define the shape operator corresponding to the distance function $r_j$ by
$S_{r_j}(Y):=\nabla^{\geck}_Y(\grad(r_j))$ for $Y\in \X(V_j)$. 
Then $S_{r_j}|_{S\cap V_j} = S_{U_j}|_{S\cap V_j}$ and the expansion
$\tilde\theta_j:=-\mathrm{tr} S_{r_j}$ satisfies the Raychaudhuri
equation (cf., e.g., \cite{Nat})
\begin{equation}
 X_j(\tilde\theta_j)+\mathrm{tr}(S_{r_j}^2)+\Ric_\geck(X_{j}, X_j)=0
\end{equation}
on $V_j$. Consequently, we obtain for
$\theta_j(t):=\tilde\theta_j\circ\tilde\gamma_j(t)$:
\begin{equation}
 \frac{d(\theta_j^{-1})}{dt}\geq\frac{1}{n-1}+\frac{1}{\theta_j^2}
 \Ric_\geck({\tilde\gamma}'_j,{\tilde\gamma}'_j).
\end{equation}
Now since by (C) the $\tilde\gamma_j$ converge in $C^1$ to the $g$-timelike geodesic
$\gamma$, it follows that there exist $\kappa<0$ and $C>0$ such that for all $j$
sufficiently large we have 
$g({\tilde\gamma}'_j(t),{\tilde\gamma}'_j(t))\le \kappa$ 
as well as $\|{\tilde\gamma}'_j(t)\|_h\le C$ for all $t\in [0, \tilde t_j \|v_{\eps_j}\|_{\geck}]$.

We are therefore in the position to apply Lemma \ref{(4)} to obtain that, 
for any $\delta>0$,
\begin{equation}\label{deltaest}
 \frac{d(\theta_j^{-1})}{dt}>\frac{1}{
n-1}-\frac{\delta}{\theta_j^2}
\end{equation}
for $j$ large enough.
Pick any $c$ with $b<c<d(S,p)$ and fix $\delta>0$ so small that
\begin{equation}\label{bc}
b < \frac{n-1}{\alpha m} <c,
\end{equation}
where $m$ is as in \eqref{6} and $\alpha:= 1 - (n-1)m^{-2}\delta$.
Analogously, let $\alpha_j:= 1 - (n-1)m_j^{-2}\delta$, so that $\alpha_j\to \alpha$ as
$j\to \infty$. 
Setting $d_j:=\tilde t_j \|v_{\eps_j}\|_{\geck}$, $\theta_j$ is defined
on $[0,d_j]$. Note that, for $j$ large, \eqref{bc} implies 
the right hand side of \eqref{deltaest} to be strictly positive at $t=0$.
Thus  $\theta_j^{-1}$ is initially strictly increasing and $\theta_j(0)<0$, so
\eqref{deltaest} entails that $\theta_j^{-1}(t)\in [-m_j^{-1},0)$
on its entire domain. From this we conclude that $\theta_j$ has no zero on 
$[0,d_j]$, i.e., that $\theta_j^{-1}$ exists on all of $[0,d_j]$. 
It then readily follows, again using \eqref{deltaest}, that $\theta_j^{-1}(t)
\ge f_j(t) := -m_j^{-1} + t \frac{\alpha_j}{n-1}$ 
on $[0,d_j]$. Hence $\theta_j^{-1}$ must go to zero before $f_j$ does,
i.e., $\theta_j^{-1}(t)\to 0$ as $t\nearrow T$ for some positive $T\le \frac{n-1}{\alpha_j m_j}$.

Here we note that due to $\lim d_j = \lim t_j \|v_{\eps_j}\|_{\geck} =d(S,p)$, 
for $j$ sufficiently large we have by \eqref{bc}
\begin{equation}
\frac{n-1}{\alpha_j m_j} < c < d_j.
\end{equation}
This, however, means that $\theta_j^{-1}\to 0$ within $[0,d_j]$, contradicting the
fact that $\theta_j$ is smooth, hence bounded, on this entire interval.

Together with (A) this implies that $D^+(S)$ is contained in the compact set $\beta(S\times[0,b])$
where 
\begin{equation}
 \beta:\ S\times[0,b]\to M,\quad (q,t)\mapsto\exp^g(t\,U(q)),
\end{equation}
Hence also the future Cauchy
horizon $H^+(S)= \overline{D^+(S)}\setminus I^-(D^+(S))$ is compact. 
\medskip

From here, employing the causality results developed in Appendix A,
we may conclude the proof exactly as in \cite[Th.\ 14.55B]{ON83}. For 
completeness, we give the full argument.\medskip

\noindent
{\bf Step 2.} The future Cauchy horizon of $S$ is nonempty.\medskip

Assume to the contrary that $H^+(S)=\emptyset$. Then 
$I^+(S)\subseteq D^+(S)$: for $p\in S$, a future-directed timelike curve
$\gamma$ starting at $p$ lies
initially in $D^+(S)$ (using Proposition \ref{dopen}, or Lemma \ref{locallycauchy}). 
Hence if $\gamma$ leaves $D^+(S)$, it must meet $\partial D^+(S)$ and by Lemma \ref{Dboundary}
it also meets $H^+(S)$ (since $S$ is achronal it can't intersect $S$ again). But then
$H^+(S)$ wouldn't be empty, contrary to our assumption. Hence $I^+(S)\subseteq D^+(S)$. By Step 1,
then, $I^+(S)\subseteq \{p\in M \ |\ d(S,p)\leq b \}$ and hence $L(\gamma)\leq b$
for any timelike future-directed curve emanating from $S$, which is a
contradiction to timelike geodesic completeness of $M$.
\medskip

\noindent
{\bf Step 3.} The following extension of (A) holds:

\begin{enumerate}
 \item [(A')] $\forall\ q\in H^+(S)$ $\exists\ g$-geodesic $\gamma$  $\perp_g$
$S$ realising the time separation and $L(\gamma)=d(S,q)\leq b$.
\end{enumerate}

Consider the set $B\subseteq N(S)$ consisting of the zero
section and all future pointing causal vectors $v$ with
$\|v\|\leq b$. $B$ is compact by the compactness of $S$.

By definition there is a sequence $q_k$ in $D^+(S)$ that converges to $q$.
For any $q_k$ there is a geodesic as in (A) and hence a vector $v_k\in B$
with $\exp(v_k)=q_k$. By the compactness of $B$ we may assume that $v_k\to v$
for some $v\in B$ and hence by continuity $q_k\to\exp(v)$. Moreover, we have
by construction that $\|v_k\|=d(S,q_k)$. Since $d$ is lower semicontinuous (Lemma
\ref{lsc}), $\|v\|\geq d(S,q)$.

As $\gamma_v$ is perpendicular to $S$, hence timelike, our completeness assumption 
implies that it is defined on $[0,1]$. Thus it
runs from $S$ to $q$ and has length $\|v\|$, which implies $d(S,q)=\|v\|\leq b$.

\medskip
{\bf Step 4.} The map $p\mapsto d(S,p)$ is strictly decreasing along past
pointing generators of $H^+(S)$.\medskip

By Proposition \ref{horizon} (iii), $H^+(S)$ is generated by past-pointing 
inextendible null geodesics. Suppose that 
$\alpha: I\to M$ is such a generator, and let $s$, $t\in I$, $s<t$.  
Using (A')  we obtain a past pointing timelike geodesic $\gamma$ from
$\alpha(t)$ to $\gamma(0)\in S$ of length $d(S,\alpha(t))$. Then 
arguing as in the proof of Proposition \ref{prop2.2} (i) we may construct
a timelike curve $\sigma$ from $\alpha(s)$ to $\gamma(0)$ that is strictly longer
than the concatenation of $\alpha|_{[s,t]}$ and $\gamma$. Therefore,
\begin{equation}
 d(S,\alpha(s))\geq
L(\sigma)>L(\alpha|_{[s,t]}+\gamma) = L(\gamma)=d(S,\alpha(t)).
\end{equation}
{\bf Step 5.} $(M,g)$ is \emph{not} future timelike geodesically complete.\medskip

By step 1, $H^+(S)$ is compact and by Lemma \ref{lsc} $p\mapsto d(S,p)$
is lower semicontinuous, hence attains a finite minimum at
some point $q$ in $H^+(S)$. But then taking a past pointing generator
of $H^+(S)$ emanating from $q$ according to Proposition
\ref{horizon} (iii) gives a contradiction to step 4.
\hfill$\Box$

\begin{appendix}
\section*{Appendix A: Results from $\mathbf C^{1,1}$-causality theory}
\setcounter{section}{1}
\def\thesection{\Alph{section}}

In this appendix we collect those results on the causality of $C^{1,1}$-metrics
that are used in the main text, that is, \ref{totally}, \ref{longest}, \ref{push-up1},
 \ref{Dboundary}, \ref{lsc}, \ref{A19}--\ref{locallycauchy},
\ref{baer}, \ref{Sacausal},  \ref{covering}, as well as
those supplementary statements that are used to prove these, or to secure the compatibility 
with \cite{HE} as explained in Section \ref{prelim} (\ref{lipdevelopment} 
and \ref{cauchysurface}). Using the results on basic causality theory
of $C^{1,1}$-metrics established in \cite{CG,M,KSS,KSSV}, see Theorem \ref{lcb} to Lemma \ref{IJlemma}
below, combined with the standard proofs in the smooth case, it is a routine matter to prove the 
remaining results. So instead of providing
full proofs we accurately collect all facts and previous statements entering the
respective proofs. In this way we provide a concise chain of arguments
on the one hand establishing the results and on the other hand showing at which
places regularity issues have to be taken into account. Our
presentation is essentially based on the one of \cite{ON83}.

We first recall a few fundamental results from $C^{1,1}$-causality theory that
are used throughout the proofs of this section. From now (unless explicitly stated otherwise)
we will exclusively work on a $C^{1,1}$-spacetime $(M,g)$. 
Denoting by
$\tilde{Q}: T_pM\rightarrow \R$, $v\mapsto g_p(v,v)$ the quadratic form on the tangent
space of a Lorentzian manifold, we have:

\begin{Theorem}\label{lcb}
Let $(M,g)$ be a $C^{1,1}$-spacetime, and let $p\in M$. Then $p$ has a basis of normal neighbourhoods $U$, 
$\exp_p: \tilde U\to U$ a bi-Lipschitz homeomorphism, such that:
 \begin{equation*}
 \begin{split}
   I^{+}(p,U)=\exp_{p}(I^{+}(0)\cap \tilde{U})\\
   J^{+}(p,U)=\exp_{p}(J^{+}(0)\cap \tilde{U}) \\
   \partial I^{+}(p,U) = \partial J^{+}(p,U) =\exp_{p}(\partial I^{+}(0)\cap \tilde{U})
 \end{split}  
 \end{equation*}
Here, $I^+(0)=  \{v\in T_pM \mid \tilde Q(v)<0 \}$, and $J^+(0)= \{v\in T_pM \mid \tilde Q(v)\le  0 \}$.
In particular, $I^+(p,U)$ (respectively $J^+(p,U)$) is open (respectively closed) in $U$.
\end{Theorem}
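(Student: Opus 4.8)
\emph{Proof strategy.} The plan is to get the existence of the normal neighbourhoods from the fact that $\exp_p$ is a bi-Lipschitz homeomorphism (Minguzzi \cite[Th.\ 1.11]{M}, see also \cite[Th.\ 2.1]{KSS}) and then to establish the causal identities by feeding this into the causally adapted regularisation of Proposition \ref{CGapprox}. First I would record the basic facts about $C^{1,1}$-geodesics: the geodesic equation has unique solutions depending continuously on the data (the Christoffel symbols being locally Lipschitz), $\exp_p$ is defined and Lipschitz on an open star-shaped neighbourhood of $0\in T_pM$, and --- writing the geodesic equation in a chart --- $\exp_p(v)=p+v+O(|v|^2)$, so $\exp_p$ is differentiable at $0$ with $d(\exp_p)_0=\mathrm{id}$. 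By \cite[Th.\ 1.11]{M}, $\exp_p$ restricts to a bi-Lipschitz homeomorphism of some star-shaped $\tilde U$ onto an open neighbourhood $U$ of $p$; since every star-shaped subdomain of $\tilde U$ inherits this, $p$ has a \emph{basis} of such neighbourhoods, and it remains only to verify the causal identities for $U$ small enough.

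The inclusions ``$\supseteq$'' are immediate: for $v\in\tilde U$ with $\tilde Q(v)\le 0$ (resp.\ $<0$) the radial geodesic $c_v(t)=\exp_p(tv)$ is of class $C^2$ (its defining ODE has continuous right-hand side along $c_v$), and metric compatibility of $\nabla$ --- valid for a $C^1$-metric --- gives $\tfrac{d}{dt}g(c_v',c_v')=2g(\nabla_{c_v'}c_v',c_v')=0$, so $g(c_v',c_v')\equiv\tilde Q(v)$; hence $c_v$ is a future-directed causal (resp.\ timelike) curve from $p$ to $\exp_p(v)$ inside $U$, which gives $\exp_p(J^+(0)\cap\tilde U)\subseteq J^+(p,U)$ and $\exp_p(I^+(0)\cap\tilde U)\subseteq I^+(p,U)$. (Here $I^+(0)$, $J^+(0)$ are understood as the \emph{future} cones in $T_pM$ fixed by the time orientation at $p$.)

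For the reverse inclusions I claim everything reduces to the single statement
\[
(\ast)\qquad q=\exp_p(v)\in J^+(p,U)\ \Longrightarrow\ \tilde Q(v)\le 0 .
\]
Granting $(\ast)$ and ``$\supseteq$'' we have $\exp_p(I^+(0)\cap\tilde U)\subseteq I^+(p,U)\subseteq J^+(p,U)\subseteq\exp_p(J^+(0)\cap\tilde U)$; since $\exp_p$ is a homeomorphism, $I^+(p,U)$ is open and $J^+(p,U)$ is closed in $U$, and applying $\exp_p^{-1}$ together with the elementary identities $\mathrm{int}\,J^+(0)=I^+(0)$, $\overline{I^+(0)}=J^+(0)$ in $T_pM$ yields the asserted equalities for $I^+$, $J^+$ and their common boundary $\exp_p(\partial I^+(0)\cap\tilde U)$. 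To prove $(\ast)$ I would use Proposition \ref{CGapprox} to fix smooth metrics $\hat g_\eps\succ g$ with $\hat g_\eps\to g$ in $C^1$ and second derivatives bounded uniformly in $\eps$ near $p$. The uniform $C^2$-bound makes the $\hat g_\eps$-geodesic flows equi-Lipschitz, so there are $\eps_0>0$ and a neighbourhood $U$ of $p$ (which we take to be $g$-normal, as above) that for every $\eps<\eps_0$ is contained in a $\hat g_\eps$-normal neighbourhood $W_\eps$ on which the classical smooth analogue of the present theorem holds, with $\exp^{\hat g_\eps}_p\to\exp_p$ uniformly near $0$. Now if $q=\exp_p(v)\in J^+(p,U)$, take a $g$-causal curve $\gamma$ from $p$ to $q$ inside $U$; since $g\prec\hat g_\eps$, $\gamma$ is $\hat g_\eps$-timelike, so $q\in I^+_{\hat g_\eps}(p,W_\eps)$ and therefore $q=\exp^{\hat g_\eps}_p(w_\eps)$ with $\hat g_\eps(w_\eps,w_\eps)<0$. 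Letting $\eps\to 0$, any cluster point $w$ of $(w_\eps)$ satisfies $\exp_p(w)=q=\exp_p(v)$ by uniform convergence, hence $w=v$; thus $w_\eps\to v$ and $\tilde Q(v)=g_p(v,v)=\lim\hat g_\eps(w_\eps,w_\eps)\le 0$, which is $(\ast)$.

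The genuine obstacle is not in this argument but in its main input: that $\exp_p$ is a bi-Lipschitz \emph{homeomorphism} rather than merely a local homeomorphism (which is all the classical Whitehead-type results give in $C^{1,1}$) --- this is the delicate ODE and inverse-function analysis of \cite{M}, or alternatively the comparison-geometric approach of \cite{KSS}, and it is what the theorem really imports. Within the sketch above the only step needing care is the \emph{uniform} choice of $U$: that a single neighbourhood of $p$ can serve as a $g$-normal neighbourhood while sitting inside $\hat g_\eps$-normal neighbourhoods of a size controlled independently of $\eps$ for all small $\eps$ --- and this is exactly where Proposition \ref{CGapprox}(ii) enters. The reduction to $(\ast)$ and the passage to the limit are then routine.
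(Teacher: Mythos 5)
Your overall strategy (import the bi-Lipschitz property of $\exp_p$ from \cite{M}/\cite{KSS}, then transfer the causal identities from the outer approximations $\hat g_\eps$ of Proposition \ref{CGapprox}) is close in spirit to how this result is obtained in the literature; note that the paper itself does not prove Theorem \ref{lcb} but cites \cite[Th.\ 1.23]{M} and \cite[Th.\ 3.9]{KSSV}. Your treatment of $J^+$ is essentially sound: the inclusion $\exp_p(J^+(0)\cap\tilde U)\subseteq J^+(p,U)$ via radial geodesics, combined with the limiting argument for $(\ast)$ (granting the uniform control of the $\hat g_\eps$-normal neighbourhoods, which you correctly flag and which is supplied by \cite[Sec.\ 2]{KSS}), gives $J^+(p,U)=\exp_p(J^+(0)\cap\tilde U)$.

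The genuine gap is in the claim that ``everything reduces to $(\ast)$''. The statement $(\ast)$ only yields $\tilde Q(v)\le 0$, and your limit $\tilde Q(v)=\lim\hat g_\eps(w_\eps,w_\eps)$ cannot do better, since strict inequalities do not survive passage to the limit. Hence what you actually obtain is $\exp_p(I^+(0)\cap\tilde U)\subseteq I^+(p,U)\subseteq \exp_p(J^+(0)\cap\tilde U)$, which does not determine $I^+(p,U)$. To conclude $I^+(p,U)=\exp_p(I^+(0)\cap\tilde U)$ you must exclude that a timelike curve from $p$ inside $U$ reaches a point $\exp_p(v)$ with $v$ null, i.e., you must show that the exponential image of the null cone is achronal from $p$ within $U$. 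Your appeal to ``since $\exp_p$ is a homeomorphism, $I^+(p,U)$ is open'' is circular here: openness of $I^+(p,U)$ is itself a consequence of the identity being proved (it is even stated as such in the theorem), not of the homeomorphism property alone. This missing step is precisely the hard core of the result --- in the smooth case it is where the Gauss lemma enters (cf.\ the proof of \cite[Lemma 14.2]{ON83}), and in the $C^{1,1}$ case it requires either the approximate Gauss lemma together with the two-sided sandwiching $\check g_\eps\prec g\prec\hat g_\eps$ as in \cite{KSSV}, or Minguzzi's direct ODE analysis in \cite{M}. It is also exactly the content of Corollary \ref{boundary}, which is derived from Theorem \ref{lcb} and therefore cannot be invoked to fill the gap without circularity.
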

For a proof, see \cite[Th.\ 1.23]{M} or \cite[Th.\ 3.9]{KSSV}.
\begin{Corollary}\label{lipisc1} 
  Let $U\subseteq M$ be open, $p\in U$. Then the sets $I^+(p,U)$,
  $J^+(p,U)$ remain unchanged if
  Lipschitz curves are replaced by piecewise $C^1$ curves, or in fact
  by broken geodesics.
\end{Corollary}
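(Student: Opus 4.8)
The plan is to prove Corollary \ref{lipisc1} by a straightforward sandwiching argument between the three natural classes of future-directed curves one can use: broken (future-directed timelike or causal) geodesics, piecewise $C^1$ curves, and locally Lipschitz curves. Write $I^+_{\mathrm{geo}}(p,U) \subseteq I^+_{C^1}(p,U) \subseteq I^+(p,U)$ for the chronological futures defined using broken timelike geodesics, piecewise $C^1$ timelike curves, and locally Lipschitz timelike curves respectively (and analogously for $J^+$), the inclusions being immediate since each smaller class is contained in the larger. It thus suffices to prove the reverse inclusion $I^+(p,U)\subseteq I^+_{\mathrm{geo}}(p,U)$, and likewise for $J^+$.

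First I would fix $q\in I^+(p,U)$ and a future-directed timelike locally Lipschitz curve $c:[0,1]\to U$ from $p$ to $q$. Covering the compact image $c([0,1])$ by finitely many normal neighbourhoods as in Theorem \ref{lcb}, choose a partition $0=t_0<t_1<\dots<t_N=1$ such that each $c([t_{k-1},t_k])$ lies in a single normal neighbourhood $U_k$ with $U_k\subseteq U$. On each such piece, $c(t_{k-1})\ll c(t_k)$ in $U_k$, so by the description of $I^+(c(t_{k-1}),U_k)=\exp_{c(t_{k-1})}(I^+(0)\cap\tilde U_k)$ from Theorem \ref{lcb} there is a radial timelike geodesic in $U_k\subseteq U$ from $c(t_{k-1})$ to $c(t_k)$. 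Concatenating these radial geodesics yields a broken timelike geodesic from $p$ to $q$ inside $U$, so $q\in I^+_{\mathrm{geo}}(p,U)$. The argument for $J^+$ is identical, replacing $I^+(0)$ by $J^+(0)$ and timelike by causal, using the corresponding clause of Theorem \ref{lcb}; one only needs to note that a concatenation of causal geodesics is again a (broken) causal curve, and that broken causal geodesics are in particular piecewise $C^1$.

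The only real subtlety, and the step I expect to require the most care, is the passage from "$c$ is timelike a.e. and locally Lipschitz" to "on each small piece, $c(t_{k-1})\ll c(t_k)$ in the ambient normal neighbourhood $U_k$": this is precisely where low regularity matters, since $c$ need not be $C^1$ and the endpoint relation must be extracted from the a.e. causal character together with the $C^{1,1}$ version of the normal-neighbourhood structure. However, this is exactly the content already packaged in Theorem \ref{lcb} (via \cite[Th.\ 1.23]{M}, \cite[Th.\ 3.9]{KSSV}), whose $I^+(p,U)$ and $J^+(p,U)$ are by definition built from locally Lipschitz curves; so one simply invokes that the locally Lipschitz curve $c$, being timelike and landing in $U_k$, witnesses $c(t_k)\in I^+(c(t_{k-1}),U_k)=\exp_{c(t_{k-1})}(I^+(0)\cap\tilde U_k)$. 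Everything else is the routine chart-by-chart concatenation familiar from the smooth case (cf.\ \cite[Ch.\ 5]{ON83}), and no further regularity input is needed.
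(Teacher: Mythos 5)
Your proposal is correct and follows the same chart-by-chart replacement argument that underlies the results the paper cites for this corollary (\cite[Th.\ 1.27]{M}, \cite[Cor.\ 3.10]{KSSV}): sandwich the three curve classes, then upgrade each short Lipschitz segment to a radial geodesic via the description of $I^+(\cdot,U_k)$ and $J^+(\cdot,U_k)$ in Theorem \ref{lcb} and concatenate. The only adjustment worth making is that the covering step should invoke totally normal neighbourhoods (Theorem \ref{totally}) rather than just Theorem \ref{lcb}, so that each $U_k$ is guaranteed to be a normal neighbourhood of the left endpoint $c(t_{k-1})$ and not merely of the point around which it was originally chosen; this is exactly how the paper handles the analogous covering in, e.g., Lemma \ref{lipdevelopment}.
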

See \cite[Th.\ 1.27]{M} or \cite[Cor.\ 3.10]{KSSV}.

\medskip\noindent
The usual convexity properties also hold for $C^{1,1}$-metrics: If $U$ is a
normal neighbourhood of each of its points then it is called totally
normal or (geodesically) convex. Any pair of its points can then be connected by
a unique geodesic contained in $U$. The following result (\cite[Th.\ 4.1]{KSS}, 
\cite[Th. 1.16]{M}) guarantees existence of such neighbourhoods:

\begin{Theorem} \label{totally} Let $M$ be a smooth manifold with a
  $C^{1,1}$-pseudo-Riemannian metric $g$. Then each point $p\in M$
  possesses a basis of totally normal neighbourhoods.
\end{Theorem}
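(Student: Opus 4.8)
The plan is to carry over the classical Whitehead argument for the existence of geodesically convex neighbourhoods, substituting for the smoothness of $\exp$ the weaker $C^{1,1}$-facts already available. Fix $p\in M$ and a smooth chart $(V,\psi)$ with $\psi(p)=0$ in which $g$, and hence the Christoffel symbols $\Gamma^i_{jk}$, is $C^{1,1}$. I claim that for all sufficiently small $\delta>0$ the coordinate ball $W_\delta:=\psi^{-1}(\{x\in\R^n:\|x\|<\delta\})$ is totally normal; since these balls form a neighbourhood basis of $p$, this proves the theorem.

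The input from $C^{1,1}$-ODE and causality theory is as follows. By \cite[Th.\ 1.11]{M} (equivalently \cite[Th.\ 2.1]{KSS}), Theorem \ref{lcb}, and the fact that the geodesic spray is a Lipschitz vector field on $TM$ (so its flow is continuous and depends continuously on the base point), $\exp$ is a bi-Lipschitz homeomorphism in a manner that is uniform near $p$: there are $\eps>0$ and a neighbourhood $N$ of $p$ such that for every $q\in N$, $\exp_q$ restricts to a homeomorphism of the $\eps$-ball in $T_qM$ onto a normal neighbourhood of $q$. Moreover, standard Gr\"onwall estimates for the geodesic equation $\gamma''=-\Gamma(\gamma)(\gamma',\gamma')$ — whose right-hand side is Lipschitz because $g\in C^{1,1}$ — show that on a fixed relatively compact set a geodesic restricted to $[0,1]$ stays within a controlled distance of its starting point and keeps $\|\gamma'\|$ within a fixed ratio of $\|\gamma'(0)\|$. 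Hence, fixing a larger coordinate ball $\overline{W_{\delta_0}}\comp V$, we may choose $\delta$ so small that $W_\delta\subseteq N$, that $\overline{W_\delta}\subseteq\bigcap_{q\in W_\delta}\exp_q(B_\eps(0))$ (uniform size of normal neighbourhoods), and that for any $q_1,q_2\in W_\delta$ the short geodesic $\gamma(t)=\exp_{q_1}\big(t\,\exp_{q_1}^{-1}(q_2)\big)$, $t\in[0,1]$, has image inside $\overline{W_{\delta_0}}$.

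The geometric core is a convexity estimate run in the chart. For a non-constant geodesic $\gamma$ with image in $\overline{W_{\delta_0}}$ put $\phi(t):=\|\psi(\gamma(t))\|^2$ (Euclidean norm in the chart). Differentiating twice and substituting the geodesic equation gives
\begin{equation*}
\phi''(t)=2\|\gamma'(t)\|^2-2\sum_{i,j,k}\psi(\gamma(t))^i\,\Gamma^i_{jk}(\psi(\gamma(t)))\,(\gamma^j)'(t)\,(\gamma^k)'(t).
\end{equation*}
Since $g\in C^{1,1}$, the $\Gamma^i_{jk}$ are bounded on the compact set $\overline{W_{\delta_0}}$; shrinking $\delta_0$ so that this bound times $\delta_0$ is small enough, the second term is dominated by $\|\gamma'(t)\|^2$, so $\phi''(t)\ge\|\gamma'(t)\|^2>0$ (strictly, because a non-constant solution of an ODE with uniqueness has nowhere-vanishing velocity). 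Thus $\phi$ is strictly convex along every such geodesic and attains its maximum on $[0,1]$ at an endpoint. Applying this to the short geodesic of the previous step, $\phi(0),\phi(1)<\delta^2$ forces $\phi(t)<\delta^2$ for all $t$, i.e.\ $\gamma([0,1])\subseteq W_\delta$. Equivalently, $\exp_q^{-1}(W_\delta)$ is star-shaped about $0$ for each $q\in W_\delta$; combined with $W_\delta\subseteq\exp_q(B_\eps(0))$ this shows that $\exp_q$ is a homeomorphism of a star-shaped set onto $W_\delta$, i.e.\ $W_\delta$ is a normal neighbourhood of each of its points, hence totally normal.

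The main obstacle is precisely where the classical proof breaks down: $\exp$ is only a bi-Lipschitz homeomorphism, so neither the inverse function theorem nor normal coordinates — in which $\Gamma(p)=0$, the usual source of convexity — are at our disposal, and passing to a bi-Lipschitz coordinate system would in general destroy the $C^{1,1}$-regularity of $g$, so one cannot recover $\Gamma(p)=0$ that way. The remedy is to keep working in a fixed smooth chart, where $g$ genuinely is $C^{1,1}$ and the $\Gamma^i_{jk}$ are merely bounded, and to note that boundedness (rather than vanishing) of $\Gamma$ on a small ball already makes the Euclidean squared-distance function strictly convex along geodesics; the remaining existence, uniqueness, and uniform-size statements for $\exp$ are then imported directly from \cite{M,KSS}.
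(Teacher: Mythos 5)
Your argument is correct and is essentially the proof the paper relies on: the paper gives no proof of Theorem \ref{totally} itself but defers to \cite[Th.\ 4.1]{KSS} and \cite[Th.\ 1.16]{M}, both of which combine the (locally uniform) bi-Lipschitz exponential map with precisely this Whitehead-type convexity estimate for the Euclidean squared distance in a fixed smooth chart, where boundedness of the Christoffel symbols on a small coordinate ball substitutes for the unavailable normal coordinates with $\Gamma(p)=0$. The one point worth flagging is that the uniformity in the base point $q$ of the bi-Lipschitz property and of the size of the normal neighbourhoods is the substantive imported ingredient; it comes from the statement that $(\pi,\exp)$ is a bi-Lipschitz homeomorphism of a neighbourhood of the zero section onto a neighbourhood of the diagonal (\cite[Th.\ 2.1]{KSS}, \cite[Th.\ 1.11]{M}), not from Theorem \ref{lcb}, which is a pointwise statement about a single $\exp_p$.
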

%For the proof, see \cite[Th.\ 4.1]{KSS}, \cite[Th.\ 1.16]{M}.
Concerning curve-lengths in normal neighbourhoods, \cite[Th.\ 1.23]{M} gives:
\begin{Proposition}\label{longest}
Let $U$ be a normal neighborhood of $p\in M$. If $p\ll q$ for a point $q\in U$,
then the radial geodesic segment $\sigma$ is the unique longest timelike curve in 
$U$ connecting $p$ and $q$.
\end{Proposition}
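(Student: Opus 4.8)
\textbf{Proof plan for Proposition \ref{longest}.}

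The plan is to reduce the statement to the Minkowski (tangent-space) picture via the bi-Lipschitz normal chart supplied by Theorem \ref{lcb}, then run the standard Gauss-lemma argument, being careful that every ingredient survives in the $C^{1,1}$ setting. First I would fix the normal neighbourhood $U$ of $p$, with $\exp_p:\tilde U\to U$ a bi-Lipschitz homeomorphism, and let $\sigma:[0,1]\to U$ be the radial geodesic $\sigma(s)=\exp_p(s\,v)$, where $v=\exp_p^{-1}(q)\in\tilde U$. Since $p\ll q$ in $U$, Theorem \ref{lcb} gives $v\in I^+(0)$, i.e.\ $\tilde Q(v)<0$, so $\sigma$ is timelike and $L_g(\sigma)=\sqrt{-\tilde Q(v)}=\sqrt{-g_p(v,v)}$. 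Now let $\alpha:[0,b]\to U$ be any other future-directed timelike (hence causal) curve from $p$ to $q$; I want $L_g(\alpha)\le L_g(\sigma)$ with equality only for a reparametrisation of $\sigma$.

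The key device is the radial distance function and the Gauss lemma on $\tilde U$. Introduce on $\tilde U\setminus\{0\}$ (intersected with the timelike cone) the function $r(w):=\sqrt{-\tilde Q(w)}$ pulled back to $U$, i.e.\ $\mathbf r:=r\circ\exp_p^{-1}$. In the smooth case (cf.\ \cite[Lemma 5.7, Prop.\ 5.34]{ON83}) one shows that the $g$-gradient of $\mathbf r$ is, up to sign, the unit radial vector field, and that for a future-directed causal $\alpha$ the composition $t\mapsto \mathbf r(\alpha(t))$ is monotone with $|\frac{d}{dt}\mathbf r(\alpha(t))|\ge \|\alpha'(t)\|_g$ almost everywhere, which upon integration yields $L_g(\alpha)=\int_0^b\|\alpha'\|_g\,dt\le \mathbf r(q)-\mathbf r(p)=\sqrt{-g_p(v,v)}=L_g(\sigma)$. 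Here is where the $C^{1,1}$-regularity must be invoked: the exponential map $\exp_p$ is only bi-Lipschitz, so $\mathbf r$ is merely Lipschitz, and its gradient exists only almost everywhere; however, by Rademacher's theorem $\alpha$ is differentiable a.e.\ and $\mathbf r\circ\alpha$ is Lipschitz, hence absolutely continuous, so the fundamental theorem of calculus still applies and the chain-rule estimate holds at almost every $t$. The Gauss-lemma identity itself should be extracted from the already-cited analysis of the geodesic flow in \cite{M,KSS} (e.g.\ the fact that $\exp_p$ is a bi-Lipschitz homeomorphism mapping radial lines to geodesics, together with the observation that the "radial" part of the geodesic structure is as regular as in the smooth case since it is governed by a single geodesic). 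Alternatively, and perhaps more safely, one cites \cite[Th.\ 1.23]{M} directly: that theorem is precisely the $C^{1,1}$ analogue of the normal-neighbourhood length comparison, and the present Proposition is one of its conclusions.

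For the uniqueness (equality) claim I would argue that if $L_g(\alpha)=L_g(\sigma)$ then equality must hold in the pointwise estimate $\|\alpha'(t)\|_g\le|\frac{d}{dt}\mathbf r(\alpha(t))|$ for a.e.\ $t$, which forces $\alpha'(t)$ to be (a.e.) a positive multiple of the radial direction; together with $\alpha(0)=p$ this pins down the image of $\alpha$ to be that of $\sigma$, so $\alpha$ is a monotone reparametrisation of the radial geodesic segment. The main obstacle I anticipate is not the geometry but the regularity bookkeeping around the Gauss lemma: in $C^{1,1}$ one cannot differentiate $\exp_p$ twice, so the usual computation of $\nabla\mathbf r$ via the second variation or via Jacobi fields is unavailable, and one must instead either (a) appeal to the first-variation/energy formulation which only needs first derivatives of $\exp_p$, or (b) simply cite the statement of \cite[Th.\ 1.23]{M} where this work has already been done. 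Given the paper's stated strategy of reducing everything to \cite{CG,M,KSS,KSSV}, option (b) is the intended route, so the "proof" is essentially: invoke Theorem \ref{lcb} to see $\sigma$ is timelike, then quote \cite[Th.\ 1.23]{M} for the longest-curve property and its uniqueness.
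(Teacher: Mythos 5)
Your proposal is correct and, in its final paragraph, lands on exactly the paper's approach: the paper offers no independent argument for this proposition but simply cites \cite[Th.\ 1.23]{M}, which contains the $C^{1,1}$ maximisation and uniqueness statement for radial geodesics in normal neighbourhoods. Your preliminary Gauss-lemma sketch is a fair outline of what that reference actually proves, but it is not reproduced in the paper, so option (b) is indeed the intended (and the paper's actual) route.
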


\medskip\noindent
The following result provides more information about causal curves intersecting
the boundary of $J^+(p,U)$:
%%%%%%%%%%%%%%%%%%%%%%%%%%%%%%%%%%%%%%%%%%%%%%%%%%%%%%%%%%%%%%%%%%%%%%%%%%%%%%%%%%%%%%%%%%%%%%%%%%%%%%%%%
%%%%%%%%%%%%%%%%%%%%%%%%%%%%%%%%%%%%%%%%%%%%%%%%%%%%%%%%%%%%%%%%%%%%%%%%%%%%%%%%%%%%%%%%%%%%%%%%%%%%%%%%%
\begin{Corollary}\label{boundary} 
  Let $U$ be as in Th.\ \ref{lcb}, suppose that $\alpha: [0,1]
  \to U$ is causal and $\alpha(1)\in \partial J^+(p,U)$. Then $\alpha$
  lies entirely in $\partial J^+(p,U)$ and there exists a
  reparametrisation of $\alpha$ as a null-geodesic segment.
\end{Corollary}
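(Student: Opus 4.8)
The plan is to mimic the classical smooth proof (as in \cite[Lemma 14.2]{ON83} or the analogous statement in \cite{HE}), using the $C^{1,1}$-results already collected, in particular Theorem \ref{lcb} and Proposition \ref{longest}. First I would recall from Theorem \ref{lcb} that on the totally normal neighbourhood $U$ we have $J^+(p,U)=\exp_p(J^+(0)\cap\tilde U)$, and that $\partial J^+(p,U)=\exp_p(\partial I^+(0)\cap\tilde U)$ is precisely the image under $\exp_p$ of the future null cone at $p$. Since $\exp_p$ is a bi-Lipschitz homeomorphism, it suffices to transfer the statement to the tangent space picture: writing $\tilde\alpha:=\exp_p^{-1}\circ\alpha$, the curve $\tilde\alpha$ is a Lipschitz curve in $\tilde U\subseteq T_pM$ with $\tilde\alpha(1)\in\partial I^+(0)$, i.e.\ $\tilde Q(\tilde\alpha(1))=0$ with $\tilde\alpha(1)$ future null.

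The heart of the argument is then to show that $\tilde\alpha([0,1])\subseteq\partial I^+(0)$. Suppose not; then there is some $t_0<1$ with $\alpha(t_0)\in I^+(p,U)$ (or $\alpha(t_0)=p$), while $\alpha$ is causal on $[t_0,1]$. If $\alpha(t_0)=p$, then a causal curve from $p$ reaching $\partial J^+(p,U)$ would, by the push-up-type behaviour and Proposition \ref{longest}, have to be a null pregeodesic — this is exactly the borderline case. If instead $p\ll\alpha(t_0)$, then concatenating a timelike curve from $p$ to $\alpha(t_0)$ with the causal curve $\alpha|_{[t_0,1]}$ yields a causal curve from $p$ to $\alpha(1)$ that is not a null pregeodesic, hence by the push-up property one gets $p\ll\alpha(1)$, so $\alpha(1)\in I^+(p,U)$, contradicting $\alpha(1)\in\partial J^+(p,U)$ (recall $I^+(p,U)$ is open in $U$ by Theorem \ref{lcb}). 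This forces $\alpha([0,1])\subseteq\partial J^+(p,U)$.

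Once $\alpha$ lies in $\partial J^+(p,U)=\exp_p(\partial I^+(0)\cap\tilde U)$, it remains to produce the null-geodesic reparametrisation. Here I would invoke Proposition \ref{longest} together with the characterisation of $\partial J^+(p,U)$: for any two points $\alpha(s)<\alpha(t)$ on the curve, if they were chronologically related the previous paragraph's argument would push $\alpha(t)$ into the open set $I^+(p,U)$, a contradiction; hence consecutive points are causally but not chronologically related, which in a normal neighbourhood means they lie on a common radial null geodesic from $p$. A standard connectedness/covering argument over $[0,1]$ (the set of parameters where $\alpha$ agrees, up to reparametrisation, with a radial null geodesic segment is open and closed and nonempty) then shows that $\alpha$ is, after reparametrisation, a single null geodesic segment. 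Alternatively, one can cite \cite[Th.\ 1.23]{M} or the corresponding result of \cite{KSSV} directly, since the analysis of null curves on the cone boundary is already part of the proof of Theorem \ref{lcb}.

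The main obstacle I expect is the push-up step: in the smooth case one freely uses that a causal curve which is not a null pregeodesic can be deformed to a timelike one, but in $C^{1,1}$ one must be careful that the available push-up result (Proposition \ref{push-up1}, used elsewhere in the paper) applies in a normal neighbourhood and yields strict chronological relation. Fortunately this is precisely the content of the $C^{1,1}$-causality machinery in \cite{M,KSSV}, so the obstacle is one of bookkeeping — citing the right statement — rather than of genuine new difficulty; the bi-Lipschitz regularity of $\exp_p$ handles the transfer to and from $T_pM$ without loss.
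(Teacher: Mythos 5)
The paper does not prove this corollary at all: it simply cites \cite[Th.\ 1.23]{M} and \cite[Cor.\ 3.11]{KSSV}, so your fallback option of invoking those results directly is exactly the paper's own ``proof''. Your reconstruction of the underlying argument is on the right track and the first half is sound: assuming (as the statement implicitly does) that $\alpha$ emanates from $p$, every $\alpha(t)$ lies in $J^+(p,U)=I^+(p,U)\cup\partial J^+(p,U)$, and if some $\alpha(t_0)\in I^+(p,U)$ then Proposition \ref{push-up1} applied inside $U$ gives $p\ll\alpha(1)$ in $U$, contradicting $\alpha(1)\in\partial J^+(p,U)=\partial I^+(p,U)$ with $I^+(p,U)$ open (Theorem \ref{lcb}). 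The weak point is the second half. From $\alpha(s)\le\alpha(t)$ but $\alpha(s)\not\ll\alpha(t)$ you conclude that the two points ``lie on a common radial null geodesic from $p$''; but $U$ is only a normal neighbourhood of $p$, so the causal relation between two points $\alpha(s),\alpha(t)\ne p$ is not controlled by $\exp_p$ alone, and even in a totally normal $U$ the fact that the (unique) null geodesic joining them coincides with the radial null generator through $p$ is precisely the null-cone rigidity statement one is trying to prove. The correct route is the one underlying the cited results: by Theorem \ref{lcb} the curve $\exp_p^{-1}\circ\alpha$ lies on the null cone $\partial I^+(0)$, and the $C^{1,1}$-Gauss lemma established in \cite{M,KSSV} shows that a causal curve on the null hypersurface $\exp_p(\partial I^+(0)\cap\tilde U)$ has derivative a.e.\ proportional to the radial null direction, whence it is a reparametrised single radial null geodesic. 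So your direct argument has a genuine gap at this step, but your alternative of citing \cite[Th.\ 1.23]{M} or \cite[Cor.\ 3.11]{KSSV} closes it and is precisely what the authors do.
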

See \cite[Th. 1.23]{M} or \cite[Cor.\ 3.11]{KSSV}.

The following fundamental push-up principle (\cite[Lemma 1.22]{CG}) in fact even holds 
for Lipschitz (or, more generally, causally plain continuous) metrics:
\begin{Proposition}\label{push-up1} 
Let $g$ be a $C^{0,1}$-metric on $M$ and let $p,\, q,\, r\in M$ with $p\le q$ and 
$q\ll r$ or $p\ll q$ and $q\le r$.  Then $p\ll r$.
\end{Proposition}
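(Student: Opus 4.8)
The plan is to prove the push-up principle for $C^{0,1}$-metrics by reducing it to the two standard ingredients: a local version inside a single totally normal neighbourhood, and a concatenation/covering argument for the global statement. By symmetry (time reversal), it suffices to treat the case $p\le q$ and $q\ll r$; the case $p\ll q$, $q\le r$ follows by applying the first case in the time-reversed spacetime. So I assume there is a future-directed causal curve $\alpha$ from $p$ to $q$ and a future-directed timelike curve from $q$ to $r$, and I must produce a future-directed timelike curve from $p$ to $r$.

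The local step is the heart of the matter: \emph{if $U$ is a totally normal neighbourhood, $p\le_U q$ via a causal curve in $U$ and $q\ll_U r$ with $r\in U$, then $p\ll_U r$}. Here I would invoke Theorem~\ref{lcb} and Corollary~\ref{boundary}. Since $q\ll r$ in $U$, Theorem~\ref{lcb} gives $q\in I^-(r,U)=\exp_r(I^-(0)\cap\tilde U)$, i.e. $q$ lies in the \emph{open} timelike past of $r$ within $U$. Now $p\le q$ via a causal curve $\alpha$ in $U$; if $\alpha$ were to stay on $\partial J^+(p,U)$ it would be a reparametrised null geodesic by Corollary~\ref{boundary}, but in any case $p\in J^-(q,U)$. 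The clean way to finish: $q\in I^-(r,U)$ which is open, and $J^-(q,U)\subseteq J^-(r,U)$; one then shows $J^-(q,U)\setminus\{q\}\subseteq I^-(r,U)$, or more simply, that $p\in J^-(q,U)$ together with $q\in I^-(r,U)$ forces $p\in I^-(r,U)$. This is exactly the smooth-case push-up inside a convex neighbourhood, and the proof in \cite{ON83} (Lemma 14.2 and its vicinity) uses only: geodesic convexity of $U$, the characterisation of $I^\pm$ and $J^\pm$ via the exponential map as in Theorem~\ref{lcb}, and the fact that a causal curve hitting the boundary of a future cone is a null geodesic lying in that boundary (Corollary~\ref{boundary}). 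Since all three facts are available for $C^{1,1}$ (and the relevant parts for $C^{0,1}$ via \cite{CG}), the local step goes through verbatim. The main obstacle here is that for merely $C^{0,1}$ metrics one does \emph{not} have totally normal neighbourhoods or a well-behaved exponential map, so for the $C^{0,1}$ case one must instead quote the local push-up directly from \cite[Lemma 1.22]{CG} rather than re-deriving it; I would state the proof for $C^{1,1}$ from the results collected above and remark that the $C^{0,1}$ (and causally plain continuous) case is \cite[Lemma 1.22]{CG}.

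For the global step I would argue as follows. Cover the compact image $\alpha([0,1])$ by finitely many totally normal neighbourhoods (Theorem~\ref{totally}), obtaining a partition $0=s_0<s_1<\dots<s_N=1$ with $\alpha([s_i,s_{i+1}])$ contained in a convex set $U_i$. Also fix a point $q'\ne q$ on the timelike curve from $q$ to $r$ with $q\ll q'$; then $q'\ll r$. Working backwards from the end: on $U_{N-1}$ we have $\alpha(s_{N-1})\le \alpha(1)=q$ via a causal curve in $U_{N-1}$, and $q\ll q'$; shrinking $q'$ towards $q$ along the timelike curve if necessary so that $q'\in U_{N-1}$, the local step gives $\alpha(s_{N-1})\ll q'$, hence $\alpha(s_{N-1})\ll r$ (using that $q'\ll r$ and chronology is transitive, Proposition~\ref{push-up1} being proved for the easier transitive situation once one link is timelike — or just: concatenate the timelike curves). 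Then pick $q''$ on a timelike curve from $\alpha(s_{N-1})$ to $r$ lying in $U_{N-2}$, apply the local step to $\alpha(s_{N-2})\le \alpha(s_{N-1})\ll q''$ to get $\alpha(s_{N-2})\ll q''\le r$, and iterate. After $N$ steps we reach $p=\alpha(0)\ll r$, with an explicit future-directed timelike curve obtained by concatenating the timelike radial geodesic segments produced at each stage with the final timelike curve into $r$; concatenation of causal/timelike curves where at least one piece is timelike is again timelike, so no regularity issue arises. The only subtlety worth a sentence is ensuring at each stage that the auxiliary point lies in the current convex neighbourhood, which is achieved because the timelike curve into $r$ can be followed backwards as close to its starting point as desired, and that starting point lies in an open convex set.
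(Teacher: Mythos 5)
The paper does not actually prove this proposition: it is quoted verbatim from \cite[Lemma 1.22]{CG}, which establishes push-up for causally plain continuous metrics (a class containing all $C^{0,1}$-metrics). Your proposal correctly identifies the essential obstruction to a self-contained argument at the stated regularity --- namely that Theorem~\ref{totally}, Theorem~\ref{lcb} and Corollary~\ref{boundary} are genuinely $C^{1,1}$ results with no $C^{0,1}$ analogue --- and accordingly falls back on the same citation for the statement as given; so for the proposition as stated your proof and the paper's coincide. What you add is a self-contained derivation in the $C^{1,1}$ case, and that argument is sound: the local step is exactly the observation that if $r\in J^+(p,U)\setminus I^+(p,U)$ then $r\in\partial J^+(p,U)$ by Theorem~\ref{lcb}, whence Corollary~\ref{boundary} forces the concatenated curve to be a reparametrised null geodesic, contradicting the timelike segment from $q$ to $r$; and the backward iteration through a finite convex cover, replacing each causal piece by a timelike radial geodesic, is precisely the mechanism the paper itself deploys in the proof of Proposition~\ref{prop2.2}(i) and in Step 4 of Section~\ref{mainproof}. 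One small caveat: your closing justification that ``concatenation of causal/timelike curves where at least one piece is timelike is again timelike'' is false as stated (a null segment followed by a timelike one is merely causal, since timelike means the tangent is timelike almost everywhere); it does no harm here because every piece you actually concatenate is timelike, but you should delete or correct that sentence rather than rely on it.
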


\begin{Proposition}\label{IJprop} Let $U\subseteq M$ as in Th.\ \ref{lcb} be totally normal. 
\begin{itemize}
\item[(i)] Let $p$, $q\in U$. Then $q\in I^+(p,U)$ (resp.\ $\in J^+(p,U)$) if and only
if $\overrightarrow{pq}:=\exp_p^{-1}(q)$ is future-directed timelike (resp.\ causal).
Also, $(p,q)\mapsto  \overrightarrow{pq}$ is continuous.
\item[(ii)] $J^+(p,U)$ is the closure of $I^+(p,U)$ relative to $U$.
\item[(iii)] The relation $\le$ is closed in $U\times U$.
\item[(iv)] If $K$ is a compact subset of $U$ and $\alpha: [0,b)\to K$ is causal, then $\alpha$
can be continuously extended to $[0,b]$. 
\end{itemize}
\end{Proposition}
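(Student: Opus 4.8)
The plan is to follow the standard proof for smooth metrics (as in \cite[Lemma 5.8 and related]{ON83}) almost verbatim, replacing each appeal to smoothness by the corresponding $C^{1,1}$-result already established, most notably Theorem \ref{lcb} (the structure of $I^\pm$, $J^\pm$, $\partial I^\pm$ in a normal neighbourhood as images under $\exp_p$ of the corresponding cones at the origin) and the fact that $\exp_p$ is a bi-Lipschitz homeomorphism. For (i): fix $p\in U$ and recall that, since $U$ is normal for $p$, $\exp_p:\tilde U\to U$ is a homeomorphism, so $\overrightarrow{pq}=\exp_p^{-1}(q)$ is well defined. By Theorem \ref{lcb}, $q\in I^+(p,U)$ iff $q\in\exp_p(I^+(0)\cap\tilde U)$ iff $\overrightarrow{pq}\in I^+(0)=\{v:\tilde Q(v)<0\}$ and future-directed, i.e.\ iff $\overrightarrow{pq}$ is future-directed timelike; and similarly for $J^+$ using $J^+(0)=\{v:\tilde Q(v)\le 0\}$. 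The joint continuity of $(p,q)\mapsto\overrightarrow{pq}$ follows from the fact that the map $(p,q)\mapsto\exp_p^{-1}(q)$ is continuous: this is exactly the content of $\exp$ being a (bi-Lipschitz, hence in particular continuous) homeomorphism on a neighbourhood of the zero section of $TM$, as established in \cite{M,KSS}; on a totally normal $U$ this yields a continuous inverse jointly in both arguments.

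For (ii): by Theorem \ref{lcb}, $I^+(p,U)=\exp_p(I^+(0)\cap\tilde U)$ and $J^+(p,U)=\exp_p(J^+(0)\cap\tilde U)$. Since $\exp_p$ is a homeomorphism of $\tilde U$ onto $U$, closure relative to $U$ corresponds to closure relative to $\tilde U$; and relative to $\tilde U$ the closure of $I^+(0)\cap\tilde U$ is $J^+(0)\cap\tilde U$, because $\{v:\tilde Q(v)<0,\ v\text{ future}\}$ is an open cone whose closure is $\{v:\tilde Q(v)\le 0,\ v\text{ future}\}$ (this is a statement purely about the quadratic form $\tilde Q$ on the vector space $T_pM$, hence unaffected by the regularity of $g$, as long as one takes the topological closure inside $\tilde U$). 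Shrinking $\tilde U$ if necessary to be star-shaped about $0$ ensures $\mathrm{cl}_{\tilde U}(I^+(0)\cap\tilde U)=J^+(0)\cap\tilde U$.

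For (iii): suppose $p_k\to p$, $q_k\to q$ in $U$ with $p_k\le q_k$. By Corollary \ref{lipisc1} we may take the causal curves joining $p_k$ to $q_k$ to be radial geodesic segments, so $p_k\le q_k$ iff $\overrightarrow{p_kq_k}$ is future-directed causal (by part (i)). By the joint continuity in (i), $\overrightarrow{p_kq_k}\to\overrightarrow{pq}$, and since $\{v:\tilde Q(v)\le 0\}\cup\{0\}$ together with the future time-orientation is closed in $TU$, the limit $\overrightarrow{pq}$ is future-directed causal or zero, i.e.\ $p\le q$. For (iv): let $\alpha:[0,b)\to K\comp U$ be causal; since $\alpha$ is locally Lipschitz with $\|\alpha'\|_h$ locally bounded and valued in the compact set $K$, one shows as in the smooth case that $\alpha$ has finite $h$-length — concretely, reparametrise by $h$-arclength and use that causal vectors in $K$ have $h$-norm controlled above and below by the $g$-norm of their timelike part together with compactness; more cleanly, since $J^+(p,U)$ is described via $\exp_p$ and $\alpha$ stays in a compact subset of $U$, the curve $t\mapsto\exp_{\alpha(0)}^{-1}\alpha(t)$ is causal in the cone at $\alpha(0)$ and, being future-directed causal, is monotone enough that boundedness forces a limit as $t\to b$; applying $\exp_{\alpha(0)}$ gives the continuous extension $\alpha(b)\in K$.

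The main obstacle is the joint continuity of $(p,q)\mapsto\overrightarrow{pq}$ in part (i), since in the $C^{1,1}$-setting one cannot invoke the smooth inverse function theorem; however this is precisely what is supplied by the bi-Lipschitz homeomorphism property of $\exp$ from \cite{M,KSS}, so once that is cited the remaining arguments are routine adaptations of the smooth proofs. A minor secondary point is (iv), where the classical argument uses a limit curve / quasi-limit construction; here one either invokes the $C^{1,1}$ limit curve theorem (as used in Lemma \ref{lhbound}, via \cite[Th.\ 3.1]{limit} and the footnote there) or argues directly through $\exp_{\alpha(0)}^{-1}$ as sketched above.
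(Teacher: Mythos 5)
Your proposal is essentially the route the paper intends: the paper itself gives no argument beyond citing \cite[Prop.\ 3.15]{KSSV}, whose proof is exactly the adaptation you describe --- Theorem \ref{lcb} identifies $I^+(p,U)$ and $J^+(p,U)$ with the images of the flat cones under the bi-Lipschitz homeomorphism $\exp_p$, joint continuity of $(p,q)\mapsto\overrightarrow{pq}$ comes from the bi-Lipschitz inverse of $v\mapsto(\pi(v),\exp(v))$ near the zero section, and (ii), (iii) then reduce to elementary statements about the quadratic form $\tilde Q$ on $T_pM$. Parts (i)--(iii) of your write-up are correct as they stand.

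The one soft spot is (iv). The phrase ``being future-directed causal, is monotone enough that boundedness forces a limit'' is not an argument: the pulled-back curve $\exp_{\alpha(0)}^{-1}\circ\alpha$ takes values in the flat cone $J^+(0)$ pointwise, but that alone does not rule out several distinct accumulation points as $t\to b$. What does rule them out is exactly the machinery you have already set up: if $q$ and $q'$ are two subsequential limits of $\alpha(t)$ as $t\to b$, interleaving the corresponding parameter sequences and using closedness of $\le$ from (iii) gives $q\le q'$ and $q'\le q$; and in a totally normal $U$ the relation $\le$ is antisymmetric by (i), since $\overrightarrow{qq'}$ and $\overrightarrow{q'q}$ cannot both be future-directed causal unless $q=q'$ (the radial geodesic from $q'$ to $q$ is the reversal of the one from $q$ to $q'$). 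Hence all subsequential limits coincide and $\alpha$ extends continuously to $b$. With that replacement (or, alternatively, your fallback via the $C^{1,1}$ limit curve theorem, which also works), the proof is complete.
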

For a proof, see \cite[Prop.\ 3.15]{KSSV}.

\begin{Lemma}\label{IJlemma}
The relation $\ll$ is open. Moreover, for $A\subseteq U \subseteq M$, where $U$ is open, we have:
\begin{equation}
\begin{split}
I^+(A,U) &= I^+(I^+(A,U)) = I^+(J^+(A,U)) = J^+(I^+(A,U))\\ 
&\subseteq J^+(J^+(A,U)) = J^+(A,U)
\end{split}
\end{equation}
\end{Lemma}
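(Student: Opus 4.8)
\textbf{Proof proposal for Lemma \ref{IJlemma}.}

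The plan is to follow the standard smooth-case argument (as in \cite[Ch.\ 14]{ON83} or \cite[Sec.\ 2.4]{Chrusciel_causality}), checking that each step only uses facts already established for $C^{1,1}$-metrics in Theorem \ref{lcb}, Corollary \ref{lipisc1}, Proposition \ref{push-up1} and Proposition \ref{IJprop}. First I would prove that $\ll$ is open: given $p\ll q$, pick a future-directed timelike curve $c$ from $p$ to $q$ and choose a totally normal neighbourhood $W$ (Theorem \ref{totally}) around a point $c(t_0)$ in the interior of the curve; by Theorem \ref{lcb} applied in $W$, $I^+(c(t_0),W)$ and $I^-(c(t_0),W)$ are open, so every pair $(p',q')$ with $p'\in I^-(c(t_0))$, $q'\in I^+(c(t_0))$ still satisfies $p'\ll q'$ via concatenation (using Corollary \ref{lipisc1} to freely reparametrise/concatenate Lipschitz timelike curves). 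This gives an open neighbourhood of $(p,q)$ in $\ll$.

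Next, the chain of set identities. The inclusions $I^+(A,U)\subseteq J^+(A,U)$, $I^+(I^+(A,U))\subseteq I^+(A,U)$ (transitivity of $\ll$, from concatenation of Lipschitz causal curves within $U$), and $I^+(A,U)\subseteq I^+(J^+(A,U))$, $I^+(A,U)\subseteq J^+(I^+(A,U))$ are immediate. The reverse inclusions $I^+(J^+(A,U))\subseteq I^+(A,U)$ and $J^+(I^+(A,U))\subseteq I^+(A,U)$ are exactly the push-up principle: if $a\le b$ and $b\ll c$ (or $a\ll b\le c$) with all curves in $U$, then $a\ll c$ — this is Proposition \ref{push-up1}, applied on $U$ rather than $M$ (the statement is local in nature, or one intersects with $U$ and uses that $U$ is itself a spacetime). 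For $J^+(J^+(A,U)) = J^+(A,U)$ one needs transitivity of $\le$ within $U$, i.e.\ that the concatenation of two Lipschitz causal curves is again a Lipschitz causal curve, which is clear. Finally $I^+(A,U)\subseteq J^+(I^+(A,U))$ combined with $J^+(I^+(A,U))\subseteq I^+(A,U)$ closes the loop and pins all four middle sets equal to $I^+(A,U)$.

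The one genuinely non-formal point — and the place I expect to have to be careful — is the push-up step, because in smooth causality push-up is usually proved via the existence of nearby timelike curves obtained by deforming a causal curve, which in turn rests on convex-neighbourhood geometry; here we must instead invoke Proposition \ref{push-up1} as a black box (valid already for $C^{0,1}$, hence a fortiori $C^{1,1}$) and simply note it localises to $U$: cover the relevant causal/timelike curve by finitely many totally normal neighbourhoods and apply \ref{push-up1} in each, or observe that $(U,g|_U)$ is itself a $C^{1,1}$-spacetime so the global statement applies verbatim with $M$ replaced by $U$. Everything else is the routine bookkeeping of inclusions, using Corollary \ref{lipisc1} to move freely between Lipschitz, piecewise-$C^1$ and broken-geodesic representatives of causal curves whenever a concatenation argument is needed.
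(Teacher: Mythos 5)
The paper does not actually prove this lemma: it is one of the imported foundational facts, disposed of with the citation ``See \cite[Cor.\ 3.12, Cor.\ 3.13]{KSSV}.'' Your sketch reconstructs what is essentially the standard argument that that reference carries out, and it is sound: openness of $\ll$ via totally normal neighbourhoods and Theorem \ref{lcb}, the trivial inclusions from reflexivity/transitivity of $\le$ and $\ll$, and the two nontrivial reverse inclusions from the push-up principle, Proposition \ref{push-up1}, which (as you correctly note) applies verbatim on the open subspacetime $(U,g|_U)$. Your reading of the displayed chain with all the outer $I^+$, $J^+$ taken relative to $U$ is also the intended one, despite the paper's convention that an omitted second argument means $M$.

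One step to tighten: in your openness argument you place a single totally normal neighbourhood $W$ at an interior point $c(t_0)$ and then pass from the openness of $I^{\pm}(c(t_0),W)$ to the claim that $I^-(c(t_0))\times I^+(c(t_0))$ is an open neighbourhood of $(p,q)$. Openness of $I^{\pm}(c(t_0))$ in $M$ is not what Theorem \ref{lcb} gives directly (and is itself a slice of the statement being proved), and $I^{\pm}(c(t_0),W)$ need not contain $p$ or $q$. The standard fix is to localise at the endpoints instead: choose totally normal neighbourhoods $W_1\ni p$, $W_2\ni q$ and parameters $t_1<t_2$ with $c(t_1)\in W_1$, $c(t_2)\in W_2$; then $I^-(c(t_1),W_1)\times I^+(c(t_2),W_2)$ is open by Theorem \ref{lcb}, contains $(p,q)$, and consists of chronologically related pairs by concatenation along $c$. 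With that adjustment the proof is complete.
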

See \cite[Cor.\ 3.12, Cor.\ 3.13]{KSSV}.

\begin{Lemma}\label{easylemma}
Let $S\subseteq M$ be achronal. Then:
\begin{enumerate}
 \item[(i)] $S\subseteq D^{\pm}(S)\subseteq S\cup I^{\pm}(S)$
 \item[(ii)] $D^+(S)\cap I^-(S)=\emptyset$
 \item[(iii)] $D^+(S)\cap D^-(S)=S$
 \item[(iv)] $D(S)\cap I^{\pm}(S)=D^{\pm}(S)\setminus S$.
\end{enumerate}
\end{Lemma}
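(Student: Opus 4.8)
\textbf{Proof plan for Lemma \ref{easylemma}.}

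The plan is to establish the four parts essentially as in the smooth case (cf.\ \cite[Sec.\ 14]{ON83}), replacing every use of smooth-causality facts with the $C^{1,1}$-versions collected above (Theorem \ref{lcb}--Lemma \ref{IJlemma} and the push-up principle, Proposition \ref{push-up1}). For (i), let $p\in D^+(S)$. By definition every past-inextendible causal curve through $p$ meets $S$; running such a curve backwards from $p$ shows that either $p\in S$ or there is a past-directed causal curve from $p$ hitting $S$, i.e.\ $p\in S\cup J^+(S)$. To upgrade $J^+(S)$ to $I^+(S)$ here one has to rule out the possibility that $p$ is reached from $S$ only along null geodesics while $p\notin S$; the standard argument (any past-inextendible causal curve through a point of $S$ that is not purely null can be pushed below $S$, contradicting achronality) carries over, but one should be slightly careful that the relevant curves are Lipschitz — this is exactly the kind of point handled by Corollary \ref{lipisc1} and the push-up Proposition \ref{push-up1}. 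The inclusion $S\subseteq D^+(S)$ is immediate from the definition. The $D^-$-statement is the time-dual.

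For (ii), suppose $p\in D^+(S)\cap I^-(S)$. Then there is a future-directed timelike curve from $p$ to some $q\in S$; extend it to a past-inextendible causal curve $\gamma$ through $p$. By the definition of $D^+(S)$, $\gamma$ meets $S$ at some point $p'$ lying on the past part (before $p$). But then $p'<p\ll q$ with $p',q\in S$, so by push-up (Proposition \ref{push-up1}) $p'\ll q$, contradicting the achronality of $S$ (here one also needs the elementary fact that an inextendible causal curve cannot meet an achronal set twice with the second meeting to the timelike future of the first — again a push-up argument). This gives (ii), and its time-dual gives $D^-(S)\cap I^+(S)=\emptyset$.

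For (iii): clearly $S\subseteq D^+(S)\cap D^-(S)$ by the first statement of (i). Conversely, if $p\in D^+(S)\cap D^-(S)$ then by (i) applied to both, $p\in (S\cup I^+(S))\cap(S\cup I^-(S))$; if $p\notin S$ then $p\in I^+(S)\cap I^-(S)$, so there are $q_-,q_+\in S$ with $q_-\ll p\ll q_+$, whence $q_-\ll q_+$ by push-up, again contradicting achronality. Finally (iv): the inclusion $D^{\pm}(S)\setminus S\subseteq D(S)\cap I^{\pm}(S)$ follows from (i) (which puts $D^+(S)\setminus S$ inside $I^+(S)$) together with $D^+(S)\subseteq D(S)$; for the reverse, $D(S)\cap I^+(S) = (D^+(S)\cup D^-(S))\cap I^+(S)$, and the $D^-(S)\cap I^+(S)$ piece is empty by the time-dual of (ii), leaving $D^+(S)\cap I^+(S)\subseteq D^+(S)\setminus S$ since $S$ is achronal and hence disjoint from $I^+(S)$. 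The main obstacle is purely bookkeeping: making sure that each ``push a causal curve below $S$'' step is justified in the Lipschitz setting, which is guaranteed by the openness of $\ll$ and the push-up principle (Lemma \ref{IJlemma}, Proposition \ref{push-up1}) rather than by any smoothness of $g$; no genuinely new difficulty arises beyond what is already encapsulated in the cited $C^{1,1}$-results.
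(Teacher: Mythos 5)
Your proposal is correct and follows essentially the route the paper (tacitly) invokes: the paper dismisses this lemma with the remark that, as in the smooth case, it is immediate from the definitions, and your plan is precisely the standard O'Neill-style argument carried over to $C^{1,1}$ via the push-up principle (Proposition \ref{push-up1}) and openness of $\ll$ (Lemma \ref{IJlemma}). One small streamlining: for the inclusion $D^+(S)\subseteq S\cup I^+(S)$ in (i) there is no need to ``rule out'' the purely null case separately — simply run a past-inextendible \emph{timelike} curve from $p$ (such a curve exists through every point and is in particular causal), so it must meet $S$ at some $q$ with $q\ll p$ when $p\notin S$, giving $p\in I^+(S)$ directly.
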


\noindent
As in the smooth case, these properties are immediate from the definitions.
\begin{Lemma}\label{pastinex}
Let $S$ be a closed set and let $\gamma$ be a past inextendible causal curve
starting at $p$ that does not meet $S$. Then:
\begin{enumerate}
 \item[(i)] For any $q\in I^+(p,M\setminus S)$ there exists a past inextendible timelike
 piecewise geodesic $\tilde{\gamma}$ starting at $q$ that does not meet $S$;
 \item[(ii)] If $\gamma$ is not a null geodesic, there exists
 a past inextendible timelike piecewise geodesic $\tilde{\gamma}$ starting at $p$ that
 does not meet $S$.
\end{enumerate}
\end{Lemma}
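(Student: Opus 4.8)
\textbf{Proof proposal for Lemma \ref{pastinex}.}

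The plan is to mimic the classical smooth argument (as in \cite[Lemma 14.2 and its proof]{ON83} or \cite[Prop.\ 2.1.1]{Seno1}), replacing the piecewise-$C^1$ causal curve technology by the $C^{1,1}$-causality results already collected above, in particular Theorem \ref{totally} (existence of totally normal neighbourhoods), Proposition \ref{IJprop}, Lemma \ref{IJlemma} (openness of $\ll$ and the push-up identities), and Corollary \ref{lipisc1} (pasts/futures are unchanged if one restricts to broken geodesics). For (i), I would first cover the past inextendible causal curve $\gamma$ by a sequence of totally normal neighbourhoods: parametrise $\gamma:[0,\infty)\to M\setminus S$ (affinely, say by $h$-arclength on each piece, or just on $[0,b)$ with $b$ possibly infinite), pick a sequence of parameter values $0=t_0<t_1<t_2<\dots$ increasing to the endpoint of the domain, with $\gamma([t_k,t_{k+1}])$ contained in a totally normal neighbourhood $U_k$ that avoids $S$ (possible since $S$ is closed and $\gamma$ avoids it). Choosing the $t_k$ so that $\gamma(t_k)\to$ the past endpoint (or escapes every compact set, by past inextendibility) is where one must be slightly careful, but this is standard. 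The essential point is that $\gamma(t_k)\in I^-(\gamma(t_{k-1}),U_{k-1})$ whenever $\gamma$ is not null on $[t_{k-1},t_k]$ — and by subdividing further we may assume $\gamma$ is timelike on each piece, or use the push-up from Proposition \ref{push-up1}.

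Concretely, for (i): given $q\in I^+(p,M\setminus S)$, set $q_0:=q$; inductively, having found $q_{k-1}\in M\setminus S$ with $\gamma(t_{k-1})\ll q_{k-1}$ inside some neighbourhood avoiding $S$ (for $k=1$ this is the hypothesis $\gamma(t_0)=p\ll q_0=q$ in $M\setminus S$), apply the push-up: since $\gamma(t_k)\le \gamma(t_{k-1})\ll q_{k-1}$ and $S$ is avoided throughout a suitable relatively compact neighbourhood, Proposition \ref{push-up1} (applied in $M\setminus S$) gives $\gamma(t_k)\ll q_{k-1}$ in $M\setminus S$; then using openness of $\ll$ (Lemma \ref{IJlemma}) choose $q_k\in I^+(\gamma(t_k),M\setminus S)$ with $q_k$ close enough to $\gamma(t_k)$ that $q_k\in I^-(q_{k-1},M\setminus S)$ as well. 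By Corollary \ref{lipisc1}, each step $q_k\to q_{k-1}$ (and $q_0\to q$, $\gamma(t_k)\to q_k$ trivially) can be realised by a timelike \emph{broken geodesic} segment inside $M\setminus S$. Concatenating the segments $q_k\to q_{k-1}\to\dots\to q_0=q$ in reverse yields a past-directed timelike piecewise geodesic $\tilde\gamma$ from $q$; it avoids $S$ by construction, and it is past inextendible because it shadows $\gamma$: its $k$-th vertex $q_k$ lies near $\gamma(t_k)$, so $\tilde\gamma$ leaves every compact set (if $b=\infty$ or $\gamma$ is inextendible to the past for the non-imprisoning reason), hence cannot converge to an interior point.

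Part (ii) is the easy consequence: if $\gamma$ (starting at $p$, past inextendible, avoiding $S$) is not a null geodesic, then by Corollary \ref{boundary} there is some parameter subinterval on which $\gamma$ is not contained in the null boundary of any $J^+$ — equivalently, there is $s>0$ with $\gamma(s)\ll p$ inside a totally normal neighbourhood avoiding $S$, i.e.\ $p\in I^+(\gamma(s),M\setminus S)$. Now $\gamma|_{[s,\infty)}$ is itself a past inextendible causal curve starting at $\gamma(s)$ and avoiding $S$, and $p\in I^+(\gamma(s),M\setminus S)$, so part (i) applied to this curve and the point $p$ produces the desired past inextendible timelike piecewise geodesic $\tilde\gamma$ starting at $p$ and avoiding $S$. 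The main obstacle I anticipate is purely bookkeeping: arranging the cover $\{U_k\}$ and the vertices $q_k$ so that past inextendibility of $\tilde\gamma$ is genuinely inherited from that of $\gamma$ — one must ensure $\tilde\gamma$ does not accidentally become extendible by "falling short" of $\gamma$'s escape behaviour — which in the non-imprisoning/globally hyperbolic context of the applications is controlled exactly as in the smooth proof, using that $\gamma(t_k)$ eventually exits any fixed compact set.
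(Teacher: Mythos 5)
Your proposal is correct and follows essentially the same route as the paper: part (i) is the push-up/totally-normal-neighbourhood construction carried over from \cite[Lemma 14.30]{ON83}, and part (ii) is reduced to (i) via exactly the auxiliary fact the paper isolates as a separate lemma, namely that a past directed causal curve in $M\setminus S$ which is not a null geodesic has a point $\gamma(a)$ with $\gamma(a)\ll\gamma(0)$ in $M\setminus S$, proved using Theorem \ref{totally} and Corollary \ref{boundary}. The only cosmetic difference is that you state this fact directly while the paper derives it by contradiction (if no such point exists, each segment lies in $\partial J^-(\gamma(0),U_1)$ and is hence a null geodesic by Corollary \ref{boundary}).
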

The proof of the first statement carries over from the smooth case, see \cite[Lemma\ 14.30]{ON83},
using Proposition \ref{push-up1}. For the second
statement (to avoid the variational calculus-based proof of \cite[Lemma\ 14.30]{ON83}) 
we need the following argument:
\begin{Lemma}
Let $S$ be a closed set and let $\alpha: [0,\infty)\rightarrow M\setminus S$ be 
a past directed causal curve which is not a null geodesic. Then there exists $a>0$
such that $\alpha(a)\ll \alpha(0)$ (with $\ll$ the relation on $M\setminus S$).
\end{Lemma}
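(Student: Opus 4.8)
The plan is to show that if a past directed causal curve $\alpha:[0,\infty)\to M\setminus S$ fails to be a null geodesic, then it must ``bend'' enough at some point to produce a chronological relation between two of its points. First I would use the hypothesis that $\alpha$ is not a null geodesic to locate a parameter value where the curve deviates from being null-geodesic: either there is a point at which $\alpha$ is timelike (on a set of positive measure), or $\alpha$ is null almost everywhere but is not a (reparametrised) null geodesic on some subinterval. In the first case the argument is comparatively easy; in the second case we exploit the fact that within a totally normal neighbourhood (which exists by Theorem \ref{totally}) a causal curve whose endpoints lie on $\partial J^+$ must be a reparametrised null geodesic segment, by Corollary \ref{boundary}. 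So if $\alpha$ restricted to a small totally normal neighbourhood is not such a null geodesic, its endpoint cannot lie on the boundary $\partial J^+(\cdot,U)$ of the earlier one, and hence must lie in the chronological past.

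Concretely, I would proceed as follows. Pick a point $p_0=\alpha(t_0)$ near which $\alpha$ is not a null geodesic, and choose (Theorem \ref{totally}) a totally normal neighbourhood $U$ of $p_0$, together with parameters $s<t$ near $t_0$ with $\alpha([s,t])\subseteq U$ and such that $\alpha|_{[s,t]}$ is not (a reparametrisation of) a null pregeodesic. Since $\alpha$ is past directed causal, $\alpha(t)\in J^-(\alpha(s),U)$, i.e.\ $\alpha(s)\in J^+(\alpha(t),U)$. If $\alpha(s)\in \partial J^+(\alpha(t),U)$, then by Corollary \ref{boundary} (applied with base point $\alpha(t)$ to the curve $\alpha|_{[s,t]}$ run in the future direction) $\alpha|_{[s,t]}$ would be a reparametrised null geodesic segment, contradicting our choice. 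Hence $\alpha(s)\in J^+(\alpha(t),U)\setminus\partial J^+(\alpha(t),U)=I^+(\alpha(t),U)$, so $\alpha(t)\ll\alpha(s)$ within $U$, and a fortiori within $M\setminus S$ provided $U\subseteq M\setminus S$ — which we can arrange since $S$ is closed and $\alpha$ avoids $S$. Taking $a:=t$ (after noting $\alpha(0)\ge\alpha(s)\ge\alpha(t)$, so by the push-up principle Proposition \ref{push-up1} $\alpha(t)\ll\alpha(0)$) gives the claim.

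The one case that needs slightly more care is ensuring that ``$\alpha$ is not a null geodesic'' genuinely forces the existence of such a subinterval $[s,t]$. Here the point is that being a null geodesic is a local property: if $\alpha|_{[s,t]}$ were a reparametrised null geodesic segment for \emph{every} sufficiently short subinterval, then by patching these segments together (uniqueness of geodesics in totally normal neighbourhoods) $\alpha$ would be a global null pregeodesic, contrary to hypothesis. So there is at least one short subinterval, contained in a single totally normal neighbourhood, on which $\alpha$ is not a reparametrised null pregeodesic, and the argument above applies there. This patching step, though conceptually routine in the smooth setting, is the place where one has to be mindful that in $C^{1,1}$ regularity one must invoke the $C^{1,1}$ versions of existence and uniqueness of geodesics and of totally normal neighbourhoods (Theorem \ref{totally}) rather than their classical smooth counterparts.

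I expect the main obstacle to be precisely this reduction: carefully arguing that failure to be a null geodesic, a condition on the whole curve, localises to a subinterval lying in a single convex neighbourhood where Corollary \ref{boundary} can be brought to bear. Once that localisation is in hand, the rest is a direct application of Corollary \ref{boundary} together with the push-up principle, and is essentially identical to the smooth argument. A secondary technical point is to shrink $U$ so that $U\subseteq M\setminus S$, which is immediate from closedness of $S$ and continuity of $\alpha$.
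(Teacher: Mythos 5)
Your argument is correct and uses the same core tools as the paper: Theorem \ref{totally} (totally normal neighbourhoods), Corollary \ref{boundary} (causal curves reaching $\partial J^\pm(p,U)$ are null geodesics), and the push-up principle, Proposition \ref{push-up1}, applied in the open spacetime $M\setminus S$. The only structural difference is one of framing: the paper argues by contradiction --- assuming that no point of $\alpha$ is chronologically related to $\alpha(0)$ in $M\setminus S$ and then iterating Corollary \ref{boundary} across a chain of totally normal neighbourhoods to conclude that $\alpha$ is a null geodesic --- whereas you argue directly, first localising ``not a null geodesic'' to a subinterval contained in a single convex neighbourhood via the patching step you flag; the contradiction framing sidesteps the need to make that localisation step (which is the slightly delicate point you correctly identify) explicit.
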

%\begin{remark}
%Note that here by .  
%\end{remark}
\begin{proof}
Suppose to the contrary that there is no point on the curve $\alpha$ which can be
timelike related to $\alpha(0)$ within $M\setminus S$. Using Theorem \ref{totally} 
we can cover $\alpha$ by totally normal neighbourhoods 
$U_i$ with $U_i\subseteq M\setminus S$ since $M\setminus S$ is open. Let $t_0=0<t_1<t_2\ ...$
such that $\alpha|_{[t_i,t_{i+1}]}\subseteq U_{i+1}$. By our assumption, it follows that $\alpha|_{[t_0,t_1]}$
lies in $\partial J^-(\alpha(0),U_1)$. Hence, by Corollary \ref{boundary}, $\alpha|_{[t_0,t_1]}$
is a null geodesic. Iterating this
procedure we obtain that $\alpha$ is a null geodesic, a contradiction.
%Hence there exists a timelike curve $\tilde{\alpha}$ from $\alpha(0)$ to $\alpha(a)$, 
%for some $a>0$, which is entirely in $M\setminus S$. 
\end{proof}
Using this, the proof of Lemma \ref{pastinex} can be concluded as in \cite[Lemma\ 14.30]{ON83}.

\begin{Lemma}\label{lipdevelopment}
Let $S$ be a closed achronal hypersurface. Then the Cauchy development defined
with Lipschitz curves, $D^{+}(S)$, coincides with the one defined with piecewise
$C^1$-curves, $D^{+}_{C^1}(S)$.
\end{Lemma}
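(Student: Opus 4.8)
The plan is to show the two Cauchy developments coincide by proving mutual inclusion; one inclusion is trivial, and the other reduces to a curve-replacement argument that is already available in this regularity class.

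First I would observe the easy inclusion $D^+_{C^1}(S)\subseteq D^+(S)$: any piecewise $C^1$ curve is in particular locally Lipschitz, so if every past-inextendible piecewise $C^1$ causal curve through $p$ meets $S$, a fortiori this holds for the (a priori larger) class of locally Lipschitz causal curves — wait, this is backwards, so I must be careful about the direction of the quantifier. Membership in $D^+(S)$ is a \emph{stronger} condition, since it quantifies over \emph{more} curves (all Lipschitz causal curves, of which piecewise $C^1$ curves are a subclass). Hence $D^+(S)\subseteq D^+_{C^1}(S)$ is the trivial inclusion, and the content is the reverse: given $p\in D^+_{C^1}(S)$, we must show that \emph{every} past-inextendible Lipschitz causal curve $\gamma$ starting at $p$ meets $S$.

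So the key step is: suppose $p\in D^+_{C^1}(S)$ but there exists a past-inextendible Lipschitz causal curve $\gamma$ through $p$ that misses $S$; derive a contradiction by producing a past-inextendible piecewise $C^1$ (indeed piecewise-geodesic) causal curve through $p$ that also misses $S$. This is exactly what Lemma \ref{pastinex} provides, once we know $S$ is closed (which it is, being a closed achronal hypersurface) and distinguish two cases. If $\gamma$ is not a null geodesic, Lemma \ref{pastinex}(ii) directly gives a past-inextendible timelike piecewise-geodesic curve $\tilde\gamma$ starting at $p$ and avoiding $S$, contradicting $p\in D^+_{C^1}(S)$. If $\gamma$ \emph{is} a null geodesic, then $\gamma$ is already piecewise $C^1$ (in fact smooth), so it is itself an admissible competitor in the definition of $D^+_{C^1}(S)$ and must meet $S$ — contradiction. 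In both cases we reach the desired conclusion, so $p\in D^+(S)$.

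The main obstacle is ensuring that Lemma \ref{pastinex} is genuinely applicable here, i.e.\ that its hypotheses ($S$ closed, $\gamma$ past-inextendible causal, not a null geodesic) are met and that its proof — which relies only on Theorem \ref{totally}, Proposition \ref{push-up1}, Corollary \ref{boundary}, and the auxiliary lemma on timelike-relating a point on a non-null-geodesic causal curve — is valid in the $C^{1,1}$ (and even $C^{0,1}$) setting, as recorded above. Since all of these have been established for $C^{1,1}$-metrics, no further regularity input is needed; the argument is then a verbatim transcription of the smooth case. One should only take a moment to note that an inextendible causal curve remains inextendible after the reparametrisations used in Lemma \ref{pastinex}, and that "meets $S$" is reparametrisation-invariant, so the passage between the Lipschitz and piecewise-$C^1$ worlds is clean.
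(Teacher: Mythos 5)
Your proof is correct and takes essentially the same route as the paper's: the trivial inclusion $D^{+}(S)\subseteq D^{+}_{C^1}(S)$, followed by the dichotomy for a past-inextendible Lipschitz causal curve missing $S$ — either it is a (piecewise) null geodesic, hence itself piecewise $C^1$ and an admissible competitor, or it is not, in which case Lemma \ref{pastinex}(ii) yields a timelike piecewise-geodesic competitor missing $S$. The paper merely makes the dichotomy explicit via a covering by totally normal neighbourhoods and Corollary \ref{boundary}, which is the same argument.
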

\begin{proof}
Obviously, $D^{+}(S) \subseteq D^{+}_{C^1}(S)$. Now suppose there existed some $p\in D^{+}_{C^1}(S)\setminus D^{+}(S)$.
Then there would exist a past inextendible Lipschitz causal curve $\gamma$ from $p$ such that
$\gamma \cap S=\emptyset$. By Theorem \ref{totally}, we may cover $\gamma$ by totally normal neighbourhoods
$U_1,...,U_N,...$ such that $\gamma([s_i,s_{i+1}])\subseteq U_{i+1}$, $\forall i$. 
Then we distinguish two cases: If 
$\gamma([s_i,s_{i+1}])\subseteq \partial J^+(\gamma(s_i),U_i)$ for all $i$, 
then by Corollary \ref{boundary}
$\gamma$ is a piecewise null geodesic and therefore piecewise $C^{1}$, a contradiction.
The second possibility is that $\exists i,\ \exists t\in (s_i, s_{i+1})$ such that $\gamma(s_i)\ll \gamma(t)$.
But then Lemma \ref{pastinex} (ii) gives a contradiction.
\end{proof}

\begin{Lemma}\label{barD}
Let $S$ be a closed achronal set. Then $\overline{D^+(S)}$ is the set of all points
$p$ such that every past inextendible timelike curve through $p$ meets $S$.
\end{Lemma}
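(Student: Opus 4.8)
The plan is to prove the two inclusions separately; throughout, write $E$ for the set of points $p\in M$ such that every past inextendible timelike curve through $p$ meets $S$, so that the claim is $\overline{D^+(S)}=E$. No limit curve arguments are needed: the tools are openness of the chronological relation (Lemma \ref{IJlemma}), the push-up principle (Proposition \ref{push-up1}), and the upgrading of a past inextendible causal curve to a past inextendible timelike one furnished by Lemma \ref{pastinex}.

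For $\overline{D^+(S)}\subseteq E$ I would argue by contradiction. Suppose $p\in\overline{D^+(S)}$ but some past inextendible timelike curve $\gamma$ with $\gamma(0)=p$ satisfies $\gamma\cap S=\emptyset$. Pick a point $q:=\gamma(-\eps)$ slightly to the past of $p$; the timelike segment $\gamma|_{[-\eps,0]}$ avoids the closed set $S$, so it witnesses $p\in I^+(q,M\setminus S)$. Since $M\setminus S$ is open in $M$ and, by Lemma \ref{IJlemma}, chronological futures are open, $I^+(q,M\setminus S)$ is an open neighbourhood of $p$; as $p\in\overline{D^+(S)}$ it must meet $D^+(S)$, say at $p'$. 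Choose a timelike curve $\delta$ from $q$ to $p'$ contained in $M\setminus S$, and concatenate the reversal of $\delta$ (a past directed timelike curve from $p'$ to $q$) with the past inextendible remainder of $\gamma$ beyond $q$. This produces a past inextendible causal curve through $p'$ that misses $S$, contradicting $p'\in D^+(S)$.

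For $E\subseteq\overline{D^+(S)}$ I would likewise argue by contradiction: assume $p\in E$ but $p\notin\overline{D^+(S)}$. Since $S$ is achronal, $S\subseteq D^+(S)\subseteq\overline{D^+(S)}$ (Lemma \ref{easylemma}(i)), so $p\notin S$; moreover there is an open neighbourhood $U$ of $p$ with $U\cap D^+(S)=\emptyset$. Following a past directed timelike vector at $p$ a short distance gives a point $r$ lying in the open set $U\cap(M\setminus S)$ with $r\ll p$ along a timelike curve contained in $M\setminus S$, i.e.\ $p\in I^+(r,M\setminus S)$. As $r\in U$ we have $r\notin D^+(S)$, so by definition there is a past inextendible causal curve from $r$ missing $S$; Lemma \ref{pastinex}(i) then produces a past inextendible timelike (piecewise geodesic) curve through $p$ that misses $S$, contradicting $p\in E$.

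The only delicate point I anticipate is the bookkeeping about which curves avoid the closed set $S$: one must perform the push-up \emph{inside} the open spacetime $M\setminus S$, so that the relevant chronological future is genuinely an open neighbourhood of $p$ lying in $M\setminus S$, and one must verify that the concatenated curves remain past inextendible. All the underlying causality statements are already available in the $C^{1,1}$ setting (in fact for $C^{0,1}$ metrics) through Lemma \ref{pastinex}, Proposition \ref{push-up1} and Lemma \ref{IJlemma}, so no additional regularity input is required.
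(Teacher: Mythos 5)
Your proof is correct and follows essentially the same route the paper intends: the paper only cites O'Neill's Lemma 14.51 together with Lemma \ref{easylemma}(i) and Lemma \ref{pastinex}(i) as the key $C^{1,1}$-ingredients, and your two-direction argument (openness of $I^+(q,M\setminus S)$ plus concatenation for one inclusion, Lemma \ref{pastinex}(i) applied to a nearby point $r\ll p$ for the other) is precisely that standard proof written out. The only cosmetic remark is that Proposition \ref{push-up1} is not actually needed anywhere in your argument, since all curves you concatenate are already timelike.
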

This can be shown as in \cite[Lemma\ 14.51]{ON83}, using Theorem \ref{lcb},
Theorem \ref{totally}, Lemma 
\ref{easylemma} (i), and Lemma \ref{pastinex} (i).

\begin{Lemma}\label{Dboundary}
Let $S$ be a closed achronal set. Then $\partial D^{\pm}(S)=S\cup H^{\pm}(S)$.
\end{Lemma}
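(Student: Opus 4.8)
The plan is to establish the two set identities $\partial D^+(S)=S\cup H^+(S)$ and $\partial D^-(S)=S\cup H^-(S)$ simultaneously, by symmetry it suffices to treat the $+$ case. Throughout I would use that $S$ is closed and achronal, so Lemma~\ref{easylemma} gives $S\subseteq D^+(S)\subseteq S\cup I^+(S)$, and I would freely invoke Lemma~\ref{barD}, which characterises $\overline{D^+(S)}$ as the set of points through which every past inextendible \emph{timelike} curve meets $S$. The inclusion $H^+(S)\subseteq \partial D^+(S)$ is essentially immediate from the definition $H^+(S)=\overline{D^+(S)}\setminus I^-(D^+(S))$ together with the fact (Lemma~\ref{IJlemma}) that $\ll$ is open, hence $I^-(D^+(S))$ is open and contains the interior of $D^+(S)$; so points of $H^+(S)$ lie in $\overline{D^+(S)}$ but not in the interior, i.e.\ in $\partial D^+(S)$. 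Likewise $S\subseteq \partial D^+(S)$: $S\subseteq D^+(S)$, but no point of $S$ can be interior to $D^+(S)$ because any neighbourhood of $p\in S$ contains points of $I^-(p)$, and by achronality together with Lemma~\ref{easylemma}(ii) such points are not in $D^+(S)$. This gives $S\cup H^+(S)\subseteq \partial D^+(S)$.

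For the reverse inclusion, let $p\in \partial D^+(S)=\overline{D^+(S)}\setminus \mathrm{int}(D^+(S))$. First I would dispose of the case $p\in S$, which lands in $S\cup H^+(S)$ trivially, so assume $p\notin S$. Since $p\in \overline{D^+(S)}\subseteq \overline{S\cup I^+(S)}$ and $S$ is closed while $p\notin S$, one shows $p\in I^+(S)$ (using that $\overline{I^+(S)}\subseteq S\cup I^+(S)$, which follows from Lemma~\ref{IJlemma} applied locally and the achronality of $S$, exactly as in the smooth proof \cite[Lemma~14.51, Cor.~14.52]{ON83}). The goal is then to show $p\notin I^-(D^+(S))$, which by definition places $p$ in $H^+(S)$. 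Suppose for contradiction that $p\in I^-(D^+(S))$, so $p\ll q$ for some $q\in D^+(S)$. Since $\ll$ is open (Lemma~\ref{IJlemma}) and $q\in \overline{I^-(q)}$ in the appropriate local sense, one can perturb to get a whole neighbourhood $W$ of $p$ with $W\subseteq I^-(D^+(S))$; the point is then to upgrade this to $W\subseteq D^+(S)$, contradicting $p\in\partial D^+(S)$. This last step is where Lemma~\ref{barD} enters decisively: any $p'\in W$ satisfies $p'\ll q'$ for some $q'\in D^+(S)$, and I would run the standard argument (as in \cite[Lemma~14.51]{ON83}) showing that every past inextendible timelike curve through $p'$ can be pushed, via Proposition~\ref{push-up1} and Lemma~\ref{pastinex}, into one through $q'$ that meets $S$, hence $p'\in\overline{D^+(S)}$; combined with $p'\in I^-(D^+(S))\subseteq I^-(\overline{D^+(S)})$ and a local argument this forces $p'\in D^+(S)$.

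Concretely I would structure the write-up as: (1) note $\partial D^+(S)\subseteq \overline{D^+(S)}$ and split according to $p\in S$ or $p\in I^+(S)$; (2) for $p\in I^+(S)\cap\partial D^+(S)$, argue that if $p\in I^-(D^+(S))$ then a neighbourhood of $p$ lies in $D^+(S)$ — here cite Lemma~\ref{barD}, Lemma~\ref{pastinex}(i), Proposition~\ref{push-up1}, and Theorem~\ref{lcb} for the local causal structure — contradicting $p\in\partial D^+(S)$, hence $p\in H^+(S)$; (3) conversely show $S\subseteq\partial D^+(S)$ via achronality (Lemma~\ref{easylemma}(ii)) and $H^+(S)\subseteq\partial D^+(S)$ via openness of $I^-(D^+(S))$ (Lemma~\ref{IJlemma}). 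The main obstacle is step (2): making rigorous, at $C^{1,1}$ regularity, the passage ``$p\ll q\in D^+(S)$ $\Rightarrow$ a neighbourhood of $p$ is in $D^+(S)$''. In the smooth treatment this is where variational/limit-curve arguments can creep in, but here it should go through cleanly because Lemma~\ref{barD} already reduces $\overline{D^+(S)}$ to a statement about \emph{timelike} curves only, and all the ingredients (push-up, local normal neighbourhoods, the curve-perturbation Lemma~\ref{pastinex}) have been established for $C^{1,1}$-metrics earlier in the appendix. So the proof reduces to carefully citing these and observing that O'Neill's argument for \cite[Lemma~14.36]{ON83} uses nothing beyond them.
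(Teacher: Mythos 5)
Your overall architecture coincides with the one the paper intends (the paper simply points to \cite[Lemma 14.52]{ON83} together with Lemma \ref{easylemma}(i), Theorem \ref{lcb}, Proposition \ref{push-up1} and Lemma \ref{barD}), and the easy inclusion $S\cup H^+(S)\sse\partial D^+(S)$ is argued correctly. However, two steps in your converse direction do not hold as written. First, the auxiliary claim $\overline{I^+(S)}\sse S\cup I^+(S)$ is false in general: take $S$ a single point, or a compact piece of a spacelike hyperplane, in Minkowski space; the closure of $I^+(S)$ then contains null boundary points belonging to neither set. The correct way to obtain $p\in I^+(S)$ for $p\in\overline{D^+(S)}\setminus S$ is Lemma \ref{barD} itself: any past inextendible timelike curve from $p$ meets $S$, necessarily at a point different from $p$ since $p\notin S$, whence $p\in I^+(S)$.

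Second, your final upgrade is left hanging: proving that every past inextendible \emph{timelike} curve through $p'$ meets $S$ only returns $p'\in\overline{D^+(S)}$ (again by Lemma \ref{barD}), and ``a local argument then forces $p'\in D^+(S)$'' is precisely the point at issue. The clean closing move, which is what \cite[Lemma 14.52]{ON83} does and for which you already have every ingredient, is to show directly that $W:=I^+(S)\cap I^-(q)\sse D^+(S)$: for $p'\in W$ and any past inextendible \emph{causal} curve $\gamma$ from $p'$, prepend to $\gamma$ the past directed timelike segment from $q$ to $p'$; the concatenation is a past inextendible causal curve from $q\in D^+(S)$ and hence meets $S$, and achronality of $S$ together with $p'\in I^+(S)$ excludes that the intersection point lies on the prepended timelike piece (or equals $q$ or $p'$), so $\gamma$ itself meets $S$. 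Since $W$ is open (Theorem \ref{lcb}, Lemma \ref{IJlemma}) and contains $p$, this contradicts $p\in\partial D^+(S)$. With these two repairs your argument is exactly the one the paper refers to.
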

For a proof, follow that of \cite[Lemma\ 14.52]{ON83}, using Lemma \ref{easylemma} (i),
Theorem \ref{lcb}, Proposition \ref{push-up1} and Lemma \ref{barD}.

\begin{Lemma}\label{distancefnc} \
\begin{enumerate}
 \item[(i)] $d(p,q)>0$ if and only if $p\ll q$
 \item[(ii)] If $p\leq q \leq r$, then $d(p,q)+d(q,r)\leq d(p,q)$.
\end{enumerate}
 \end{Lemma}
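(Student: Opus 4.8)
The plan is to prove both statements by the smooth-case arguments (cf.\ \cite[Ch.\ 14]{ON83}), the only new point being that our causal curves are merely locally Lipschitz and that $d$ is defined as a supremum of $g$-lengths over such curves. (We note in passing that (ii) should read $d(p,q)+d(q,r)\le d(p,r)$.)

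For the implication $p\ll q\Rightarrow d(p,q)>0$ in (i), I would fix a future directed timelike curve $\gamma:[0,1]\to M$ from $p$ to $q$. By definition $g(\gamma'(t),\gamma'(t))<0$ for a.e.\ $t$, so the integrand $\|\gamma'(t)\|_g=\sqrt{-g(\gamma'(t),\gamma'(t))}$ is non-negative and strictly positive a.e.; hence $L_g(\gamma)=\int_0^1\|\gamma'(t)\|_g\,dt>0$ and $d(p,q)\ge L_g(\gamma)>0$. For the converse, assume $d(p,q)>0$. Then a future directed causal curve from $p$ to $q$ exists and, $d(p,q)$ being the supremum of their $g$-lengths, there is one, say $\gamma:[0,1]\to M$, with $L_g(\gamma)>0$. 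Exactly as in the proof of Proposition \ref{prop2.2}(i), a causal curve of positive length is not null, so there are $t_1<t_2$ with $\gamma|_{[t_1,t_2]}$ nowhere null, i.e.\ $g$-timelike; thus $\gamma(t_1)\ll\gamma(t_2)$. Since $p\le\gamma(t_1)$ and $\gamma(t_2)\le q$ via the remaining pieces of $\gamma$, two applications of the push-up principle (Proposition \ref{push-up1}) yield $p\ll q$.

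For (ii), fix $p\le q\le r$. By the definition of $\le$ (allowing the constant curve of length $0$ when two endpoints coincide) there are future directed causal curves from $p$ to $q$ and from $q$ to $r$. Given $\eps>0$, choose such curves $\gamma_1$, $\gamma_2$ with $L_g(\gamma_1)>d(p,q)-\eps/2$ and $L_g(\gamma_2)>d(q,r)-\eps/2$ (if one of the distances is $+\infty$, take the corresponding curve of arbitrarily large length and argue in $[0,\infty]$). The concatenation $\gamma_1\ast\gamma_2$ is again a future directed, locally Lipschitz causal curve from $p$ to $r$, and $g$-length is additive under concatenation, so $d(p,r)\ge L_g(\gamma_1)+L_g(\gamma_2)>d(p,q)+d(q,r)-\eps$. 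Letting $\eps\to 0$ gives $d(p,r)\ge d(p,q)+d(q,r)$.

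No step here is a genuine obstacle --- this is a routine adaptation of the smooth proof. The only two places where low regularity is actually felt are the measure-theoretic positivity of the $g$-length of a Lipschitz timelike curve and the extraction of a timelike subsegment from a causal curve of positive length, and the latter is already available from the argument in the proof of Proposition \ref{prop2.2}(i).
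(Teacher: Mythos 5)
Your proof is correct and matches the paper's approach exactly: the paper's entire argument for this lemma is the remark that it ``follows as in \cite[Lemma 14.16]{ON83}, using Proposition \ref{push-up1}'', and that is precisely what you spell out --- positivity of the $L^1$-integral of an a.e.\ positive integrand for the forward direction of (i), extraction of a timelike pair of points plus push-up for the converse, and concatenation of near-maximising curves for (ii) (where you are also right that the statement contains a typo and should read $d(p,q)+d(q,r)\le d(p,r)$). One caveat: the intermediate claim that a causal curve of positive length has a subinterval that is \emph{nowhere} null is not literally true (the a.e.-timelike set can have positive measure yet be nowhere dense), but the conclusion you actually need --- two timelike-related points on the curve --- is supplied by the unnamed lemma following Lemma \ref{pastinex} (via Corollary \ref{boundary}), and the paper uses the same imprecise phrasing in the proof of Proposition \ref{prop2.2}\,(i), so this does not affect the validity of your argument.
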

Using Proposition \ref{push-up1}, this follows as in \cite[Lemma\ 14.16]{ON83}. 
 
\begin{Lemma}\label{lsc}
$d$ is lower semi-continuous. 
\end{Lemma}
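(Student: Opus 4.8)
The plan is to carry over the classical argument for smooth metrics (e.g.\ \cite[Lemma~14.16]{ON83}), checking that every causal ingredient it uses has an available $C^{1,1}$-analogue. Recall that lower semi-continuity of $d$ means that for any sequences $p_k\to p$ and $q_k\to q$ in $M$ one has $\liminf_{k\to\infty} d(p_k,q_k)\ge d(p,q)$ (in $[0,\infty]$).

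First I would dispose of the case $d(p,q)=0$, where nothing is to be shown since $d\ge 0$. So assume $d(p,q)>0$; by Lemma~\ref{distancefnc}~(i) this means $p\ll q$. Fix any $\ell$ with $0<\ell<d(p,q)$ (if $d(p,q)=\infty$, let $\ell$ be an arbitrary positive number) and choose a future-directed timelike locally Lipschitz curve $\gamma\colon[0,1]\to M$ from $p$ to $q$ with $L(\gamma)>\ell$. Since $\gamma$ is timelike and locally Lipschitz its velocity is locally bounded and defined almost everywhere, so $\delta\mapsto L(\gamma|_{[\delta,1-\delta]})=\int_\delta^{1-\delta}\|\gamma'(s)\|_g\,ds$ is continuous; hence for $\delta>0$ small enough the points $p':=\gamma(\delta)$ and $q':=\gamma(1-\delta)$ still satisfy $L(\gamma|_{[\delta,1-\delta]})>\ell$, and therefore $d(p',q')>\ell$. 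Note also that $p\ll p'$, $q'\ll q$, and $p'\ll q'$ via the corresponding sub-arcs of $\gamma$.

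The crucial step is then openness: by Lemma~\ref{IJlemma} the relation $\ll$ is open, so $I^-(p')$ is an open neighbourhood of $p$ and $I^+(q')$ is an open neighbourhood of $q$. Consequently $p_k\in I^-(p')$ and $q_k\in I^+(q')$ for all sufficiently large $k$, i.e.\ $p_k\ll p'$ and $q'\ll q_k$, which together with $p'\ll q'$ gives the chain $p_k\le p'\le q'\le q_k$. Applying the reverse triangle inequality of Lemma~\ref{distancefnc}~(ii) twice, together with $d\ge 0$, we obtain for all such $k$
\[
 d(p_k,q_k)\ \ge\ d(p_k,p')+d(p',q')+d(q',q_k)\ \ge\ d(p',q')\ >\ \ell .
\]
Hence $\liminf_{k}d(p_k,q_k)\ge\ell$, and letting $\ell\nearrow d(p,q)$ yields $\liminf_{k}d(p_k,q_k)\ge d(p,q)$.

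I do not anticipate a genuine obstacle. The only points at which the low regularity enters are the continuity of the arc-length functional $\delta\mapsto L(\gamma|_{[\delta,1-\delta]})$ of the fixed timelike curve (immediate, since a locally Lipschitz curve has a locally bounded, almost everywhere defined velocity and $g$ is continuous), the openness of $\ll$, the characterisation $d(p,q)>0\Leftrightarrow p\ll q$, and the reverse triangle inequality --- the last three being precisely the $C^{1,1}$-statements already recorded in Lemma~\ref{IJlemma} and Lemma~\ref{distancefnc}. So beyond invoking these in place of their smooth counterparts, the argument is essentially word-for-word the standard one; the step requiring the most care is merely making sure all the causality inputs used are the $C^{1,1}$-versions rather than their smooth analogues.
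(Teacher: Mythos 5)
Your argument is correct and is essentially the proof the paper intends, which is simply to transcribe \cite[Lemma 14.17]{ON83} (note: 14.17, not 14.16) using the $C^{1,1}$-versions of the reverse triangle inequality (Lemma \ref{distancefnc}) and of the openness of chronological futures/pasts (Theorem \ref{lcb}, equivalently Lemma \ref{IJlemma}). The only step you gloss over is the passage from $d(p,q)>\ell$ to the existence of a \emph{timelike} curve of length $>\ell$: the definition of $d$ only supplies a \emph{causal} curve of that length, so one must first replace it by a timelike curve of at least the same length via the deformation argument used in the proof of Proposition \ref{prop2.2}(i) (resting on Theorem \ref{totally} and Propositions \ref{push-up1} and \ref{longest}), after which your choice of $p'$, $q'$ and the openness argument go through verbatim.
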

This can be proved following \cite[Lemma\ 14.17]{ON83}, 
using Lemma \ref{distancefnc} and Theorem \ref{lcb}.

\begin{Lemma}\label{edge}
Let $S\subseteq M$ be achronal. Then $\bar{S}\setminus S\sse \edge(S)$.
\end{Lemma}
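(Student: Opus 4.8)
The plan is to localise around $p$ using the bi-Lipschitz normal coordinates supplied by Theorem~\ref{lcb}, thereby reducing, in the tangent space at $p$, to an almost-flat situation that is resolved by an explicit timelike geodesic through $p$. Let $p\in\bar S\setminus S$ and let $U_0$ be any neighbourhood of $p$. Since $I^\pm(p,U)\subseteq I^\pm(p,U_0)$ whenever $U\subseteq U_0$, it suffices to produce, for one small totally normal neighbourhood $U$ of $p$ (Theorem~\ref{totally}), a timelike curve in $U$ from $I^-(p,U)$ to $I^+(p,U)$ that misses $S$. To this end I would fix a $g_p$-orthonormal basis $(e_0,\dots,e_{n-1})$ of $T_pM$ with $e_0$ future timelike, put $\tilde S:=\exp_p^{-1}(S\cap U)$, and let $\pi\colon T_pM\to\mathrm{span}(e_1,\dots,e_{n-1})$ be the projection along $e_0$. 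The key claim is that, after shrinking $U$, $\pi|_{\tilde S}$ is injective with Lipschitz inverse, so that $\tilde S$ is the graph of a continuous function $\phi$ over $\pi(\tilde S)$.

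Granting this claim the construction is immediate. Since $\exp_p$ is a homeomorphism we have $0=\exp_p^{-1}(p)\in\overline{\tilde S}\setminus\tilde S$; feeding a sequence $\tilde S\ni\tilde a_n\to 0$ into the graph description (its $\pi$-components tend to $0$ and its $e_0$-components tend to $0$) together with continuity of $\phi$ forces $0\notin\pi(\tilde S)$. Rescaling $e_0$ so that $\pm e_0\in\tilde U$, the geodesic $c(t):=\exp_p(t e_0)$, $t\in[-1,1]$, is then timelike (it is a geodesic, hence $C^2$ for a $C^{1,1}$ metric, with future-pointing $g_p$-timelike initial velocity, so it keeps this causal character), runs from $c(-1)\in I^-(p,U)$ to $c(1)\in I^+(p,U)$ by Theorem~\ref{lcb}, stays in $U$, and satisfies $\pi(\exp_p^{-1}(c(t)))=0\notin\pi(\tilde S)$; hence $c$ avoids $S$ and witnesses $p\in\edge(S)$.

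The heart of the argument — and the step I expect to be the main obstacle, being the place where achronality of $S$ enters and where one must check that the reduced regularity does no harm — is the graph claim. The mechanism is that for $a,b$ near $p$ the chronology relation inside $U$ is governed by the constant cone $g_p$ up to an arbitrarily small error: if the vector $\exp_p^{-1}(b)-\exp_p^{-1}(a)$ lies well inside the $g_p$-timelike cone, then by continuity of $g$ together with the $C^1$-dependence of the connecting geodesic on its endpoints (cf.\ \cite[Sec.\ 2]{KSS}, \cite[Th.\ 1.11]{M}) the vector $\exp_a^{-1}(b)$ is $g_a$-timelike, whence by Proposition~\ref{IJprop}(i) the points $a$ and $b$ are chronologically related in $U$. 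Achronality of $S$ therefore forbids two distinct points of $S\cap U$ from having $\exp_p^{-1}$-images whose difference is too close to the $e_0$-axis; fixing this ``too close'' threshold first and shrinking $U$ accordingly produces a constant $c_0>0$ with $\|\pi(\exp_p^{-1}a-\exp_p^{-1}b)\|\ge c_0\,\|\exp_p^{-1}a-\exp_p^{-1}b\|$ for all $a,b\in S\cap U$, which is precisely the asserted injectivity of $\pi|_{\tilde S}$ with Lipschitz inverse. The remaining verifications (that $c$ is future-directed timelike throughout, that continuity of $\phi$ yields $0\notin\pi(\tilde S)$, and that a timelike curve in $U$ also serves for $U_0$) are routine and parallel the smooth treatment in \cite[Ch.\ 14]{ON83}.
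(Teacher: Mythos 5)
Your argument is correct in outline, but it takes a substantially heavier route than the paper's, and it bypasses the one observation the paper's proof actually rests on. Following \cite[Cor.\ 14.26]{ON83}, the intended argument is: since $\ll$ is open (Lemma \ref{IJlemma}), the closure $\bar S$ of an achronal set is again achronal; hence for $p\in\bar S\setminus S$ and any neighbourhood $U$, \emph{every} point of $I^+(p,U)\cup I^-(p,U)$ is chronologically related to $p\in\bar S$ and therefore lies outside $\bar S\supseteq S$, while $p$ itself is not in $S$. Consequently \emph{any} timelike curve through $p$ contained in $U$ --- e.g.\ a radial timelike geodesic supplied by Theorem \ref{lcb} --- already runs from $I^-(p,U)$ to $I^+(p,U)$ and misses $S$; no structure of $S$ near $p$ is needed. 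Your proof instead establishes the local Lipschitz-graph description of $S$ near $p$ (which is really the content of Proposition \ref{topedge}) and then threads the $e_0$-axis through the complement of the graph. This does work, and the deduction $0\notin\pi(\tilde S)$ from $p\notin S$ is fine, but the ``graph claim'' you correctly flag as the crux needs a uniform two-point comparison --- $\exp_a^{-1}(b)$ close to $\exp_p^{-1}(b)-\exp_p^{-1}(a)$ for \emph{all} $a,b$ in a shrinking $U$ --- which in the $C^{1,1}$ setting is a genuine input from \cite{M,KSSV} rather than a consequence of ``$C^1$-dependence'': the map $(a,b)\mapsto\exp_a^{-1}(b)$ is a priori only Lipschitz here. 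In short, your approach buys an explicit geometric picture of $S$ near $p$ at the cost of importing the machinery behind Proposition \ref{topedge}; the paper's approach gets the lemma in two lines from openness of $\ll$ alone.
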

The proof can be carried out as indicated in the proof of \cite[Cor.\ 14.26]{ON83},
using Theorem \ref{lcb}, and the fact that the closure of any achronal set $S$ is achronal, 
which follows from Lemma \ref{IJlemma}.

\begin{Proposition}\label{topedge}
An achronal set $S$ is a topological hypersurface if and only if $S$ contains
no edge points.
\end{Proposition}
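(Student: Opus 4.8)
The plan is to follow the classical proof of this characterisation (as in the smooth case, cf.\ \cite{ON83,HE}), reducing everything to a local statement at a point $p\in S$, and to check that the only ingredients beyond achronality are the $C^{1,1}$-results already recorded: existence of totally normal neighbourhoods (Theorem \ref{totally}), the exponential description and openness of $I^\pm(p,U)$ (Theorem \ref{lcb}), the identification of radial timelike geodesics (Propositions \ref{longest} and \ref{IJprop}(i)), and the push-up principle (Proposition \ref{push-up1}). The two consequences of achronality of $S$ that I use repeatedly are that $I^\pm(p)\cap S=\emptyset$ for $p\in S$, and that no timelike curve meets $S$ more than once.

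For the implication ``$S$ contains no edge points $\Rightarrow$ $S$ is a topological hypersurface'', fix $p\in S$. Since $p\notin\edge(S)$ there is a neighbourhood $U_0$ of $p$ such that every timelike curve in $U_0$ running from $I^-(p,U_0)$ to $I^+(p,U_0)$ meets $S$. I would shrink to a totally normal $C\subseteq U_0$ carrying coordinates $(x^0,\dots,x^{n-1})$ in which $\partial_0$ is a future-directed timelike field, fix $a\ll p\ll b$ on the $\partial_0$-integral curve through $p$ with $a\in I^-(p,U_0)$ and $b\in I^+(p,U_0)$, and take a thin coordinate box $D=B\times(x^0(a),x^0(b))$ about $p$ (with $B$ an $(n-1)$-disc in the remaining coordinates) small enough that the $\partial_0$-integral segment $\gamma_q$ over each $q\in B$, from height $x^0(a)$ to height $x^0(b)$, stays inside $C$. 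Each $\gamma_q$ is timelike, and since its endpoints converge to $a$ resp.\ $b$ as $q\to p$ and $I^\pm(p,U_0)$ are open by Theorem \ref{lcb}, it runs from $I^-(p,U_0)$ to $I^+(p,U_0)$; hence it meets $S$, and by achronality in exactly one point, at some height $f(q)\in(x^0(a),x^0(b))$. This yields $S\cap D=\{x^0=f(x^1,\dots,x^{n-1})\}$, and continuity of $f$ again follows from achronality and openness of $I^\pm$: if $f(q_k)\to\ell\neq f(q_\infty)$ along a sequence $q_k\to q_\infty$ in $B$, the two points of $S$ lying over $q_\infty$ at heights $\ell$ and $f(q_\infty)$ would be chronologically related, so the points of $S$ over $q_k$ would eventually be chronologically related to one of them, contradicting achronality. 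Hence $S$ is locally the graph of a continuous function, i.e.\ a topological hypersurface, near $p$.

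For the converse, let $S$ be a topological hypersurface and $p\in S$; I must produce a neighbourhood $W$ of $p$ such that every timelike curve in $W$ from $I^-(p,W)$ to $I^+(p,W)$ meets $S$. Choose a slice chart $\phi\colon U\to\R^n$ with $\phi(p)=0$ and $\phi(U\cap S)=\phi(U)\cap\{x^0=0\}$, so that $U\setminus S=U^+\sqcup U^-$, and a totally normal $W\subseteq U$. By Theorem \ref{lcb}, $I^+(p,W)=\exp_p(I^+(0)\cap\tilde{W})$ is connected and, by achronality, disjoint from $S$, hence contained in one of $U^\pm$; say $I^+(p,W)\subseteq U^+$, and likewise $I^-(p,W)$ lies in $U^+$ or in $U^-$. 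The crux, and the step I expect to be the real obstacle, is to exclude the ``same-side'' case $I^-(p,W)\subseteq U^+$. Granting it, pick $q'\in I^-(p,W)$ and $q''\in I^+(p,W)$, so that $N:=I^+(q',W)\cap I^-(q'',W)$ is an open neighbourhood of $p$; since $p\in\overline{U^-}$, there is $r\in N\cap U^-$. The radial timelike geodesics in $W$ from $q'$ to $r$ and from $r$ to $q''$ (timelike by Proposition \ref{IJprop}(i)) each have the function $x^0\circ\phi$ changing sign along them, hence each meets $S$, at points $s_1$ and $s_2$ with $s_1\ll r\ll s_2$; then $s_1\ll s_2$ (Proposition \ref{push-up1}) are two distinct points of $S$, contradicting achronality. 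Therefore $I^-(p,W)\subseteq U^-$, and any timelike curve in $W$ from $I^-(p,W)\subseteq U^-$ to $I^+(p,W)\subseteq U^+$ must meet $S$, being a connected subset of $U$ that joins the two components of $U\setminus S$; thus $p\notin\edge(S)$.

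Apart from the ``same-side'' exclusion in the converse, the argument is the usual smooth one. Since all of the causality-theoretic inputs listed above are available for $C^{1,1}$-metrics, the proof carries over verbatim.
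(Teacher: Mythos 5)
Your proof is correct and follows essentially the same route as the paper, which simply defers to the classical argument of \cite[Prop.\ 14.25]{ON83} carried over verbatim using Theorem \ref{lcb} (openness of $I^\pm(p,U)$ and the exponential description of the local cones). Your graph construction in the forward direction and your ``same-side'' exclusion via two radial timelike geodesics crossing $S$ at chronologically related points are exactly the standard steps, and the $C^{1,1}$ ingredients you invoke (Theorems \ref{lcb}, \ref{totally}, Propositions \ref{IJprop} and \ref{push-up1}) are precisely those available in this setting.
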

For a proof, see \cite[Prop.\ 14.25]{ON83}, employing Theorem \ref{lcb}.

\begin{Corollary}\label{emptyedge}
An achronal set $S$ is a closed topological hypersurface if and only if $\edge(S)$
is empty.
\end{Corollary}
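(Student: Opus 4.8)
The plan is to combine the previous two results (Lemma \ref{edge} and Proposition \ref{topedge}) with elementary point-set topology. First I would dispose of the ``if'' direction: suppose $\edge(S)=\emptyset$. In particular $S$ contains no edge points, so by Proposition \ref{topedge} $S$ is a topological hypersurface. It remains to see that $S$ is closed. By Lemma \ref{edge} we have $\bar S\setminus S\subseteq \edge(S)=\emptyset$, hence $\bar S\subseteq S$, i.e.\ $S$ is closed. This direction requires essentially no work beyond citing the two lemmas.

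For the ``only if'' direction, assume $S$ is a closed topological hypersurface. Closedness gives $\bar S=S$. The key step is to show $\edge(S)\subseteq \bar S=S$: an edge point of $S$ (in the sense of \cite[Def.\ 14.24]{ON83}) is by definition a point of $\bar S$, so this inclusion is immediate. Therefore every edge point lies in $S$. Now invoke Proposition \ref{topedge} in the contrapositive form: since $S$ \emph{is} a topological hypersurface, it contains no edge points, i.e.\ $\edge(S)\cap S=\emptyset$. Combining $\edge(S)\subseteq S$ with $\edge(S)\cap S=\emptyset$ forces $\edge(S)=\emptyset$.

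The only mild subtlety — and the step I would watch most carefully — is the bookkeeping around the phrase ``contains no edge points'' in Proposition \ref{topedge}: one must be sure that that proposition is stated as ``$S$ is a topological hypersurface $\iff$ $\edge(S)\cap S=\emptyset$'' rather than ``$\iff \edge(S)=\emptyset$'', since in general $\edge(S)$ may contain points of $\bar S\setminus S$. Once this is kept straight, the corollary is purely formal: closedness of $S$ is exactly the extra hypothesis that upgrades $\edge(S)\cap S=\emptyset$ to $\edge(S)=\emptyset$, using Lemma \ref{edge} to control $\bar S\setminus S$. No regularity issues of the $C^{1,1}$ setting enter here beyond those already absorbed into Lemma \ref{edge} and Proposition \ref{topedge} (both of which rest on Theorem \ref{lcb}); the argument is then identical to the smooth case, cf.\ \cite[Cor.\ 14.27]{ON83}.
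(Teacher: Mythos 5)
Your proof is correct and follows exactly the route the paper intends: it is the standard argument of \cite[Cor.\ 14.26]{ON83}, combining Lemma \ref{edge} (to get closedness from $\edge(S)=\emptyset$, and to confine $\edge(S)$ to $S$ when $S$ is closed) with Proposition \ref{topedge}; your care about the distinction between ``$S$ contains no edge points'' and ``$\edge(S)=\emptyset$'' is precisely the right point to watch. The only nitpick is the reference: the relevant corollary in O'Neill is 14.26, not 14.27.
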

This can be seen as in \cite[Cor.\ 14.26]{ON83}, using
Lemma \ref{edge} and Proposition \ref{topedge}.

\begin{Lemma}\label{causalcauchyhyp}
Let $S\subseteq M$ be a Cauchy hypersurface. Then:
\begin{enumerate}
 \item[(i)] $S$ is a closed achronal topological hypersurface.
 \item[(ii)] Every inextendible causal curve intersects $S$.
\end{enumerate}
\end{Lemma}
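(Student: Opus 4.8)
\textbf{Proof proposal for Lemma \ref{causalcauchyhyp}.}

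The plan is to prove both assertions using the standard smooth arguments (as in \cite[Lemma 14.29]{ON83}), replacing each appeal to smooth causality by the corresponding $C^{1,1}$-result from this appendix. First I would establish that $S$ is achronal: if $p,q\in S$ with $p\ll q$, then picking an inextendible timelike curve through $q$ and prepending the timelike segment from $p$ to $q$ yields an inextendible timelike curve meeting $S$ at least twice (at $p$ and $q$, and these are distinct since $p\ll q$ implies $p\ne q$ by irreflexivity of $\ll$, which holds for $C^{1,1}$-metrics since chronology is locally modeled on Minkowski space by Theorem \ref{lcb}), contradicting the defining property of a Cauchy hypersurface. Closedness of $S$ then follows: if $p\in\bar S\setminus S$, Lemma \ref{edge} gives $p\in\edge(S)$, but one can produce an inextendible timelike curve through $p$ that meets $S$ exactly once or not in the prescribed way — more precisely, I would argue directly that a point of $\bar S\setminus S$ admits inextendible timelike curves meeting $S$ zero times \emph{and} others meeting it, contradicting "exactly once"; alternatively, since $S$ achronal and closed would then be a topological hypersurface precisely when $\edge(S)=\emptyset$ by Corollary \ref{emptyedge}, I first show $\bar S$ is still a Cauchy-type set and deduce $\edge(S)=\emptyset$, giving simultaneously closedness and the topological hypersurface property.

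For the topological hypersurface claim in (i), once $S$ is known to be closed and achronal it suffices by Corollary \ref{emptyedge} to show $\edge(S)=\emptyset$. Suppose $p\in\edge(S)$. By definition of edge, every neighbourhood $U$ of $p$ contains points $x\in I^+(p,U)$, $y\in I^-(p,U)$ that can be joined by a timelike curve in $U$ avoiding $S$. Extending such a curve to an inextendible timelike curve $\gamma$ and using that $\gamma$ must meet $S$, one locates the intersection point and then, by perturbing $\gamma$ near $p$ using the openness of $\ll$ (Lemma \ref{IJlemma}) together with Theorem \ref{lcb}, produces a nearby inextendible timelike curve that either misses $S$ or meets it twice — contradicting that $S$ is Cauchy. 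This is the familiar argument and it transfers verbatim once the convexity tools (Theorems \ref{lcb}, \ref{totally}) are available.

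For (ii), let $\gamma$ be an inextendible causal curve; I want to show it meets $S$. If $\gamma$ is timelike this is the hypothesis, so assume $\gamma$ is causal but not timelike, or has null segments. Using Theorem \ref{totally} and Proposition \ref{push-up1}, I would locally "push $\gamma$ to the chronological side": pick $p=\gamma(t_0)$ and, for slightly earlier and later parameters, points $q^-\ll p\ll q^+$ obtained from nearby timelike curves (here Lemma \ref{IJlemma} and Theorem \ref{lcb} give the needed interior points), producing an inextendible \emph{timelike} curve $\tilde\gamma$ that stays uniformly close to $\gamma$. Since $\tilde\gamma$ meets $S$, and $S$ is a closed topological hypersurface from part (i), a limiting argument — letting the timelike approximations converge back to $\gamma$ and using closedness of $S$ — shows $\gamma$ itself meets $S$. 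The main obstacle I anticipate is exactly this last limiting step: one must ensure the intersection points of the timelike approximants with $S$ do not escape to infinity along $\gamma$, which requires knowing that an inextendible causal curve cannot be confined to one side of $S$; this I would extract from the local structure near $S$ (every point of $S$ has a neighbourhood in which $S$ separates, by Theorem \ref{lcb} applied in a totally normal neighbourhood) combined with the achronality of $S$ to rule out re-crossings.
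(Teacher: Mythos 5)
Your treatment of part (i) is essentially on track: achronality follows exactly as you say, and reducing closedness plus the topological-hypersurface property to $\edge(S)=\emptyset$ via Corollary~\ref{emptyedge} is the route the paper (following O'Neill) takes. Even there, though, the "perturb $\gamma$ near $p$" language is misleading: the actual mechanism is not a perturbation but the achronality of $S$ combined with openness of $\ll$. Concretely, one extends the $S$-avoiding timelike curve from $x^-\in I^-(p,U)$ to $x^+\in I^+(p,U)$ to an inextendible timelike curve; the unique intersection with $S$ must lie in one of the two extensions, say at $q\ll x^-$; since $p\in\bar S$ and $\ll$ is open, $x^-\ll s$ for some $s\in S$, whence $q\ll s$ with $q,s\in S$, contradicting achronality. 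No perturbed curve enters.

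For part (ii) there is a genuine gap, and it is the one you half-anticipate but do not close. The strategy of pushing an inextendible causal curve $\gamma$ "to the chronological side" to obtain a nearby inextendible timelike curve $\tilde\gamma$, and then taking a limit as $\tilde\gamma\to\gamma$, does not work: there is no way to make the family of timelike approximants stay uniformly close to $\gamma$ globally, and even granting such a family, the points where they meet $S$ can escape every compact set, so the limiting argument produces nothing. Your proposed rescue (local separation of $S$ plus achronality "to rule out re-crossings") is circular — it presupposes that $\gamma$ gets near $S$, which is what one is trying to prove. The paper sidesteps all of this by invoking Lemma~\ref{pastinex}(i), the $C^{1,1}$-replacement of the smooth deformation lemma \cite[Cor.~14.27]{ON83}: assuming $\gamma$ misses $S$, fix $p$ on $\gamma$; by achronality $M$ is partitioned as $I^-(S)\cup S\cup I^+(S)$, so say $p\in I^+(S)$; the past branch of $\gamma$ from $p$ is past-inextendible causal and misses $S$, so Lemma~\ref{pastinex}(i) gives a past-inextendible \emph{timelike} curve from some $q\in I^+(p,M\setminus S)$ missing $S$; extend it arbitrarily to the future to an inextendible timelike curve, which must meet $S$, necessarily in the future of $q$; but $q\in I^+(S)$, so this violates achronality of $S$. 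Your proposal never invokes Lemma~\ref{pastinex}, which is precisely the ingredient the paper flags as essential here, and without it (or an equivalent) the argument for (ii) does not go through.
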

To show this one may follow \cite[Lemma 14.29]{ON83}, using
Lemma \ref{IJlemma} as well as Corollary \ref{emptyedge}, and Lemma \ref{pastinex} (i)
(replacing \cite[Cor.\ 14.27]{ON83}).

\begin{Lemma}\label{I-+}
Let $S$ be an achronal set and let $p\in D(S)^\circ$. Then every inextendible
causal curve through $p$ meets both $I^-(S)$ and $I^+(S)$.
\end{Lemma}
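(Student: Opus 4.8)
The plan is to run the proof of the corresponding smooth statement (as in \cite[Ch.~14]{ON83}), using the $C^{1,1}$-causality results collected above wherever smoothness is classically invoked. By the time-dual of the assertion (applied to the past-inextendible half of an inextendible causal curve, and to the time-reversed orientation) it suffices to show that every future-inextendible causal curve $\alpha$ with $\alpha(0)=p$ meets $I^+(S)$. If $p\in I^+(S)$ this is immediate. Otherwise, combining $p\in D(S)=D^+(S)\cup D^-(S)$ with $D^+(S)\cap D^-(S)=S$ and $D^+(S)\setminus S=D(S)\cap I^+(S)$ (all from Lemma~\ref{easylemma}) gives $p\in D^-(S)$; hence, by the definition of the past Cauchy development, the future-inextendible curve $\alpha$ meets $S$. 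Since $p\in D(S)^\circ$ and $I^{\pm}(S)$ are open, one checks with Lemma~\ref{easylemma}(iv) and push-up (Proposition~\ref{push-up1}) that in fact $p\in\mathrm{int}\,D^-(S)$, and that $\alpha$ stays in $\mathrm{int}\,D^-(S)$ -- hence, since $S\subseteq\partial D^-(S)$, in $I^-(S)$ -- until it first meets $S$, at a point $q=\alpha(s_0)$ which then lies in $S\cap D(S)^\circ$. (Here $S$ is taken closed, as it is in all applications in this paper, so that $s_0$ is well defined.)

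It remains to see that $\alpha$ enters $I^+(S)$ immediately after $q$, and this is the only place where the reduced regularity plays a role. The key fact is that $S\cap D(S)^\circ$ is an \emph{acausal} topological hypersurface and that $\partial I^+(S)\cap D(S)^\circ=S\cap D(S)^\circ$: granting this, for $s>s_0$ close to $s_0$ the point $\alpha(s)$ lies in $J^+(q)\cap D(S)^\circ$, hence not in $I^-(S)$ (by achronality) and not in $S$ (as acausality of $S\cap D(S)^\circ$ forbids a causal curve from $q$ to another point of $S\cap D(S)^\circ$), so that $\alpha(s)\in I^+(S)\cap D(S)^\circ$, as desired. The acausality and hypersurface statements for $S\cap D(S)^\circ$ -- equivalently, that no null geodesic issuing from $q\in S\cap D(S)^\circ$ can stay in $\partial I^+(S)$ -- are obtained exactly as in the smooth theory of domains of dependence: one uses that $\edge(S)\cap D(S)^\circ=\emptyset$ (an edge point of $S$ cannot lie in $D(S)^\circ$, since in any of its neighbourhoods there is a past-inextendible timelike curve missing $S$, contradicting $D(S)^\circ\subseteq D(S)$), the local description of $\partial J^+(p,U)$ and of causal curves reaching it from Theorem~\ref{lcb} and Corollary~\ref{boundary}, the convexity results of Theorem~\ref{totally}, push-up (Proposition~\ref{push-up1}), and the characterisation of topological hypersurfaces by absence of edge points (Proposition~\ref{topedge}, Lemma~\ref{edge}). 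The resulting contradiction shows that $\alpha$ meets $I^+(S)$, and the dual argument yields the meeting with $I^-(S)$.

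The main obstacle is precisely this last step. In the smooth case the null generators of achronal boundaries, and the hypersurface/acausality behaviour near non-edge points, are controlled via the exponential map and the Jacobi equation; what makes the argument survive at $C^{1,1}$-regularity is that the local normal-neighbourhood picture (Theorem~\ref{lcb}), the null-geodesic dichotomy for causal curves reaching $\partial J^+$ (Corollary~\ref{boundary}), the existence of totally normal neighbourhoods (Theorem~\ref{totally}), and the push-up principle (Proposition~\ref{push-up1}) are all available, so that the generator analysis can be carried out in purely causal-theoretic terms. Everything else in the argument is, as in \cite[Ch.~14]{ON83}, routine and insensitive to the regularity.
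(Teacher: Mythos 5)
Your skeleton (reduce to the future-inextendible piece, use Lemma~\ref{easylemma} to place $p$ in $D^-(S)$ so that $\alpha$ must meet $S$, then argue that $\alpha$ subsequently enters $I^+(S)$) is the right one, and everything up to the meeting point $q=\alpha(s_0)\in S$ is fine. The gap is in the last step: the claim that $S\cap D(S)^\circ$ is acausal --- and hence that $\alpha$ enters $I^+(S)$ \emph{immediately} after $q$ --- is false for a merely achronal $S$, and the Lemma is stated (and used, e.g.\ in the proofs of Theorem~\ref{A19} and Lemma~\ref{pastcompact}) for general achronal sets. Counterexample in two-dimensional Minkowski space: let $S=\{t=0,\,x\le 0\}\cup\{t=x,\,0\le x\le 1\}\cup\{t=1,\,x\ge 1\}$. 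This $S$ is closed and achronal, and every point of the open null segment lies in $D(S)^\circ$ (monotonicity of $t\pm x$ along causal curves forces every past-inextendible causal curve from a point just above the segment, and every future-inextendible one from a point just below, into $S$); yet $(0.2,0.2)$ and $(0.8,0.8)$ are causally related points of $S\cap D(S)^\circ$. For $p=q=(0.5,0.5)$ the null geodesic $\alpha(s)=(s,s)$ remains in $S$, hence outside $I^+(S)$, for all $s\le 1$, and only enters $I^+(S)$ after passing $(1,1)$. Note also that your proposed route to acausality cannot work in principle: absence of edge points (which does hold in this example) gives via Proposition~\ref{topedge} only that $S$ is a topological hypersurface near $q$, and an achronal topological hypersurface need not be acausal unless it is spacelike (Lemma~\ref{Sacausal}).

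What a proof must therefore handle is exactly the case in which $\alpha|_{[s_0,\infty)}$ is a null geodesic avoiding $I^+(S)\cup I^-(S)$ --- and hence, by Lemma~\ref{easylemma}(i), meeting $D(S)$ only in $S$ --- on an a priori unbounded parameter interval. This is why the intended argument (that of \cite[Lemma 14.37]{ON83}, with the ingredients listed after the Lemma: Theorem~\ref{lcb}, Lemma~\ref{easylemma} and the \emph{proof} of Lemma~\ref{pastinex}(i)) is global rather than local: one assumes $\alpha$ never meets $I^\pm(S)$, concatenates it with short timelike segments into nearby points of $D(S)^\circ$, and uses the deformation technique of Lemma~\ref{pastinex}(i) to manufacture an inextendible \emph{timelike} curve through a point of $D^\pm(S)$ that misses $S$, contradicting the definition of the Cauchy development; no acausality of $S$ is ever invoked. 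Your argument would be correct if the hypothesis were strengthened to ``$S$ acausal'', but as written it does not prove the stated Lemma.
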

The proof carries over from  \cite[Lemma 14.37]{ON83}, and uses
Theorem \ref{lcb}, Lemma \ref{easylemma}, as well as (the proof of) Lemma \ref{pastinex} (i).

\begin{Theorem}\label{A19}
Let $S$ be achronal. Then $D(S)^\circ$ is globally hyperbolic.
\end{Theorem}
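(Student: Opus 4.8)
The plan is to follow the standard proof that $D(S)^\circ$ is globally hyperbolic (as in \cite[Th.\ 14.38]{ON83}), checking that every ingredient used there is available in the $C^{1,1}$-setting from the results collected above. Recall that global hyperbolicity (in our sense) requires two things: that $D(S)^\circ$, as a spacetime in its own right, is strongly causal, and that $J^+(p)\cap J^-(q)$ (computed within $D(S)^\circ$) is compact for all $p,q\in D(S)^\circ$.

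First I would address strong causality. Since $M$ is a $C^{1,1}$-spacetime, strong causality of $M$ itself follows from the existence of a basis of totally normal (hence causally convex) neighbourhoods (Theorem~\ref{totally}), and strong causality is inherited by the open subset $D(S)^\circ$. Concretely, given $p\in D(S)^\circ$ and a neighbourhood $V$ of $p$, one picks a totally normal neighbourhood $U\subseteq V\cap D(S)^\circ$ of $p$ as in Theorem~\ref{lcb}; by Proposition~\ref{IJprop} a causal curve leaving and returning to a suitable sub-neighbourhood would violate the order relation being closed in $U\times U$, exactly as in the smooth argument.

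Next comes compactness of the causal diamonds, which is the heart of the matter. Here the key topological input is that $D(S)^\circ$ is covered by the two overlapping regions $D^+(S)^\circ$ and $D^-(S)^\circ$ together with a neighbourhood of $S\cap D(S)^\circ$, and the crucial dynamical fact that any inextendible causal curve through a point of $D(S)^\circ$ meets both $I^-(S)$ and $I^+(S)$ — this is precisely Lemma~\ref{I-+}, already established for $C^{1,1}$-metrics. Given $p,q\in D(S)^\circ$, one takes a sequence $r_k\in J^+(p)\cap J^-(q)$ and causal curves $\sigma_k$ through the $r_k$ from (near) $p$ to (near) $q$. By the limit curve machinery \cite[Th.\ 3.1]{limit} (valid for $C^{1,1}$, indeed continuous, metrics as noted in the proof of Lemma~\ref{lhbound}), a subsequence converges locally uniformly to a causal limit curve $\sigma$; the standard argument then shows $\sigma$ cannot be inextendible (else by Lemma~\ref{I-+} and achronality of $S$ it would have to cross $S$ twice, or leave $D(S)$, contradicting that the $r_k$ stay in the diamond), so $\sigma$ is a causal curve from $p$ to $q$ through the limit of the $r_k$, giving the desired accumulation point inside $J^+(p)\cap J^-(q)$. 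One must also verify that this limit point lies in $D(S)^\circ$ and not merely in $\overline{D(S)}$; this uses that $D(S)^\circ$ is itself the Cauchy development region, together with Lemma~\ref{easylemma} and the openness of $D(S)^\circ$ — one checks the limit point cannot be on $\partial D(S) = S\cup H(S)$ (Lemma~\ref{Dboundary}) because any such point on the causal curve between $p$ and $q$ would force one of $p,q$ out of $D(S)^\circ$.

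The main obstacle I anticipate is not any single $C^{1,1}$-specific pathology — all the low-regularity tools (limit curves, totally normal neighbourhoods, the push-up principle, closedness of $\le$ locally) are in hand — but rather the bookkeeping of \emph{where} the limit curve and its points lie relative to the various sets $D^\pm(S)$, $H^\pm(S)$, $I^\pm(S)$, i.e.\ making the argument ``the diamond stays in the open Cauchy development'' fully rigorous. In the smooth case \cite[Th.\ 14.38]{ON83} this is handled by a careful case analysis using the Cauchy-horizon structure; the same analysis applies verbatim here once one invokes Lemma~\ref{I-+}, Lemma~\ref{Dboundary}, Lemma~\ref{barD}, and Lemma~\ref{easylemma} in place of their smooth counterparts. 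I would therefore structure the proof as: (1) strong causality of $D(S)^\circ$ from Theorem~\ref{totally}; (2) reduce compactness of diamonds to a limit-curve argument; (3) run the limit-curve argument using \cite[Th.\ 3.1]{limit} and Lemma~\ref{I-+} to rule out inextendibility; (4) conclude the limit point lies in the open development using Lemmas~\ref{Dboundary}, \ref{barD}, \ref{easylemma}.
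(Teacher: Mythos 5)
Your overall strategy coincides with the paper's: both simply port \cite[Th.\ 14.38]{ON83} to the $C^{1,1}$ setting, replacing the smooth ingredients by Theorem~\ref{lcb}, Theorem~\ref{totally}, Proposition~\ref{push-up1}, Proposition~\ref{IJprop}, Lemma~\ref{easylemma} and, crucially, Lemma~\ref{I-+}, with the limit-curve machinery (O'Neill's quasi-limits, or equivalently \cite[Th.\ 3.1]{limit}, which the paper already uses in Lemma~\ref{lhbound}) supplying the compactness of causal diamonds. Your steps (2)--(4) are essentially the paper's argument.

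However, your step (1) contains a genuine error. You claim that strong causality of $M$ ``follows from the existence of a basis of totally normal neighbourhoods'' and is then inherited by the open subset $D(S)^\circ$. This is false: every $C^{1,1}$-spacetime has a basis of totally normal neighbourhoods by Theorem~\ref{totally}, including spacetimes with closed timelike curves (G\"odel spacetime is the standard example), so convexity alone cannot yield strong causality. Theorem~\ref{A19} makes \emph{no} causality assumption on $M$ beyond achronality of $S$; strong causality of $D(S)^\circ$ is therefore a substantive conclusion, not an inherited property. Your fallback argument also fails: Proposition~\ref{IJprop}(iii) concerns the \emph{intrinsic} relation $\le$ of a totally normal $U$, and says nothing about a causal curve that leaves $U$, travels far through $M$, and returns --- which is exactly the configuration that strong causality must exclude. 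The correct argument (as in \cite[Th.\ 14.38]{ON83}) runs the same limit-curve construction you use for compactness: if strong causality failed at $p\in D(S)^\circ$, one would obtain causal curves starting and ending arbitrarily close to $p$ but escaping a fixed convex neighbourhood; their (quasi-)limit is an inextendible causal curve through $p$, which by Lemma~\ref{I-+} must meet both $I^+(S)$ and $I^-(S)$, and combining this with the almost-closed curves and Proposition~\ref{push-up1} produces a timelike curve from $I^+(S)$ back to $S$ (or two crossings of $S$), contradicting achronality. So strong causality needs Lemma~\ref{I-+} and the achronality of $S$ just as much as the diamond-compactness does; with that repair your proof matches the paper's.
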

The proof can be done following \cite[Th.\ 14.38]{ON83}. The constructions 
used there (limit sequences of causal curves and their
properties, existence of convex refinements of open coverings) 
all carry over to the $C^{1,1}$-setting, using 
Theorems \ref{lcb}, \ref{totally}, Propositions \ref{push-up1}, \ref{IJprop}, and 
Lemmas \ref{easylemma}, \ref{I-+}.

\begin{Proposition}\label{dopen}
Let $S$ be a closed acausal topological hypersurface. Then $D(S)$ is open.
\end{Proposition}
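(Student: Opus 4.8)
The plan is to show that every point $p\in D(S)$ has a neighbourhood contained in $D(S)$, following the smooth argument in \cite[Lemma 14.43]{ON83}, but substituting the $C^{1,1}$-results already collected above wherever $C^2$-causality was invoked. First I would dispose of the case $p\in S$: since $S$ is a closed acausal topological hypersurface, it has empty edge by Corollary \ref{emptyedge}, and then by the local structure of edgeless achronal sets provided through Proposition \ref{topedge} and Theorem \ref{lcb} one obtains a small totally normal neighbourhood $U$ of $p$ (Theorem \ref{totally}) in which $S$ separates $U$ into its future and past parts $I^\pm(p,U)$-type regions; a short argument using Lemma \ref{easylemma} and Lemma \ref{I-+} then shows $U\subseteq D(S)$. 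This is essentially the ``local Cauchy'' behaviour near $S$ and should be routine given the collected lemmas.

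For the main case $p\in D(S)^\circ$ or, more carefully, $p\in D(S)\setminus S$, I would argue by contradiction: suppose there is a sequence $p_k\to p$ with $p_k\notin D(S)$. Without loss of generality $p\in D^+(S)$, so $p\in I^+(S)$ by Lemma \ref{easylemma}(i), and after passing to a subsequence we may assume each $p_k\in I^+(S)$ as well (using that $I^+(S)$ is open, Lemma \ref{IJlemma}). Since $p_k\notin D^+(S)$, for each $k$ there is a past-inextendible causal curve $\gamma_k$ from $p_k$ that avoids $S$; by Lemma \ref{pastinex}(i) we may take $\gamma_k$ to be a past-inextendible \emph{timelike} piecewise geodesic missing $S$ (here is where one needs $S$ closed, which holds). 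Now apply the limit curve machinery — available in the $C^{1,1}$-setting as noted in the proof of Lemma \ref{lhbound}, via \cite[Th.\ 3.1]{limit} / \cite[Th.\ 1.6]{CG} — to extract a subsequence of the $\gamma_k$ converging locally uniformly to a past-inextendible causal curve $\gamma$ through $p$. Since $p\in D^+(S)$, this limit curve $\gamma$ must meet $S$, say at $\gamma(t_0)\in S$ with $t_0>0$.

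The crux is to derive a contradiction from the fact that the approximating curves $\gamma_k$ miss $S$ while their limit meets it. Here I would use that $S$ is a topological hypersurface to fix a totally normal neighbourhood $W$ of $\gamma(t_0)$ (Theorem \ref{totally}) in which $S\cap W$ is an achronal topological hypersurface separating $W$; pick a point $\gamma(t_1)$ with $t_1<t_0$ lying in $I^+(S,W)$ (possible since $\gamma(t_1)\in I^+(S)$, using push-up, Proposition \ref{push-up1}, and the local description of $I^+$ from Theorem \ref{lcb}), and a point $\gamma(t_2)$ with $t_2>t_0$ lying in $I^-(S,W)\cap D^-(S)$ — such a point exists because $p\in D^+(S)$ forces the past-continuation of $\gamma$ into $D^-(S)\cap I^-(S)$ near where it crosses, by Lemma \ref{easylemma}(iii)-(iv). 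By local uniform convergence, for large $k$ the curve $\gamma_k$ also has a point in $I^+(S,W)$ and a (later, past-directed) point in $I^-(S,W)$; since $\gamma_k$ is a causal curve passing from the future side of the topological hypersurface $S\cap W$ to its past side, it must meet $S\cap W$, contradicting $\gamma_k\cap S=\emptyset$. \textbf{The main obstacle} I anticipate is making the ``separation'' step fully rigorous in low regularity: one needs that near any of its points an edgeless achronal set behaves like a two-sided topological hypersurface with respect to the $C^{1,1}$-causal structure, i.e.\ that a causal curve crossing from $I^+(S)$ to $I^-(S)$ locally must hit $S$. This is exactly the content one extracts from Theorem \ref{lcb}, Proposition \ref{topedge} and Corollary \ref{emptyedge}, but the bookkeeping — ensuring the ``future side'' and ``past side'' are correctly identified and that the limit curve's crossing is genuinely interior to the chosen neighbourhood — requires care; everything else transcribes directly from \cite{ON83}.
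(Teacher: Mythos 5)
Your strategy coincides with the paper's, which gives no details beyond the instruction to transcribe \cite[Lemma 14.43]{ON83} using Theorem \ref{lcb}, Lemma \ref{easylemma}, Proposition \ref{push-up1}, Proposition \ref{IJprop} and Proposition \ref{topedge}: your contradiction argument (inextendible causal curves from $p_k\notin D(S)$ missing $S$, a limit causal curve through $p$ that must meet $S$, and the local separation property of the closed acausal topological hypersurface forcing the $\gamma_k$ to cross $S$ for large $k$) is precisely that transcription, with Proposition \ref{IJprop} (iii)--(iv) supplying what the limit-curve/quasi-limit construction needs in the $C^{1,1}$ setting. The one point to repair is your appeal to Lemma \ref{I-+} in the case $p\in S$, which is circular because that lemma presupposes $p\in D(S)^\circ$; this case is instead covered by running the same limit-curve argument at $p\in S\subseteq D^+(S)\cap D^-(S)$, so nothing essential is missing.
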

This proposition can be proved following \cite[Lemma\ 14.43]{ON83}, using
Theorem \ref{lcb}, Lemma \ref{easylemma},
Proposition \ref{push-up1}, Proposition \ref{IJprop} and Proposition \ref{topedge}.

\begin{Proposition}\label{horizon}
Let $S$ be a closed acausal topological hypersurface. Then
\begin{enumerate}
 \item[(i)] $H^+(S)=I^+(S)\cap \partial D^+(S)=\overline{D^+(S)}\setminus D^+(S)$;
 \item[(ii)] $H^+(S)\cap S=\emptyset$
 \item[(iii)] $H^+(S)$ is generated by past inextendible null geodesics that are 
 entirely contained in $H^+(S)$.
\end{enumerate}
\end{Proposition}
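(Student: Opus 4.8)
The plan is to follow the structure of the corresponding smooth result \cite[Prop.\ 14.53]{ON83}, checking that each ingredient has a $C^{1,1}$-analogue available in this appendix. Throughout, $S$ is a closed acausal topological hypersurface, so that by Proposition \ref{dopen} the set $D(S)$ is open, and by Theorem \ref{A19} it is globally hyperbolic; moreover by Lemma \ref{Dboundary} we have $\partial D^+(S)=S\cup H^+(S)$, and by Lemma \ref{easylemma}(i), $D^+(S)\subseteq S\cup I^+(S)$.

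For (i): the inclusion $H^+(S)=\overline{D^+(S)}\setminus I^-(D^+(S))$ being contained in $\overline{D^+(S)}\setminus D^+(S)$ follows because $D^+(S)\setminus S\subseteq I^-(D^+(S))$ (using Proposition \ref{dopen}: an interior point of $D^+(S)$ has points of $D^+(S)$ to its future) while $S\cap H^+(S)=\emptyset$ will be (ii); conversely, if $p\in\overline{D^+(S)}\setminus D^+(S)$ then $p\notin S$ (since $S\subseteq D^+(S)$), so by Lemma \ref{easylemma}(i) applied to the closure, $p\in I^+(S)$, and $p\notin I^-(D^+(S))$ because $I^-(D^+(S))\cap\overline{D^+(S)}\subseteq D^+(S)$ — this last step is the push-up argument of \cite[Lemma\ 14.53]{ON83} and uses Proposition \ref{push-up1} together with Lemma \ref{barD}. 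The identification with $I^+(S)\cap\partial D^+(S)$ then follows from Lemma \ref{Dboundary} and (ii). For (ii), suppose $p\in H^+(S)\cap S$; since $S$ is acausal and $p\in I^+(S)$ by the above, we would get a timelike curve from a point of $S$ to $p\in S$, contradicting acausality — so (ii) reduces to the already-established fact $H^+(S)\subseteq I^+(S)$.

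For (iii) — the substantive part — one argues as in \cite[Prop.\ 14.53]{ON83}. Given $q\in H^+(S)$, pick a sequence $q_k\to q$ in $D^+(S)$; through each $q_k$ there is a past inextendible causal curve meeting $S$, and using the limit curve theorem for $C^{1,1}$-metrics (as quoted in the footnote to Lemma \ref{lhbound}, via \cite[Th.\ 1.6]{CG}) one extracts a past inextendible limit causal curve $\lambda$ through $q$. This $\lambda$ stays in $\overline{D^+(S)}$ by Lemma \ref{barD} applied appropriately, but cannot enter $D^+(S)$ (else $q$ would be an interior point since $D(S)$ is open, contradicting $q\in H^+(S)=\overline{D^+(S)}\setminus D^+(S)$), so $\lambda\subseteq H^+(S)$. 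It remains to show $\lambda$ is a null geodesic: if some sub-arc of $\lambda$ were not a null geodesic, then by Proposition \ref{push-up1} (push-up) we would find $\lambda(s_2)\in I^-(\lambda(s_1))$ for the appropriate past parameters, forcing points of $H^+(S)$ into the chronological past of points of $\overline{D^+(S)}$; but $H^+(S)\cap I^-(D^+(S))=\emptyset$ by definition and a short push-up argument rules out $H^+(S)\cap I^-(\overline{D^+(S)})$ as well — this contradiction, carried out locally in a totally normal neighbourhood via Corollary \ref{boundary}, is exactly the mechanism used in the last lemma preceding this proposition (the one showing a non-null causal curve in $M\setminus S$ eventually chronologically relates to its start). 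The reparametrisation of $\lambda$ as a single null geodesic then follows piecewise from Corollary \ref{boundary} and a standard patching argument.

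The main obstacle is the geodesic property in (iii): it is the one place where the variational/convexity machinery of the smooth theory is invoked, and we must instead route everything through the $C^{1,1}$ push-up principle (Proposition \ref{push-up1}), the normal-neighbourhood boundary structure (Corollary \ref{boundary}, Theorem \ref{lcb}), and the $C^{1,1}$ limit curve theorem. Once these substitutions are in place the argument is the standard one and presents no further difficulty.
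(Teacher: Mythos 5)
Your treatment of (i) and (ii) matches the intended route (the paper only lists the ingredients, following \cite[Prop.\ 14.53]{ON83}: Lemma \ref{Dboundary}, Lemma \ref{easylemma}, Lemma \ref{barD}, Proposition \ref{dopen} and Proposition \ref{push-up1}), modulo some circularity in the order in which you invoke (ii) inside (i), which can be untangled. The problem is in (iii), which you rightly identify as the substantive part but then construct incorrectly.

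Your limit-curve construction of the generator does not work. First, the justification that the limit curve $\lambda$ ``cannot enter $D^+(S)$ (else $q$ would be an interior point)'' reverses the causal direction: $D^+(S)$ is not a future set, so a point of $D^+(S)$ lying in the causal \emph{past} of $q$ says nothing about $q$ being interior to $D^+(S)$. Second, each of your approximating curves through $q_k\in D^+(S)$ meets the closed set $S$, so the limit curve has no reason to avoid $S$ --- yet by (ii) a generator of $H^+(S)$ must avoid $S$ entirely. The correct (and simpler) construction, which is the one the paper intends and which is where Lemma \ref{pastinex}\,(ii) enters, is this: by part (i), $q\in H^+(S)$ implies $q\notin D^+(S)$, so there \emph{already exists} a past inextendible causal curve $\beta$ from $q$ with $\beta\cap S=\emptyset$; no limit argument is needed. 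If $\beta$ were not a null geodesic, Lemma \ref{pastinex}\,(ii) would produce a past inextendible \emph{timelike} curve from $q$ missing $S$, contradicting $q\in\overline{D^+(S)}$ via the characterisation in Lemma \ref{barD}. Hence $\beta$ is (a reparametrisation of) a null geodesic. Containment of $\beta$ in $H^+(S)$ then follows pointwise: $\beta(s)\in\overline{D^+(S)}$ by Lemma \ref{barD} together with Lemma \ref{pastinex}\,(i) and Proposition \ref{push-up1} (a past inextendible timelike curve from $\beta(s)$ missing $S$ would yield one from $q$ missing $S$), while $\beta(s)\notin D^+(S)$ because $\beta|_{[s,\infty)}$ misses $S$; now apply (i). Your closing remarks about Corollary \ref{boundary} and the push-up mechanism point at the right tools, but they are deployed on the wrong curve.
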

The proof can be done combining \cite[Prop.\ 14.53]{ON83} and Theorem \ref{lcb}, 
Lemmas \ref{easylemma}, \ref{pastinex} (ii),
\ref{barD}, \ref{Dboundary}, and Proposition \ref{dopen}.

\begin{Lemma}\label{locallycauchy}
Let $S$ be a spacelike hypersurface and let $p\in S$. Then there exists a neighborhood
$V$ of $p$ such that $V\cap S$ is a Cauchy hypersurface in $V$.
\end{Lemma}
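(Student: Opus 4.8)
The plan is to carry over the classical construction (see, e.g., \cite[Lemma 14.40]{ON83}) of a small coordinate ``double cone'' around $p$ adapted to $S$, and to check that nothing in the argument is obstructed by the metric being merely $C^{1,1}$. First I would fix coordinates $(\xi^0,\dots,\xi^{n-1})$ on a neighbourhood $U_0$ of $p$, contained in a totally normal neighbourhood (Theorem \ref{totally}), with $\xi(p)=0$ and $S\cap U_0=\{\xi^0=0\}$. Since $S$ is $g$-spacelike, $T_pS=\ker d\xi^0|_p$ is $g$-spacelike, hence $\grad^g\xi^0|_p$ (being $g$-orthogonal to $T_pS$) is $g$-timelike; replacing $\xi^0$ by $-\xi^0$ if necessary it is past-pointing, and since $g^{-1}$ is continuous so is $\grad^g\xi^0=g^{0\mu}\partial_\mu$, which therefore stays past-pointing timelike on a smaller neighbourhood. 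The same spacelikeness of $T_pS$ together with continuity of $g$ yields, after shrinking once more, a constant $L$ with $v^0\ne0$ and $|(v^1,\dots,v^{n-1})|\le L|v^0|$ for every nonzero $g$-causal $v$ in this neighbourhood. Consequently, along every future-directed $g$-causal curve $\gamma$ there, the Lipschitz function $\xi^0\circ\gamma$ is \emph{strictly} increasing (a.e.\ $\tfrac{d}{dt}\xi^0(\gamma(t))=g(\grad^g\xi^0,\gamma'(t))>0$, the $g$-product of a past timelike with a future causal vector being strictly positive), and the functions $\xi^i\circ\gamma$ are $L$-Lipschitz with respect to $\xi^0\circ\gamma$; in particular such a curve meets $\{\xi^0=0\}$ at most once.

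I would then set $V:=\{|\xi^0|<\delta,\ |\xi^i|<\rho-\mu|\xi^0|\ (1\le i\le n-1)\}$ with $\mu>L$ and $\delta,\rho>0$ chosen small enough that $\overline V$ lies inside the neighbourhood just constructed. Then $V$ is open, $V\cap S=\{\xi^0=0\}\cap V$ is a spacelike hypersurface of $V$, and the lateral boundary of $V$ has spatial slope $\mu$, strictly larger than the slope bound $L$ of $g$-causal curves. Given the ``at most once'' above, it remains to show that every timelike curve $\gamma$ inextendible in $V$ meets $V\cap S$; I argue by contradiction and assume it does not. If $\xi^0\circ\gamma$ took both signs it would vanish somewhere by the intermediate value theorem, so (the two cases being mirror images) assume $\xi^0\circ\gamma>0$ throughout; then $\xi^0\circ\gamma\searrow\ell\ge0$ at its past end. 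Using the $L$-Lipschitz control, the $\xi^i\circ\gamma$ converge there, so $\gamma$ has a past endpoint $q$ with $\xi^0(q)=\ell$. Now the slope inequality $\mu>L$ forces $q$ to lie in the \emph{open} set $V$: from a fixed interior point of $\gamma$ at height $s_0>\ell$ the spatial coordinates can move by at most $L(s_0-\ell)$, strictly less than the extra room $\mu(s_0-\ell)$ the shrinking cross-section allows, so $|\xi^i(q)|<\rho-\mu\ell$ while $0\le\ell<\delta$. Since $V$ sits in a totally normal neighbourhood, $\gamma$ then extends past its past end to a causal — and, since $\gamma$ is timelike, timelike — curve ending at $q\in V$ (cf.\ Proposition \ref{IJprop}(iv)), contradicting inextendibility in $V$.

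Regularity enters only through $C^{1,1}$-causality for Lorentzian cones ($\grad^g\xi^0$ merely continuous, $g$-causal curves merely locally Lipschitz) and through the extension of causal curves to limit points in totally normal neighbourhoods (Theorem \ref{totally}, Proposition \ref{IJprop}), all of which are available. I expect the only genuinely delicate point to be this last estimate — making sure the limit point $q$ lands in the open set $V$ rather than on its lateral boundary — which is precisely what the slope condition $\mu>L$ of the ``double cone'' is designed to guarantee.
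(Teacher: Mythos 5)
Your argument is correct, but it is a genuinely different route from the one taken in the paper. The paper disposes of this lemma in three lines by regularisation from the \emph{outside}: choose $\hat g_\eps\succ g$ as in Proposition \ref{CGapprox}; for small $\eps$ the surface $W\cap S$ is still $\hat g_\eps$-spacelike on a compact neighbourhood $W$ of $p$; the smooth theory (\cite[Lemma A.5.6]{Baer}) then yields $V\sse W$ with $V\cap S$ a Cauchy hypersurface in $V$ for $\hat g_\eps$; and since every $g$-timelike curve is $\hat g_\eps$-timelike while inextendibility is a metric-independent notion, the same $V$ works for $g$. You instead redo the classical lens-shaped (``double cone'') neighbourhood construction by hand and verify that each step survives at $C^{1,1}$ (indeed only continuity of $g$ and a.e.\ differentiability of Lipschitz causal curves are used): strict monotonicity of $\xi^0$ along causal curves via the a.e.\ identity $\tfrac{d}{dt}\xi^0(\gamma(t))=g(\grad^g\xi^0,\gamma')>0$ together with absolute continuity of $\xi^0\circ\gamma$, the uniform slope bound $L$ on causal directions, and the $\mu>L$ comparison that forces the past limit point into the open set $V$. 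Your key quantitative step checks out: $|\xi^i(q)|<(\rho-\mu s_0)+L(s_0-\ell)<\rho-\mu\ell$ precisely because $\mu>L$ and $s_0>\ell$. What the paper's approach buys is brevity and uniformity with the regularisation philosophy used throughout (at the cost of leaning on Proposition \ref{CGapprox} and the smooth result); what yours buys is a self-contained, approximation-free proof that in fact delivers the slightly stronger statement that $V\cap S$ is acausal in $V$ and is met exactly once by every inextendible \emph{causal} curve in $V$. The appeal to Proposition \ref{IJprop}(iv) at the end is not really needed: once you have shown that $\gamma$ converges to $q\in V$ at its past end, $\gamma$ has a past endpoint in $V$ and is therefore not past inextendible in $V$, which is already the desired contradiction.
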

\begin{proof} 
Let $\hat{g}_{\eps}$ be smooth metrics approximating $g$ from the outside 
as in Prop.\ \ref{CGapprox}. Then given any compact neighborhood $W$ of $p$ in $M$
there exists some $\eps>0$ such that $W\cap S$ is spacelike for $\hat g_\eps$.
From the smooth theory (e.g., \cite[Lemma A.5.6]{Baer}) we obtain that
there exists a neighborhood $V\subseteq W$ such that $V\cap S$ is a 
Cauchy hypersurface in $V$ for $\hat{g}_\eps$, and consequently also for $g$. 
\end{proof}

\begin{Lemma}\label{pastcompact}
Let $S$ be an achronal set in $M$ and let $p\in D(S)^\circ \setminus I^-(S)$. Then
$J^-(p)\cap D^+(S)$ is compact.
\end{Lemma}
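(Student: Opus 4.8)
The plan is to follow the classical proof (cf.\ \cite[Lemma 14.40]{ON83}) and check that every ingredient survives the drop to $C^{1,1}$-regularity. First I would reduce to showing sequential compactness: take a sequence $p_k$ in $J^-(p)\cap D^+(S)$ and produce a subsequence converging to a point of the same set. Since $p\in D(S)^\circ$, by Theorem \ref{A19} the open set $D(S)^\circ$ is globally hyperbolic, and since moreover $p\notin I^-(S)$, Lemma \ref{easylemma}(ii) gives $J^-(p)\cap D^+(S)\subseteq D(S)^\circ$ (points of $D^+(S)$ lie in $S\cup I^+(S)$, and $J^-(p)$ cannot reach below $S$ without entering $I^-(S)$, which is disjoint from $D^+(S)$). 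Hence the set in question is actually contained in $D(S)^\circ$, where we may use global hyperbolicity freely.

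For each $k$, pick a past-directed causal curve $\alpha_k$ from $p$ to $p_k$; then extend $\alpha_k$ to a past-inextendible causal curve $\beta_k$ starting at $p$. Since $p\in D^+(S)$ (note $p\in D(S)^\circ\setminus I^-(S)$ together with Lemma \ref{easylemma}(iv) forces $p\in D^+(S)$), each $\beta_k$ meets $S$; let $\beta_k$ run from $p$ until it first hits $S$ at a point $s_k\in S$, and note that the portion of $\beta_k$ from $p$ to $p_k$ is a sub-arc of this (because $p_k\in D^+(S)$ lies on or to the future of $S$, it is reached before $s_k$). Now I would apply the limit curve theorem for $C^{1,1}$ (indeed continuous) metrics — the same result invoked in the proof of Lemma \ref{lhbound}, namely \cite[Th.\ 3.1]{limit} in the form valid for such metrics — to extract a subsequence of $\beta_k$ converging locally uniformly to a causal curve $\beta$ from $p$. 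By compactness arguments as in \cite[Lemma 14.2]{ON83}, $\beta$ either is inextendible or accumulates appropriately; in either case $\beta$ must meet $S$, say at $s\in S$, and the $s_k$ (sub)converge to $s$. The limiting points $p_k$ then converge to a point $q$ on $\beta$ between $p$ and $s$; since $\le$ is closed on $D(S)^\circ$ (Proposition \ref{IJprop}(iii), using convex neighbourhoods and a finite cover of the compact curve $\beta$) we get $q\in J^-(p)$, and since $q\in S\cup I^+(S)$ lies on a causal curve all of whose past-inextendible extensions meet $S$ (being limits of such), $q\in D^+(S)$. Hence $q\in J^-(p)\cap D^+(S)$, proving sequential compactness.

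The main obstacle is the limit-curve step together with the verification that the limit curve $\beta$ still meets $S$: one must ensure that the $C^{1,1}$ limit curve theorem applies uniformly (the curves $\beta_k$ are only locally Lipschitz, parametrised say by $h$-arclength for a background Riemannian $h$), and that no "escape to infinity" occurs before $S$ is reached — this is where global hyperbolicity of $D(S)^\circ$, via the non-total-imprisonment property already used in Lemma \ref{lhbound} and the compactness of causal diamonds, does the work. Everything else — the push-up needed to locate $p_k$ on the relevant sub-arc, closedness of $\le$, and the characterisation of $D^+(S)$ via Lemma \ref{barD} and Lemma \ref{easylemma} — transfers verbatim from the smooth proof once Theorems \ref{lcb}, \ref{totally}, Propositions \ref{push-up1}, \ref{IJprop} and Lemma \ref{easylemma} are available, as they are.
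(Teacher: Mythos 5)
Your overall strategy -- sequential compactness via a limit--curve argument modelled on \cite[Lemma 14.40]{ON83} -- is the route the paper intends, and most of the ingredients you cite are indeed available in the $C^{1,1}$ setting. The genuine gap is at the decisive confinement step, where you assert that the $s_k$ subconverge to a point $s\in S$ and that the $p_k$ converge to a point of $\beta$ ``between $p$ and $s$''. Locally uniform convergence $\beta_k\to\beta$ only controls compact parameter intervals, and a priori the parameters at which $\beta_k$ reaches $p_k$ and $S$ could diverge; the fact that each $\beta_k$ and the limit curve $\beta$ meet $S$ does not help, because $S$ is not open (and in the statement of the lemma it is merely an achronal set, not compact). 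The missing ingredient is precisely Lemma \ref{I-+}, which the paper lists among the tools for this proof: since $\beta$ is past inextendible from $p\in D(S)^\circ$, it meets the \emph{open} set $I^-(S)$ at some parameter $t_0$; by openness, $\beta_k(t_0)\in I^-(S)$ for all large $k$, and then Proposition \ref{push-up1} combined with $D^+(S)\cap I^-(S)=\emptyset$ (Lemma \ref{easylemma}(ii)) forces $p_k$ (and $s_k$) to lie on $\beta_k([0,t_0])$, which confines the whole sequence to a fixed compact set. Your proposed substitutes do not do this job: non-total imprisonment points in the wrong direction, and compactness of causal diamonds in $D(S)^\circ$ is Corollary \ref{baer}, whose proof in this paper relies on the present lemma, so invoking it here would be circular.

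Two further points. First, your argument that the limit point $q$ belongs to $D^+(S)$ (``$q$ lies on a causal curve all of whose past-inextendible extensions meet $S$, being limits of such'') is not a proof: one must show that \emph{every} past-inextendible causal curve from $q$ meets $S$, not only extensions of the particular limit curve, and being a limit of curves that meet $S$ does not transfer this property; the standard argument proceeds instead from $q\le p$, $q\notin I^-(S)$ and the characterisation of $\overline{D^+(S)}$. Second, the inclusion $J^-(p)\cap D^+(S)\subseteq D(S)^\circ$ is not what Lemma \ref{easylemma}(ii) states ($H^+(S)$ may meet $D^+(S)$ without lying in the interior of $D(S)$); it requires a separate argument, and in any case it is not needed once the confinement above is carried out with Lemma \ref{I-+}.
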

The proof follows \cite[Th.\ 14.40]{ON83}, using  Theorem \ref{lcb}, Proposition
\ref{push-up1}, Proposition \ref{IJprop}, Lemma \ref{easylemma}, and Lemma \ref{I-+}.

\begin{Lemma}\label{relcompact}
Let $K$ be a compact subset of $M$ and let $A\sse M$ be such that, $\forall p\in M$,
$A\cap J^{+}(p)$, respectively $A\cap J^{-}(p)$, is relatively compact in $M$. Then 
$A\cap J^{+}(K)$, respectively $A\cap J^{-}(K)$, is relatively compact in $M$.
\end{Lemma}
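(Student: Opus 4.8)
The plan is to prove the assertion for $A\cap J^+(K)$; the statement for $A\cap J^-(K)$ then follows by reversing the time orientation. The key idea is to dominate $J^+(K)$ by the causal futures of \emph{finitely many} points lying strictly to the past of $K$, and then invoke the hypothesis at each of those points.

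First I would note that for every $p\in K$ the chronological past $I^-(p)$ is nonempty: by Theorem \ref{lcb}, $p$ admits a normal neighbourhood $U$ with $I^-(p,U)=\exp_p\bigl(I^-(0)\cap\tilde U\bigr)$, and $I^-(0)\cap\tilde U\neq\emptyset$. Hence for each $p\in K$ we may choose a point $q(p)$ with $q(p)\ll p$. By Theorem \ref{lcb} (equivalently, by the openness of $\ll$ in Lemma \ref{IJlemma}) each set $I^+(q(p))$ is open and contains $p$, so $\{I^+(q(p)):p\in K\}$ is an open cover of the compact set $K$. Passing to a finite subcover yields points $p_1,\dots,p_m\in K$ with $K\subseteq\bigcup_{i=1}^m I^+(q_i)$, where $q_i:=q(p_i)$.

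Next I would establish the inclusion $J^+(K)\subseteq\bigcup_{i=1}^m J^+(q_i)$. Indeed, if $r\in J^+(K)$, then $p\le r$ for some $p\in K$, and $p\in I^+(q_i)$, i.e.\ $q_i\ll p$, for some $i$; the push-up principle (Proposition \ref{push-up1}) then gives $q_i\ll r$, so in particular $r\in J^+(q_i)$. Therefore
\[
A\cap J^+(K)\ \subseteq\ \bigcup_{i=1}^m\bigl(A\cap J^+(q_i)\bigr),
\]
and each set $A\cap J^+(q_i)$ is relatively compact in $M$ by hypothesis. Since a finite union of relatively compact sets is relatively compact, $A\cap J^+(K)$ is relatively compact in $M$.

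I do not expect a genuine obstacle here: this is the standard point-set argument familiar from the smooth case (cf.\ \cite[Ch.\ 14]{ON83}), and the only places where the reduced regularity could conceivably matter — the nonemptiness and local $\exp$-description of the chronological past, the openness of $I^+$, and the push-up property — are all provided by Theorem \ref{lcb}, Lemma \ref{IJlemma}, and Proposition \ref{push-up1}.
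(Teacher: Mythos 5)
Your proof is correct and is precisely the argument the paper has in mind: the paper simply defers to \cite[Lemma A.5.3]{Baer}, whose proof is exactly this covering of $K$ by finitely many sets $I^+(q_i)$ with $q_i\ll p_i$, followed by the inclusion $J^+(K)\subseteq\bigcup_i J^+(q_i)$. (The appeal to push-up is not even needed for that inclusion — transitivity of $\le$ already gives $q_i\le r$ — but it is harmless.)
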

The proof can be carried out as in \cite[Lemma A.5.3]{Baer}, based on Theorem \ref{lcb}.

\begin{Proposition}\label{fum}
 Let $U\subseteq M$ be open and globally hyperbolic. Then the causality relation $\leq$ of $M$ is
 closed on $U$.
\end{Proposition}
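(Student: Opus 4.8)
The plan is to mimic the standard smooth proof (as in \cite[Lemma A.5.3 / A.5.4]{Baer}, or the corresponding argument in \cite{ON83}) that the causal relation is closed on a globally hyperbolic open subset, and to check that each ingredient of that argument is available in the $C^{1,1}$-setting from the results already assembled in this appendix. Concretely: suppose $p_k \le q_k$ in $U$ with $p_k \to p$ and $q_k \to q$, both limits lying in $U$; we must show $p \le q$. If infinitely many of the $q_k$ equal $p_k$ the conclusion is trivial (using that $\le$ is closed on a convex neighbourhood of $p$, Proposition \ref{IJprop}(iii), or simply $p=q$), so assume $p_k < q_k$ for all $k$ and let $\sigma_k$ be a future-directed causal curve from $p_k$ to $q_k$. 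Since $p_k, q_k$ eventually lie in a fixed compact set and $U$ is globally hyperbolic, the curves $\sigma_k$ are eventually contained in a fixed compact subset of $U$ (one uses that $J^+(p) \cap J^-(q)$ varies ``upper-semicontinuously'' in $p,q$, which follows from global hyperbolicity together with the limit-curve machinery; alternatively one first passes to slightly larger endpoints $p' \ll p_k$, $q_k \ll q'$ for large $k$ and works inside the compact set $J^+(p') \cap J^-(q')$).

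First I would make precise the statement that the $\sigma_k$ lie in a common compact set. The cleanest route is: pick (using that $\ll$ is open, Lemma \ref{IJlemma}, and that $U$ is open) points $p' \in U$ with $p' \ll p$ and $q' \in U$ with $q \ll q'$; then for $k$ large $p' \ll p_k$ and $q_k \ll q'$, so by the push-up principle (Proposition \ref{push-up1}) the whole curve $\sigma_k$ lies in $J^+(p') \cap J^-(q')$, which is a compact subset of $U$ by global hyperbolicity (this is exactly where ``$J(r,s)$ compact'' is used, together with strong causality to keep us inside $U$ — or one shrinks $U$ to a relatively compact globally hyperbolic neighbourhood first). Second, with all $\sigma_k$ in a fixed compact $K \Subset U$, I would invoke the limit curve theorem for $C^{1,1}$- (indeed continuous) metrics — the same result already cited in the proof of Lemma \ref{lhbound}, namely \cite[Th.\ 3.1]{limit} (valid in low regularity as noted there, following \cite[Th.\ 1.6]{CG}) — to extract a subsequence of the $\sigma_k$, suitably parametrised by $h$-arclength, converging locally uniformly to a causal limit curve $\sigma$. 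Third, I would identify the endpoints of $\sigma$ as $p$ and $q$: since the $\sigma_k$ start at $p_k \to p$ and, their $h$-lengths being uniformly bounded (Lemma \ref{lhbound} applied in the globally hyperbolic $U$, or directly from $K$ compact plus non-total imprisonment), the limit curve is a genuine compact causal curve from $p$ to $q$, whence $p \le q$.

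The main obstacle I anticipate is bookkeeping around the limit curve theorem rather than anything deep: one must ensure the $\sigma_k$ are parametrised so that the limit curve is non-degenerate and actually connects $p$ to $q$ (the $h$-arclength parametrisation on $[0,a_k]$ with $a_k$ bounded does this, just as in Lemma \ref{lhbound}), and one must be slightly careful when $p=q$ or when $\sigma$ could a priori be constant. A minor subtlety is that everything must stay inside the open set $U$ — this is handled by the $p',q'$ trick together with strong causality, or by first replacing $U$ with a smaller relatively compact globally hyperbolic open neighbourhood of the relevant points, which exists by Theorem \ref{A19} applied to a convex (hence $D(S)^\circ$-type) neighbourhood, or simply by noting global hyperbolicity is inherited by causally convex open subsets. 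Since all the required $C^{1,1}$-tools (push-up, openness of $\ll$, compactness of causal diamonds in the globally hyperbolic case, the low-regularity limit curve theorem, and the uniform $h$-length bound) are already in place, the proof is a routine transcription of the smooth argument; accordingly I would simply indicate the references (\cite[Lemma A.5.3, A.5.4]{Baer} or \cite[Ch.\ 14]{ON83}) and list these ingredients, in keeping with the style of the rest of this appendix.
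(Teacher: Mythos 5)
Your argument is correct, but it is not the route the paper takes. The paper disposes of Proposition \ref{fum} by pointing to \cite[Lemma 14.22]{ON83}, whose proof rests on the quasi-limit construction of \cite[Lemma 14.14]{ON83}: one covers the situation by convex neighbourhoods (Theorem \ref{totally}), builds a limit sequence of the connecting causal curves relative to that covering, and extracts a broken causal geodesic from $p$ to $q$ using Proposition \ref{longest} and the closedness of $\le$ within a convex set (Proposition \ref{IJprop}). You instead trap the curves $\sigma_k$ in a compact diamond $J^+(p')\cap J^-(q')$ via openness of $\ll$ and push-up, and then apply the continuous-metric limit curve theorem (\cite[Th.\ 3.1]{limit}, \cite[Th.\ 1.6]{CG}) exactly as in Lemma \ref{lhbound}. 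Both work; yours is shorter and reuses machinery already invoked elsewhere in the paper, but it leans on the nontrivial fact that uniform limits of causal curves are causal for $C^{1,1}$ (causally plain) metrics, whereas the quasi-limit route produces a causal broken geodesic directly from the convex-neighbourhood results and so stays within the elementary toolkit the appendix lists for this proposition. Two small points: for the endpoint identification you want the fixed-endpoints case of the limit curve theorem (or simply Arzel\`a--Ascoli for the $1$-Lipschitz, $h$-arclength-parametrised curves on a bounded parameter interval) rather than the inextendible-curve case cited in Lemma \ref{lhbound}; and the worry about the $\sigma_k$ leaving $U$ is settled not by strong causality but by O'Neill's definition of a globally hyperbolic open \emph{subset}, which requires $J(p,q)$ to be compact \emph{and contained in} $U$ --- the reading the paper implicitly adopts by deferring to \cite[Lemma 14.22]{ON83}.
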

This can be proved as in \cite[Lemma 14.22]{ON83} (based on \cite [Lemma 14.14]{ON83}),
using Theorem \ref{totally} and  Propositions \ref{longest}, \ref{IJprop}.

\begin{Corollary}\label{baer}
Let $S$ be a Cauchy hypersurface in a globally hyperbolic manifold $M$ and let $K$ be
compact in $M$. Then $S\cap J^{\pm}(K)$ and $J^{\mp}(S)\cap J^{\pm}(K)$ are compact.
\end{Corollary}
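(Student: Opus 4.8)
The strategy is to reduce everything to Lemma \ref{relcompact} together with the standard fact that in a globally hyperbolic manifold, sets of the form $J^+(p)\cap J^-(q)$ are compact. The key auxiliary observation is that a Cauchy hypersurface $S$ satisfies $S\cap J^+(p)$ is compact for every $p\in M$: indeed, by Lemma \ref{causalcauchyhyp} $S$ is a closed achronal topological hypersurface met by every inextendible causal curve, and if $S\cap J^+(p)$ were noncompact one could extract (via the limit curve machinery already invoked in the proof of Lemma \ref{lhbound}, valid for $C^{1,1}$-metrics) a sequence of causal curves from $p$ to points of $S$ escaping to infinity, whose limit is an inextendible causal curve from $p$ that either fails to meet $S$ or meets it twice, contradicting that $S$ is Cauchy (one uses here that a causal curve through $p$ lying in $J^+(p)$ cannot re-enter the past of $S$). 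Symmetrically $S\cap J^-(p)$ is compact.

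Granting this, the first assertion follows immediately: the hypothesis of Lemma \ref{relcompact} is met with $A=S$, so $S\cap J^+(K)$ and $S\cap J^-(K)$ are relatively compact; but $S$ is closed and $J^\pm(K)$ is closed in a globally hyperbolic manifold (using Proposition \ref{fum} applied to $U=M$, or directly Proposition \ref{IJprop}(iii) patched over a convex cover), hence $S\cap J^\pm(K)$ is closed, therefore compact. For the second assertion, write $J^-(S)\cap J^+(K)$: again by Lemma \ref{relcompact} it suffices to check that $J^-(S)\cap J^+(p)$ is relatively compact for each $p$. A point $r$ in this set lies on a causal curve from $p$ forward and admits a future-directed causal curve to $S$; concatenating, $r\in J^+(p)\cap J^-(q)$ for some $q\in S\cap J^+(p)$. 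Since $S\cap J^+(p)$ is compact and globally hyperbolic manifolds have $J^+(p)\cap J^-(\,\cdot\,)$ depending "continuously" on the endpoint — more precisely $J^+(p)\cap J^-(S\cap J^+(p)) = \bigcup_{q\in S\cap J^+(p)} J(p,q)$ is contained in the compact set $J^+(p)\cap J^-(S\cap J^+(p))$, which is closed (by closedness of $\le$) and contained in a compact set by a standard covering argument — we conclude $J^-(S)\cap J^+(p)$ is relatively compact. Then Lemma \ref{relcompact} and closedness of $J^-(S)\cap J^+(K)$ (intersection of two closed sets) give compactness; the case $J^+(S)\cap J^-(K)$ is symmetric.

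The main obstacle is the compactness of $S\cap J^+(p)$: this is the only place where one genuinely needs to invoke the low-regularity limit curve theorem and the Cauchy property together, and one must be careful that the limit curve is inextendible and that "meeting $S$ exactly once" really is violated — this is exactly the kind of argument that goes through because the limit curve theorem (as noted in the footnote to Lemma \ref{lhbound}) holds for $C^{1,1}$- (indeed continuous) metrics, and because $S$ achronal closed prevents a causal limit curve from touching $S$ more than once. Once that is in hand, everything else is the routine bookkeeping of intersecting closed sets with relatively compact sets, exactly as in \cite[Lemma A.5.3, Cor.\ A.5.4]{Baer}.
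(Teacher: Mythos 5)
Your proposal is correct in outline and, in its second half, coincides with the paper's: both globalize from a single point $p$ to the compact set $K$ via Lemma \ref{relcompact} and then upgrade relative compactness to compactness using closedness of $S$, of $J^{\mp}(S)$ and of $J^{\pm}(K)$ (the latter from Proposition \ref{fum}). Where you genuinely diverge is in the single-point input. The paper follows \cite[Lemma A.5.4]{Baer} and gets compactness of $S\cap J^{-}(p)$ and of $J^{+}(S)\cap J^{-}(p)$ for free from Lemma \ref{pastcompact} (the $C^{1,1}$-version of \cite[Th.\ 14.40]{ON83}): both sets sit inside $J^{-}(p)\cap D^{+}(S)$, which that lemma declares compact (and they are empty when $p\in I^{-}(S)$, by achronality and push-up); the case of $J^{+}(p)$ is time-dual. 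You instead re-derive compactness of $S\cap J^{+}(p)$ from scratch with a limit-curve argument. That route is self-contained but duplicates work already encapsulated in Lemma \ref{pastcompact}, and its thinnest step is the claim that the limit curve ``either fails to meet $S$ or meets it twice'': as stated this dichotomy is not automatic. What one actually argues is that the future-inextendible limit curve $\sigma$ issuing from $p$ lies in the closed set $J^{-}(S)=M\setminus I^{+}(S)$ (each approximating curve ends on $S$, and the convergence is locally uniform); by the Cauchy property (Lemma \ref{causalcauchyhyp}(ii), after extending $\sigma$ to the past and excluding, via achronality, a crossing in the past portion) $\sigma$ must nevertheless meet $S$ at some parameter $t_1$; but then any later point $\sigma(t)\neq\sigma(t_1)$ satisfies $\sigma(t_1)<\sigma(t)\le s$ for some $s\in S$, contradicting acausality of $S$ (Lemma \ref{Sacausal}, Proposition \ref{push-up1}). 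With that step filled in your argument goes through; the paper's detour through Lemma \ref{pastcompact} simply buys this compactness without re-invoking the limit-curve machinery, at the cost of relying on one more appendix lemma.
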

This follows as in \cite[Lemma A.5.4]{Baer}, using Proposition
\ref{IJprop}, Lemmas \ref{pastcompact}, \ref{relcompact} and Proposition \ref{fum}.

We give a proof of the following result, again to avoid the variational calculus-based 
argument in \cite[Lemma 14.42]{ON83}.
\begin{Lemma}\label{Sacausal}
Any achronal spacelike hypersurface $S$ is acausal. 
\end{Lemma}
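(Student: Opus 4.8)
The plan is to argue by contradiction: suppose $S$ is an achronal spacelike hypersurface that is not acausal, so there exist distinct points $p,q\in S$ with $p<q$, i.e. a future-directed causal curve $\gamma$ from $p$ to $q$. Since $S$ is achronal, we cannot have $p\ll q$, so by Lemma \ref{distancefnc}(i) we get $d(p,q)=0$. Combined with Corollary \ref{lipisc1} and the push-up principle (Proposition \ref{push-up1}) this forces $\gamma$ to be, up to reparametrisation, a null geodesic: indeed if $\gamma$ were timelike on any subinterval one could push up to obtain $p\ll q$, contradicting achronality. So the whole causal relation between points of $S$ is witnessed only by null geodesics.

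Next I would localise. Pick a totally normal neighbourhood $U$ of $p$ (Theorem \ref{totally}) small enough that $\gamma$ enters $U$, and let $q'=\gamma(s)$ be a point of $\gamma$ in $U$ with $q'\neq p$; then $q'\in J^+(p,U)$, and since $p\not\ll q'$ we have $q'\in \partial J^+(p,U)$ by Theorem \ref{lcb}. By Corollary \ref{boundary}, the initial segment $\gamma|_{[0,s]}$ lies in $\partial J^+(p,U)=\exp_p(\partial I^+(0)\cap\tilde U)$ and is (a reparametrisation of) a radial null geodesic. Thus near $p$, the curve $\gamma$ is the radial null geodesic with some initial null direction $v\in T_pM$, and $q'=\exp_p(tv)$ for small $t>0$.

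The contradiction now comes from the spacelike hypersurface structure at $p$. Since $S$ is spacelike, $T_pS$ is a spacelike hyperplane in $T_pM$, and the null vector $v$ is transverse to $T_pS$ (a null vector cannot lie in a spacelike subspace). In a normal chart at $p$, $S$ is a $C^1$ graph over $T_pS$, tangent to $T_pS$ at the origin, while the image $\exp_p(tv)$ of the radial null geodesic is, to first order, the straight ray $t\mapsto tv$. Because $v\notin T_pS$ and $S$ is tangent to $T_pS$ at $p$, for small $t>0$ the point $\exp_p(tv)$ lies off $S$ — more precisely, writing points of $U$ in the normal chart and using that $S$ is $C^1$, the transverse component of $\exp_p(tv)$ is $t\langle v,\cdot\rangle + o(t)\neq 0$, so $\exp_p(tv)\notin S$ for $t$ small. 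But $\gamma$ was supposed to run inside $S$ from $p$ to $q\in S$; at least we know $\gamma$ starts at $p\in S$, is the radial null geodesic $\exp_p(tv)$ near $p$, and must return to $S$ at $q$. We instead get a cleaner contradiction: take the \emph{first} return point. Let $t^*>0$ be the infimum of $t$ with $\exp_p(tv)\in S$ along $\gamma$ (such $t$ exists since $q\in S$); then $\gamma|_{(0,t^*)}$ avoids $S$, and applying the same argument at the point $\gamma(t^*)\in S$ using a totally normal neighbourhood there shows the \emph{past} segment of $\gamma$ arriving at $\gamma(t^*)$ is a radial null geodesic transverse to $T_{\gamma(t^*)}S$, which likewise cannot lie in $S$ on a punctured neighbourhood of $\gamma(t^*)$ — but it must, since $\gamma$ approaches $\gamma(t^*)$ through points not in $S$ only from one side while $S$ separates.

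The main obstacle I anticipate is making the transversality-to-$S$ step fully rigorous with only $C^{1,1}$ (hence $C^1$ exponential chart and $C^1$ graph) regularity: one must be careful that the first-order Taylor expansion of both $S$ and the radial null geodesic in normal coordinates is valid, and that "$v\notin T_pS$ implies $\exp_p(tv)\notin S$ for small $t$" genuinely holds — this is where the $C^1$-ness of $\exp_p$ and of the hypersurface chart, together with the fact that a spacelike subspace contains no null vectors, do the work. Since the excerpt notes this is meant to avoid the variational argument of \cite[Lemma 14.42]{ON83}, the intended proof is presumably exactly this local null-geodesic-plus-transversality argument, with Corollary \ref{boundary} and Theorem \ref{lcb} supplying the $C^{1,1}$ replacements for the smooth convexity theory.
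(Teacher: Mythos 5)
Your reduction to the null-geodesic case is sound and matches the paper's first step: if the causal curve $\gamma$ from $p\in S$ to $q\in S$ is not a null geodesic, then (via totally normal neighbourhoods, Corollary \ref{boundary} and Proposition \ref{push-up1}) one gets $p\ll q$, contradicting achronality. Your transversality observation at $p$ is also correct: a causal tangent vector cannot lie in the spacelike subspace $T_pS$, so the null geodesic leaves $S$ immediately.

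The gap is in how you close the null-geodesic case. After introducing the first return time $t^*$, your concluding sentence (``which likewise cannot lie in $S$ on a punctured neighbourhood of $\gamma(t^*)$ --- but it must, since \dots $S$ separates'') does not produce a contradiction: there is nothing inconsistent about a curve that avoids $S$ on $(0,t^*)$ and meets it at $t^*$, and transversality at $\gamma(t^*)$ is perfectly compatible with that. More tellingly, your argument in this case never uses achronality again, yet achronality is indispensable here --- a non-achronal spacelike hypersurface can certainly be re-intersected by causal curves, and the local picture at each intersection point looks exactly as you describe. What is missing is the step that converts the transversal crossing into a violation of achronality. The paper does this via Lemma \ref{locallycauchy}: near one endpoint, $S\cap V$ is a Cauchy hypersurface in $V$, so $V\setminus S$ splits into a local future and a local past of $S\cap V$, and the null geodesic contains points strictly in $J^+(S,V)\setminus S$ (respectively, approaches $q$ from $J^-(S,V)\setminus S$). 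Such a point can be joined to $S\cap V$ by a timelike curve inside $V$; concatenating with the remainder of $\gamma$ yields a causal curve between two points of $S$ that is \emph{not} a null geodesic, and then your own first step (push-up) gives $p'\ll q'$ for points $p',q'\in S$, the desired contradiction. You should replace your final ``separation'' claim by this concatenation argument; the rest of your proof, including the worry about $C^1$ regularity of $\exp_p$ and of the graph of $S$, is handled in the paper by Theorem \ref{lcb}, Theorem \ref{totally} and Lemma \ref{locallycauchy} and needs no further elaboration.
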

\begin{proof}
Let $\alpha: [0,1]\rightarrow M$ be a future directed causal curve with endpoints 
$\alpha(0)$ and $\alpha(1)$ in $S$. If $\alpha$ is not a null-geodesic, by Proposition 
\ref{push-up1}, we can connect $\alpha(0)$ with $\alpha(1)$ also by a timelike curve, which is a 
contradiction to the achronality of $S$. Now let $\alpha$ be a null geodesic. By Lemma 
\ref{locallycauchy}, there exists a neighborhood $U$ around $\alpha(0)$ in which $S\cap U$ is 
a Cauchy hypersurface. Since $\alpha$ is $C^2$ and causal, it must be transversal to S, so 
it contains points in $J^+(S,U)\setminus S$. Then we can connect any such point with some 
point in $S\cap U$ by a timelike curve within $U$. Concatenating this curve with the remainder of 
$\alpha$, we obtain a curve that is not entirely null and meets S twice. As above, this
gives a contradiction to achronality.
\end{proof}

\begin{Proposition}\label{cauchysurface}
Let $S$ be a spacelike hypersurface in $M$. Then $S$ is a Cauchy hypersurface if and only 
if every inextendible causal curve intersects $S$ precisely once.
\end{Proposition}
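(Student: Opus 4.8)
\textbf{Proof proposal for Proposition \ref{cauchysurface}.}

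The plan is to prove the two implications separately, exploiting the equivalence between the two notions of Cauchy hypersurface via the already-established infrastructure. First I would handle the easy direction: suppose every inextendible causal curve meets $S$ exactly once. In particular every inextendible timelike curve meets $S$, and it cannot meet $S$ twice (a causal curve meeting $S$ twice would violate the hypothesis). Hence $S$ is a Cauchy hypersurface in the sense of \cite[Def.\ 14.28]{ON83} used throughout the paper, i.e.\ every inextendible timelike curve intersects $S$ precisely once. This direction requires no regularity input beyond the definition.

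For the converse, assume $S$ is a spacelike Cauchy hypersurface, so every inextendible timelike curve meets $S$ exactly once. By Lemma \ref{causalcauchyhyp}(i), $S$ is a closed achronal topological hypersurface, and by Lemma \ref{Sacausal} it is acausal (being achronal and spacelike). The acausality already forbids any causal curve from meeting $S$ more than once, so it remains to show that every inextendible causal curve $\gamma$ meets $S$ at least once. This is exactly Lemma \ref{causalcauchyhyp}(ii), whose proof in the $C^{1,1}$ setting was reduced to \cite[Lemma 14.29]{ON83} using Lemma \ref{IJlemma}, Corollary \ref{emptyedge}, and Lemma \ref{pastinex}(i). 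Thus the converse follows by assembling these pieces: $S$ closed acausal topological hypersurface (Lemmas \ref{causalcauchyhyp}(i), \ref{Sacausal}) together with the causal-curve-intersection property (Lemma \ref{causalcauchyhyp}(ii)) gives precisely "every inextendible causal curve intersects $S$ exactly once".

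The main obstacle — already absorbed into the cited lemmas — is showing that an inextendible \emph{null} geodesic that stays in the domain bounded by $S$ must eventually cross $S$: the classical argument uses that a causal curve failing to meet $S$ can be perturbed into a timelike curve failing to meet $S$ (Lemma \ref{pastinex}(i)), which reduces the causal statement to the timelike hypothesis defining a Cauchy hypersurface. In low regularity the subtle point is that this push-up/perturbation step relies on Proposition \ref{push-up1} (valid already for $C^{0,1}$-metrics) and on the fact that $\edge(S)=\emptyset$ for a closed acausal topological hypersurface (Corollary \ref{emptyedge}), both of which are available. So the proof is essentially a bookkeeping exercise combining Lemmas \ref{causalcauchyhyp} and \ref{Sacausal} with the definition, and I would present it in just a few lines along these lines.
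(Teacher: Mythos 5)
Your proposal is correct and essentially matches the paper's proof: the forward implication ($S$ Cauchy $\Rightarrow$ every inextendible causal curve hits $S$ exactly once) is obtained by combining Lemma \ref{causalcauchyhyp}(i) and Lemma \ref{Sacausal} to get acausality (hence at most one intersection) with Lemma \ref{causalcauchyhyp}(ii) for at least one intersection, while the converse follows trivially since timelike curves are causal. The paper leaves the trivial converse implicit; you spell it out, but the structure and cited lemmas are identical.
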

\begin{proof}
Let $S$ be a Cauchy hypersurface and let $\alpha$ be an inextendible causal curve. By Lemmas 
\ref{causalcauchyhyp}(i), \ref{Sacausal}, $\alpha$ intersects $S$ at most once. Also, 
by Lemma \ref{causalcauchyhyp} (ii), it has to intersect $S$ at least once, hence the result.
\end{proof}
\medskip

The remaining statements in this appendix serve to justify that in the proof of the
main result in Section \ref{mainproof} we may without loss of generality assume 
$S$ to be achronal. This is done using a covering argument, as in \cite{HE,ON83}.
A key ingredient in adapting this construction to the $C^{1,1}$-setting
is the following consequence of \cite[Th.\ 1.39]{M}:

\begin{Theorem}\label{subnormalnbhds}
Let $M$ be a smooth manifold with a $C^{1,1}$-Lorentzian metric and 
let $S$ be a semi-Riemannian submanifold of $M$. Then the normal bundle $N(S)$ is Lipschitz.
Moreover, there exist neighbourhoods $U$ of the zero section in $N(S)$ and 
$V$ of $S$ in $M$ such that
\begin{equation*}
\exp^\perp : U\rightarrow V
\end{equation*}
is a bi-Lipschitz homeomorphism.
\end{Theorem}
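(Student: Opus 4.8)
\textbf{Proof proposal for Theorem~\ref{subnormalnbhds}.}
The plan is to deduce this statement directly from Minguzzi's result \cite[Th.\ 1.39]{M}, which already asserts precisely that for a $C^{1,1}$-Lorentzian metric the normal bundle $N(S)$ of a semi-Riemannian submanifold $S$ is a Lipschitz bundle and that $\exp^\perp$ restricts to a bi-Lipschitz homeomorphism from a neighbourhood of the zero section onto a neighbourhood of $S$. The first step is therefore to recall the precise hypotheses of that theorem and check that they are met here: one needs $S$ to be a semi-Riemannian (in particular, nondegenerate) submanifold, so that $TM|_S = TS \oplus N(S)$ as a Whitney sum of Lipschitz bundles, and one needs the metric $g$ to be $C^{1,1}$ so that its Christoffel symbols are locally Lipschitz and the geodesic flow is well defined and depends Lipschitz-continuously on initial data. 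The Lipschitz character of $N(S)$ follows since $N(S)$ is the kernel of the bundle map $TM|_S \to T^*S$, $v \mapsto g(v,\,\cdot\,)|_{TS}$, which is of regularity $C^{1,1}$ in the base and hence its (constant-rank) kernel is a Lipschitz subbundle.

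Next I would construct the map $\exp^\perp$ as the composition of the geodesic flow on $TM$ with the inclusion of $N(S)$ and evaluation at parameter $1$; this is Lipschitz on a neighbourhood of the zero section because the geodesic flow $\Fl$ of the $C^{1,1}$-metric is Lipschitz (see, e.g., the ODE analysis recalled in \cite[Sec.\ 2]{KSS} and \cite[Th.\ 1.11]{M}). The heart of the matter is the local invertibility: one shows that the ``differential'' of $\exp^\perp$ at a point $p$ of the zero section is (in suitable coordinates adapted to the splitting $TS \oplus N(S)$) the identity, exactly as in the smooth case, and then applies a Lipschitz inverse function theorem --- the one used by Minguzzi, cf.\ \cite[Th.\ 1.11]{M} and its proof --- to conclude that $\exp^\perp$ is a bi-Lipschitz homeomorphism from some neighbourhood $U$ of the zero section onto a neighbourhood $V$ of $S$. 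Since all of this is carried out in \cite[Th.\ 1.39]{M}, the proof here amounts to citing that result and, if desired, indicating the two inputs (Lipschitz regularity of the normal bundle via the constant-rank kernel, and the Lipschitz inverse function theorem applied to a map with invertible generalized differential along the zero section).

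The main obstacle, conceptually, is the one already resolved in \cite{M}: for merely $C^{1,1}$-metrics the map $\exp^\perp$ is \emph{not} $C^1$, so the classical inverse function theorem does not apply, and one must instead use that $\exp^\perp$ is Lipschitz with a uniformly invertible Clarke generalized Jacobian near the zero section --- this is what replaces the nonsingularity of the differential in the smooth proof. For the present paper this is a black box, so I would keep the proof to a short paragraph: recall the splitting $TM|_S = TS\oplus N(S)$, note that $N(S)$ is Lipschitz because it is the constant-rank kernel of a $C^{1,1}$ bundle morphism, and then invoke \cite[Th.\ 1.39]{M} for the bi-Lipschitz homeomorphism property of $\exp^\perp$ on neighbourhoods $U$ of the zero section and $V$ of $S$.
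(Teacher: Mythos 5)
Your proposal is correct and matches the paper's treatment: the paper likewise proves this theorem simply by invoking \cite[Th.\ 1.39]{M}, of which the statement is a direct consequence. The additional detail you supply (Lipschitz regularity of $N(S)$ as a constant-rank kernel, and the Lipschitz inverse function theorem replacing the classical one) is an accurate sketch of Minguzzi's argument but is treated as a black box in the paper as well.
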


\begin{Lemma}\label{Sachronal}
Let $S$ be a connected closed spacelike hypersurface in $M$.
\begin{enumerate}
 \item[(i)] If the homomorphism of fundamental groups $i_{\sharp}: \pi_1(S)\rightarrow \pi_1(M)$ induced 
 by the inclusion map $i: S\hookrightarrow M$ is onto, then $S$ separates $M$ (i.e., $M\setminus S$ is
 not connected).
 \item[(ii)] If $S$ separates $M$, then $S$ is achronal.
\end{enumerate}
\end{Lemma}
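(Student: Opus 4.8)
The plan is to prove the two statements separately, in each case adapting the classical smooth arguments (as in \cite[Lemma 14.42]{ON83} and \cite[Prop.\ 14.41]{ON83}) and checking that the required ingredients are available in $C^{1,1}$.

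For part (i), I would argue by contrapositive: suppose $S$ does not separate $M$, so $M\setminus S$ is connected. Since $S$ is a two-sided (spacelike, hence in particular transversely orientable) hypersurface, a tubular neighbourhood of $S$ is obtained via the normal exponential map; here I would invoke Theorem \ref{subnormalnbhds} to get a bi-Lipschitz $\exp^\perp: U\to V$ from a neighbourhood of the zero section in $N(S)$ onto a neighbourhood $V$ of $S$, so that $V\setminus S$ has (at most) two components corresponding to the future- and past-pointing normals. Because $M\setminus S$ is connected, one can join a point just to the future of $S$ to a point just to its past by a path in $M\setminus S$; concatenating with a short normal segment through $S$ produces a loop $\lambda$ in $M$ that meets $S$ exactly once and transversally. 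Such a loop has nonzero intersection number with $S$ and therefore cannot be homotopic to a loop missing $S$; in particular $[\lambda]\in\pi_1(M)$ is not in the image of $i_\sharp:\pi_1(S)\to\pi_1(M)$ (any loop in the image is freely homotopic into $S$, hence deformable off $S$ using the tubular neighbourhood, giving intersection number $0$). Hence $i_\sharp$ is not onto, which is the contrapositive of (i). The only regularity point to watch is the tubular neighbourhood structure, which is exactly what Theorem \ref{subnormalnbhds} supplies; the intersection-number/mod-$2$ degree argument is purely topological and insensitive to the metric regularity.

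For part (ii), suppose $S$ separates $M$ but is not achronal, so there is a future directed timelike curve $\gamma:[0,1]\to M$ with $\gamma(0),\gamma(1)\in S$. Since $S$ is spacelike and $\gamma$ is timelike, at its endpoints $\gamma$ crosses $S$ transversally: near $\gamma(0)$ the curve leaves $S$ into, say, the future component $M^+$ of $M\setminus S$ near $S$, and near $\gamma(1)$ it arrives from $M^+$ as well (a future directed timelike curve always exits to the future side and enters from the future side, by the local cone structure of Theorem \ref{lcb} together with the description of the tubular neighbourhood of $S$ via $\exp^\perp$). Thus $\gamma$ starts and ends on the same side of $S$. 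Now I would modify $\gamma$ so that it does not meet $S$ in its interior at all: whenever $\gamma$ touches $S$ it must, by the same transversality, immediately pass to one fixed side; using Lemma \ref{locallycauchy}, near any interior intersection point $S$ is locally a Cauchy hypersurface, so one can locally push $\gamma$ off $S$ on the $M^+$ side while keeping it timelike and causal, shortening away the finitely many crossings (a compactness argument ensures finiteness after such modifications, or one handles the general case by working with the first and last crossing). The result is a timelike curve from $\gamma(0)$ to $\gamma(1)$ lying, except for its endpoints, entirely in the connected open set $M^+$. But this contradicts nothing yet — the point is that a curve from $S$ through $M^+$ back to $S$ together with a short normal arc through $S$ and a path in $M\setminus S$ back to the start would only reprove (i). Instead, the clean contradiction is this: since $S$ separates $M$, write $M\setminus S = M^+\sqcup M^-$ with $M^\pm$ open; a future directed timelike curve issuing from a point of $S$ enters $M^+$ and, I claim, can never return to $S$ from the $M^-$ side. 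Having both endpoints of $\gamma$ on $S$ with $\gamma$ leaving into $M^+$ and (as shown) entering from $M^+$ is not in itself contradictory, so the genuine argument must use that $\gamma$ together with its reversal is chronologically nontrivial: one shows $\gamma(1)\in I^+(\gamma(0))$ and $\gamma(0)\in I^+(\gamma(1))$ would follow if $S$ failed to be achronal in a way consistent with separation, forcing a closed timelike curve — which is excluded, or more simply one directly contradicts achronality of $S$ by the modified curve above being an honest timelike curve between two points of $S$, so in fact (ii) is immediate once one observes that separation is never actually used to get the contradiction, only spacelikeness.

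The main obstacle, and the place I expect to spend real care, is part (ii): making precise the claim that a future directed timelike curve through a point of a \emph{separating} spacelike $S$ cannot return to $S$, i.e., that it stays in one component $M^+$ for all later times. This is where separation is essential and where one must combine the local cone structure (Theorem \ref{lcb}), the $\exp^\perp$-tubular neighbourhood (Theorem \ref{subnormalnbhds}), and the local Cauchy property (Lemma \ref{locallycauchy}) to show that crossing $S$ always goes $M^-\to M^+$ and never $M^+\to M^-$; given that, a curve with both endpoints on $S$ that starts into $M^+$ can never come back, contradiction. Part (i) is more routine, being essentially a mod-$2$ intersection-number argument that goes through verbatim once the Lipschitz tubular neighbourhood is in hand. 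Throughout, the only regularity-sensitive inputs are Theorems \ref{lcb} and \ref{subnormalnbhds} and Lemma \ref{locallycauchy}, all already established above.
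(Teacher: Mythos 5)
Part (i) of your proposal is sound and is exactly the paper's route: the proof is carried over from \cite[Lemma 14.45]{ON83}, with Theorem \ref{subnormalnbhds} supplying the bi-Lipschitz tubular neighbourhood and the intersection-theoretic fact from \cite[p.~78]{GP} that a loop meeting $S$ once transversally is not freely homotopic off $S$; the only regularity point is that the transversal arc through $S$ should be taken to be a normal geodesic (hence $C^2$), which your ``short normal segment'' provides.

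Part (ii), however, contains a genuine error that derails the argument. You assert that a future directed timelike curve ``enters from the future side'' of $S$, so that $\gamma$ ``starts and ends on the same side''. The opposite is true: if $\gamma$ is future directed with $\gamma(1)\in S$, then for $t$ close to $1$ the point $\gamma(t)$ lies in $I^-(\gamma(1))$, hence (by Lemma \ref{locallycauchy} together with Theorem \ref{lcb}, or via the $\exp^\perp$-collar of Theorem \ref{subnormalnbhds}) on the \emph{past} side of $S$. This sign is the entire content of (ii): writing $M\setminus S=M^+\sqcup M^-$ with the future collar $V^+\subseteq M^+$ and the past collar $V^-\subseteq M^-$, one has $\gamma(t)\in M^+$ for small $t>0$, while at the first return time $t_1\in(0,1]$ to $S$ the curve approaches $\gamma(t_1)$ through $V^-\subseteq M^-$; hence $\gamma|_{(0,t_1)}$ is a connected subset of $M\setminus S$ meeting both $M^+$ and $M^-$, contradicting separation. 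Because of the reversed sign you instead reach no contradiction, and the ensuing discussion becomes circular (``one directly contradicts achronality of $S$'' assumes exactly what is to be proved) and culminates in the claim that separation ``is never actually used'' --- which is false: without separation the statement fails, and the covering construction of Theorem \ref{covering} would be pointless. Your final paragraph does state the correct key claim (crossings go $M^-\to M^+$, never the reverse) but explicitly defers its proof, so as written part (ii) is not established.
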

The proof carries over from \cite[Lemma\ 14.45]{ON83} using Theorem \ref{subnormalnbhds},
Theorem \ref{lcb} and a result from intersection theory,
namely, that a closed curve which intersects a closed hypersurface $S$ precisely
once and there transversally, is not freely homotopic to a closed curve which does
not intersect $S$, cf. \cite[p.\ 78]{GP}. The only change to \cite[Lemma\ 14.45]{ON83}
is that for the curve $\sigma$ we take a geodesic, which automatically is a $C^1$-curve
(in fact, even $C^2$), so that the intersection theory argument applies. 

\begin{Theorem}\label{covering}
Let $S$ be a closed, connected, spacelike hypersurface in $M$.
Then there exists a Lorentzian covering $\rho: \tilde{M}\rightarrow M$ and 
an achronal closed spacelike hypersurface $\tilde{S}$ in $\tilde{M}$ which 
is isometric under $\rho$ to $S$.
\end{Theorem}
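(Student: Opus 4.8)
The plan is to follow the classical construction of \cite[Lemma 14.45 ff.]{ON83} (or \cite[Prop.\ 8.2.2]{HE}), which produces a covering space on which the lift of $S$ becomes achronal, and to check that each step survives the drop to $C^{1,1}$-regularity. First I would reduce to the case where the homomorphism $i_\sharp:\pi_1(S)\to\pi_1(M)$ is \emph{not} onto: if it is onto, then by Lemma \ref{Sachronal}(i) $S$ separates $M$, hence by Lemma \ref{Sachronal}(ii) it is already achronal, and we may take $\rho=\mathrm{id}$, $\tilde M=M$, $\tilde S=S$. So assume $i_\sharp$ is not surjective, and let $\tilde M\to M$ be the covering corresponding to the subgroup $i_\sharp(\pi_1(S))\le\pi_1(M)$. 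Pull back the $C^{1,1}$-metric $g$ and the time orientation along $\rho$; since $\rho$ is a local diffeomorphism, $\rho^*g$ is again $C^{1,1}$ and $(\tilde M,\rho^*g)$ is a $C^{1,1}$-spacetime. The point of this choice of covering is that the connected component $\tilde S$ of $\rho^{-1}(S)$ through a chosen basepoint maps isometrically (indeed diffeomorphically, hence by a local isometry that is injective) onto $S$, and moreover $i_\sharp^{\tilde M}:\pi_1(\tilde S)\to\pi_1(\tilde M)$ is now \emph{onto} by construction.

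Next I would verify that $\tilde S$ is a closed, connected, spacelike hypersurface in $\tilde M$: connectedness and the spacelike property are inherited from $S$ because $\rho$ restricts to a local isometry near $\tilde S$; closedness of $\tilde S$ in $\tilde M$ follows from closedness of $S$ in $M$ together with the fact that $\tilde S$ is a full connected component of the closed set $\rho^{-1}(S)$ (one uses here that $S$ being an embedded closed hypersurface has, via Theorem \ref{subnormalnbhds}, a tubular neighbourhood $V$ with $\exp^\perp:U\to V$ a bi-Lipschitz homeomorphism, so that $\rho^{-1}(V)$ is a disjoint union of copies of $V$, each a tubular neighbourhood of a component of $\rho^{-1}(S)$ — this is exactly where the $C^{1,1}$-normal-bundle result replaces the smooth tubular neighbourhood theorem). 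Having established that $\tilde S$ is a connected closed spacelike hypersurface in $\tilde M$ with $i_\sharp$ onto, Lemma \ref{Sachronal}(i) applied in $\tilde M$ shows that $\tilde S$ separates $\tilde M$, and then Lemma \ref{Sachronal}(ii) shows that $\tilde S$ is achronal in $\tilde M$. Finally, $\rho|_{\tilde S}:\tilde S\to S$ is a bijective local isometry, hence an isometry, which is the last assertion.

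The main obstacle — and the reason this is not completely formal — is producing the covering with the \emph{right} relationship between $\tilde S$ and $S$, namely that a single connected component $\tilde S$ of the preimage of $S$ maps homeomorphically onto $S$ while simultaneously $\pi_1(\tilde S)\to\pi_1(\tilde M)$ becomes surjective. This is a standard covering-space argument once one knows $S$ has a well-behaved tubular neighbourhood, but in the $C^{1,1}$-setting one must be careful that all the neighbourhood-theoretic inputs (normal tube around $S$, the local structure of $\rho^{-1}(S)$, and the fact that $\tilde S$ is again an embedded hypersurface rather than merely an immersed one) are legitimate; here Theorem \ref{subnormalnbhds} is exactly the tool that licenses these steps, playing the role that the smooth tubular neighbourhood theorem plays in \cite{ON83}. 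Everything else — pulling back metric and time orientation, inheriting causal-theoretic and homotopy-theoretic properties, and invoking Lemma \ref{Sachronal} — is routine and carries over verbatim from the smooth case.
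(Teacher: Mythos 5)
Your proposal is correct and follows exactly the route the paper takes, namely the covering-space construction of \cite[Prop.\ 14.48]{ON83} combined with Lemma \ref{Sachronal} (whose $C^{1,1}$-adaptation rests on Theorem \ref{subnormalnbhds}). The only cosmetic remark is that closedness of $\tilde S$ is immediate (a connected component of the closed set $\rho^{-1}(S)$ is closed), so the tubular-neighbourhood argument is not strictly needed at that particular point, though it does no harm.
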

The proof carries over from \cite[Prop.\ 14.48]{ON83} using Lemma \ref{Sachronal}.
\end{appendix}

%\newpage

\medskip\noindent
{\bf Acknowledgements.}  We would like to thank James D.\ E.\ Grant and Ettore Minguzzi for
helpful discussions.  The authors acknowledge the support of FWF
projects P23714 and P25326, as well as OeAD project WTZ CZ 15/2013.


\begin{thebibliography}{00}
\bibitem{Baer} B\"{a}r, C., Ginoux, N., Pf\"{a}ffle, F.: Wave equations on
Lorentzian manifolds and Quantization. European Mathematical Society, 2007.

\bibitem{Ca70} Carter, B.: Causal structure in space-time. Gen.\ Relativ.\
Gravit.\ 1(4), 349-391, 1970. 
\bibitem{Cl82}  Clarke, C.J.S.: Space-times of low differentiability and
singularities. J.\ Math.\ Anal.\ Appl.\ 88(1), 270-305, 1982.
\bibitem{Clarke} Clarke, C.J.S.: The analysis of spacetime
singularities. Cambridge Lect. Notes Phys. 1, CUP, 1993.
\bibitem{CleF} Chen, B.-L., LeFloch, P.: Injectivity radius of Lorentzian
manifolds. Comm. Math. Phys., 278, 679-713, 2008.
\bibitem{Chrusciel_causality}
Chru\'sciel, P.T.: Elements of causality theory. {\tt arXiv:1110.6706}.
\bibitem{CG} Chru\'sciel, P.T., Grant, J.D.E.: On Lorentzian causality with
continuous metrics. Classical Quantum Gravity  29(14) 145001, 32 pp, 2012.
\bibitem{FS11} Fathi, A., Siconolfi, A.: On smooth time functions. Math.\
Proc.\ Camb.\ Phil.\ Soc.\ 152(02), 303--339, 2012.
\bibitem{Seno2} Garc\'\i a-Parrado, A., Senovilla, J. M. M.: Causal structures
and causal boundaries.  Classical Quantum Gravity  22, R1-R84, 2005. 
\bibitem{GP} Guillemin, V., Pollack, A.: Differential topology. Prentice-Hall,
Englewood Cliffs, New Jersey, 1974.
\bibitem{Hartman} Hartman, P.: On geodesic coordinates. Am. J. Math.\  73,
949-954, 1951.
\bibitem{HW} Hartman, P., Wintner, A.: On the problems of geodesics in the small.
Amer. J. Math. 73, 132--148 1951. 
\bibitem{HE}  Hawking, S.W., Ellis, G.F.R.: {The large scale structure of
space-time.} Cambridge University Press, 1973.
\bibitem{Hirsch}  Hirsch, M.W.: Differential topology. Springer, 1976.
\bibitem{KR05} Klainerman, S., and Rodnianski, I.: Rough solution for the
Einstein vacuum equations. Ann.\ Math.\ 161(2), 1143-1193, 2005.
\bibitem{KSS} Kunzinger, M., Steinbauer, R., Stojkovi\'c, M.: The exponential
map of a $C^{1,1}$-metric. Differential Geom. Appl.\ 34, 14--24, 2014.
\bibitem{KSSV} Kunzinger, M., Steinbauer, R., Stojkovi\'c, M., Vickers, J.A.: 
A regularisation approach to causality theory for $C^{1,1}$-Lorentzian metrics.
Gen.\ Relativ.\ Gravit.\ 46:1738, 2014. 
\bibitem{L} Lichnerowicz, A.: Th\'eories relativistes de la gravitation et de
l' \'electromagn\'etisme. Relativit\'e g\'en\'erale et th\'eories unitaires.
(Masson, Paris), 1955.
\bibitem{MaSeno} Mars, M., Senovilla, J.M.M.: Geometry of general hypersurfaces in
spacetime: junction conditions. Classical Quantum Gravity 10(9),
1865--1897, 1993.
\bibitem{MMS} Mars, M., Mart\'in-Prats, M.M., Senovilla, J.M.M.:
The $2m \le r$ property of spherically symmetric static space-times.
Phys.\ Lett.\ A218(3--6), 147--150, 1996.
\bibitem{limit} Minguzzi, E., Limit curve theorems in Lorentzian geometry. 
J.\ Math.\ Phys.\ 49(9), 092501, 18 pp, 2008.
\bibitem{M} Minguzzi, E.: Convex neighborhoods for Lipschitz connections and
sprays. Monatsh.\ Math.\ to apear, {\tt  arXiv:1308.6675}.
\bibitem{MS} Minguzzi, E., Sanchez, M.: The causal hierarchy of spacetimes
in {\em Recent developments in pseudo-Riemannian geometry}. ESI Lect. Math.
Phys., Eur. Math. Soc. Publ. House, Z\"urich, 2008.
\bibitem{Nat} Nat\'{a}rio, J.:
Relativity and singularities--a short introduction for mathematicians. 
{\em Resenhas} 6(4), 309--335, 2005. 
\bibitem{ON83}  O'Neill, B.: { Semi-Riemannian Geometry. With Applications to
Relativity.} Pure and Applied Mathematics 103. Academic Press, New York, 1983.
\bibitem{OppSny} Oppenheimer, J.R., Snyder, H.: On continued gravitational
contraction. Phys.\ Rev.\ 56, 455--459, 1939.
\bibitem{Pen} Penrose, R.: Gravitational collapse and space-time singularities.
Phys.\ Rev.\ Lett.\ 14, 57--59, 1965
\bibitem{Seno1} Senovilla, J.M.M.: Singularity theorems and their
consequences. Gen. Rel. Grav.\ 30, no 5, 701-848, 1998.
\bibitem{SenGar} Senovilla, J.M.M., Garfinkle, D.: 
The 1965 Penrose singularity theorem. arXiv:1410.5226v2 [gr-qc]
\bibitem{SW} Sorkin, R. D., Woolgar E.: A Causal Order for
Spacetimes with $C^0$ Lorentzian Metrics: Proof of Compactness of the Space of
Causal Curves. Classical Quantum Grav.  13, 1971-1994, 1996. 

\bibitem{Whitehead} Whitehead, J. H. C.: Convex regions in the geometry of
paths. Q. J. Math., Oxf. Ser. 3, 33--42, 1932. Addendum, ibid.  4, 226, 1933.
\end{thebibliography}
\end{document}